\numberwithin{figure}{section}
\theoremstyle{definition}
\newtheorem{theorem}{Theorem}[section]
\newtheorem{prop}[theorem]{Proposition}
\newtheorem{lemma}[theorem]{Lemma}
\newcommand{\sig}{\Sigma}
\newcommand{\R}{\mathbb{R}}
\newcommand{\ep}{\varepsilon}
\newcommand{\E}{\mathbb{E}}
\newcommand{\C}{\textnormal{Cov}}
\newcommand{\Va}{\textnormal{Var}}
\newcommand{\Po}{\textnormal{Poisson}}
\newcommand{\V}{\|}
\newcommand{\Fr}{\textnormal{Fr}}
\newcommand{\I}{\mathcal{I}}
\newcommand{\bitem}{\begin{itemize}}
\newcommand{\eitem}{\end{itemize}}
\newcommand{\benum}{\begin{enumerate}}
\newcommand{\eenum}{\end{enumerate}}
\newcommand{\beq}{\begin{equation}}
\newcommand{\eeq}{\end{equation}}
\newcommand{\beqs}{\begin{equation*}}
\newcommand{\eeqs}{\end{equation*}}
\newcommand*\samethanks[1][\value{footnote}]{\footnotemark[#1]}
\title{$e$PCA: High Dimensional Exponential Family PCA
}
\author{
	Lydia T. Liu\thanks{The first two authors contributed equally to this work.} \thanks{e-mail: 	ltliu@princeton.edu. 1980 Frist Center, Princeton University, Princeton, NJ, 08544}
	  \and
  Edgar Dobriban\samethanks \thanks{e-mail: dobriban@stanford.edu. Department of Statistics, Stanford University, Stanford, CA, 94305}
	\and
	Amit Singer  \thanks{e-mail: amits@math.princeton.edu. Department of Mathematics, and Program in Applied and Computational Mathematics, Princeton University, Princeton, NJ, 08544}
 }
\date{\today}
\begin{document}
\maketitle

\begin{abstract}
Many applications, such as photon-limited imaging and genomics, involve large datasets with noisy entries from exponential family distributions. It is of interest to estimate the covariance structure and principal components of the noiseless distribution. 
%In photon-limited imaging (e.g. XFEL) we want to estimate the covariance of the pixel intensities of 2-D images, where the pixels are low-intensity Poisson variables. In genomics we want to estimate population structure from biallelic---Binomial(2)---genetic markers such as Single Nucleotide Polymorphisms (SNPs). 
Principal Component Analysis (PCA), the standard method for this setting, can be inefficient when the noise is non-Gaussian.

We develop $e$PCA (exponential family PCA), a new methodology for PCA on exponential family distributions. $e$PCA can be used for dimensionality reduction and denoising of large data matrices.  $e$PCA involves the eigendecomposition of a new covariance matrix estimator, constructed in a simple and deterministic way using moment calculations, shrinkage, and random matrix theory. %$e$PCA is as fast as PCA and is suitable for datasets with multiple types of variables. 

We provide several theoretical justifications for our estimator, including the finite-sample convergence rate, and the Marchenko-Pastur law in high dimensions. %A key step of $e$PCA is \emph{homogenization}, a specific variable weighting. For SNPs, this recovers the widely used Hardy-Weinberg equilibrium (HWE) normalization. We show that homogenization improves the signal strength, providing justification for HWE normalization. 
$e$PCA compares favorably to PCA and various PCA alternatives for exponential families, in simulations as well as in XFEL and SNP data analysis.
An open-source implementation is  \href{http://github.com/lydiatliu/epca/}{available}.

\end{abstract}

%\tableofcontents

%\listoffigures

\section{Introduction}

In many applications we have large collections of data vectors with entries sampled from exponential families (such as Poisson or Binomial). 
This setting arises in image processing, computational biology, and natural language processing, among others.  It is often of interest to reduce the dimensionality and understand the structure of the data. 

The standard method for dimension reduction and denoising of large datasets is Principal Component Analysis (PCA) \cite[e.g.,][]{jolliffe2002principal,anderson1958introduction}. However, PCA is most naturally designed for Gaussian data, and there is no commonly agreed upon extension to non-Gaussian settings such as exponential families \cite[see. e.g.,][Sec. 14.4]{jolliffe2002principal}. While there are several proposals for extending PCA to non-Gaussian distributions, each of them has certain limitations, such as computational intractability for large datasets (see Sec. \ref{rel_work} for a detailed discussion).

We propose the new method {\bf $e$PCA} for PCA of data from exponential families. $e$PCA involves the eigendecomposition of a new covariance matrix estimator. Like usual PCA, it can be used for visualization and denoising of large data matrices.  Moreover, $e$PCA has several appealing properties. First, it is a computationally efficient deterministic algorithm that comprises a small number of basic linear algebraic operations, making it as fast as usual PCA and scalable to ``big'' datasets.  This is in contrast to typical likelihood approaches involving iterative methods such as alternating least squares, the EM algorithm etc., without convergence guarantees. Second, it is a flexible method suitable for datasets with multiple types of variables (such as Poisson, Binomial, and Negative Binomial). Third, it has substantial theoretical justification. We provide finite-sample convergence rates, and a precise high-dimensional analysis building on random matrix theory. Fourth, each step of $e$PCA is interpretable, which can be important to practitioners.

We perform extensive simulations with $e$PCA and show that in several metrics it outperforms usual PCA, PCA after standardization, and PCA alternatives for exponential families (see Sec. \ref{sec:denoising}). We apply $e$PCA to simulated X-ray Free Electron Laser (XFEL) data, where it leads to better denoising---visually and in MSE---than PCA. We also apply $e$PCA to a dataset from the Human Genome Diversity Project (HGDP) measuring Single Nucleotide Polymorphisms, where it leads to a clearer structure than PCA.

$e$PCA is publicly available in an open-source Matlab implementation from \url{github.com/lydiatliu/epca/}. That link also has software to reproduce our computational results.

To motivate our method, we now discuss a few potential application areas.  
 
\subsection{Denoising XFEL diffraction patterns}

\begin{figure}[h]
	\centering
	\begin{subfigure}{.25\textwidth}
		\centering
		\includegraphics[scale=0.2,trim = 20 30 20 20, clip]{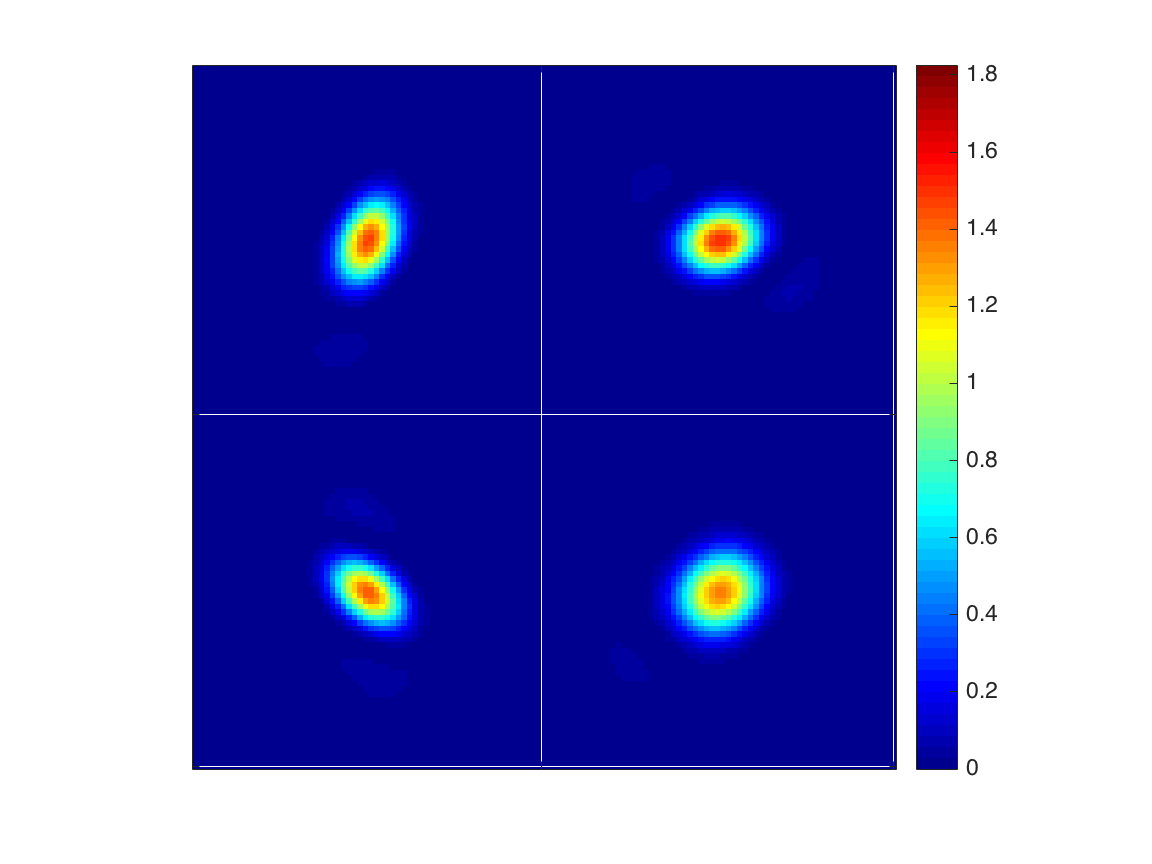}
		\caption{\scriptsize Clean intensity maps}
	\end{subfigure}%
	\begin{subfigure}{.25\textwidth}
		\centering
		\includegraphics[scale=0.2,trim =  20 30 20 20, clip]{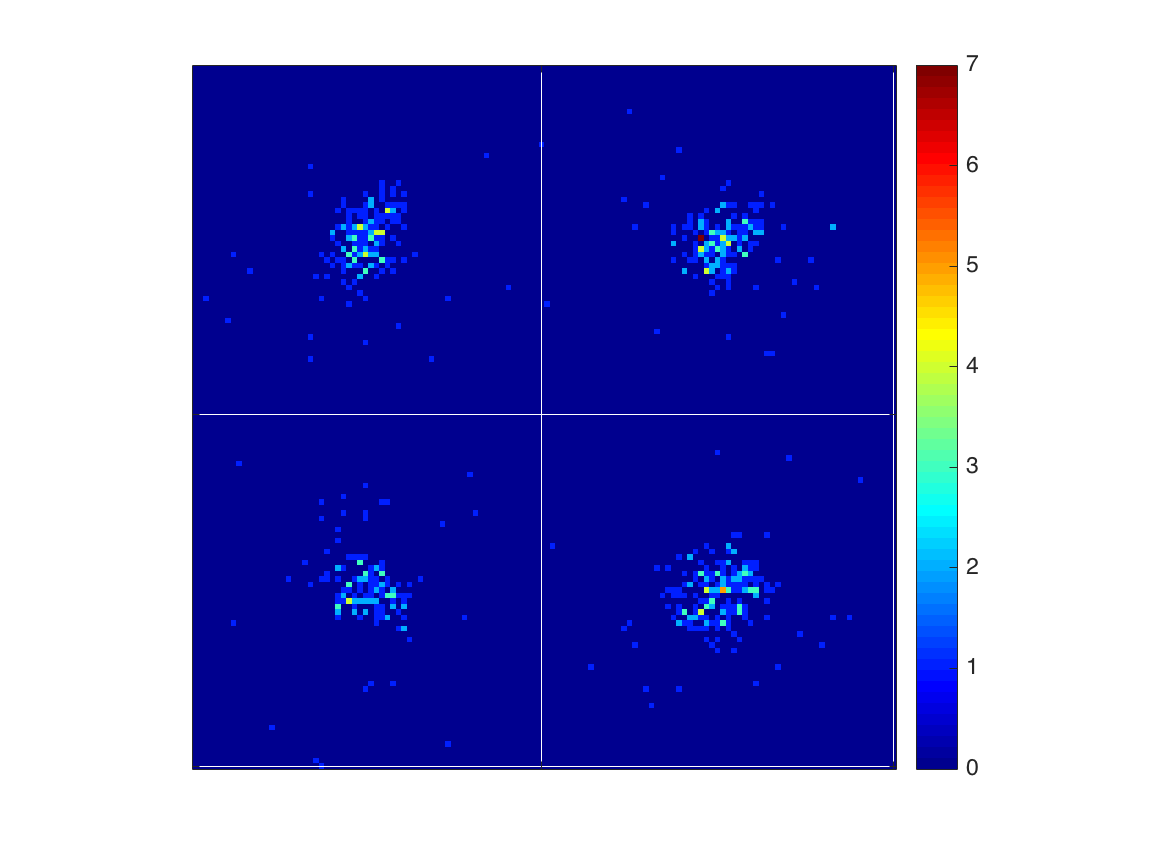}
		\caption{\scriptsize Noisy photon counts}
	\end{subfigure}%
	\begin{subfigure}{.25\textwidth}
		\centering
		\includegraphics[scale=0.2,trim =  20 30 20 20, clip]{{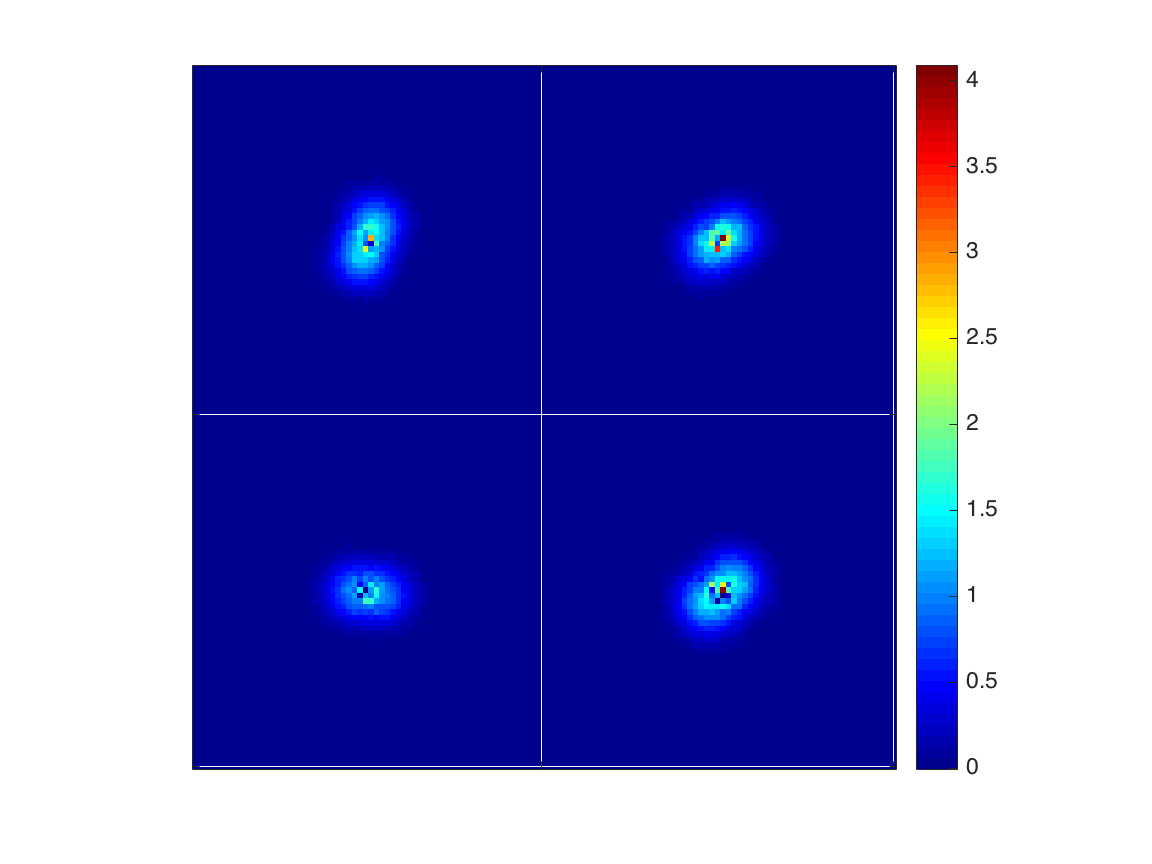}}
		\caption{\scriptsize Denoised (PCA)}
	\end{subfigure}%
	\begin{subfigure}{.25\textwidth}
		\centering
		\includegraphics[scale=0.2,trim =  20 30 20 20, clip]{{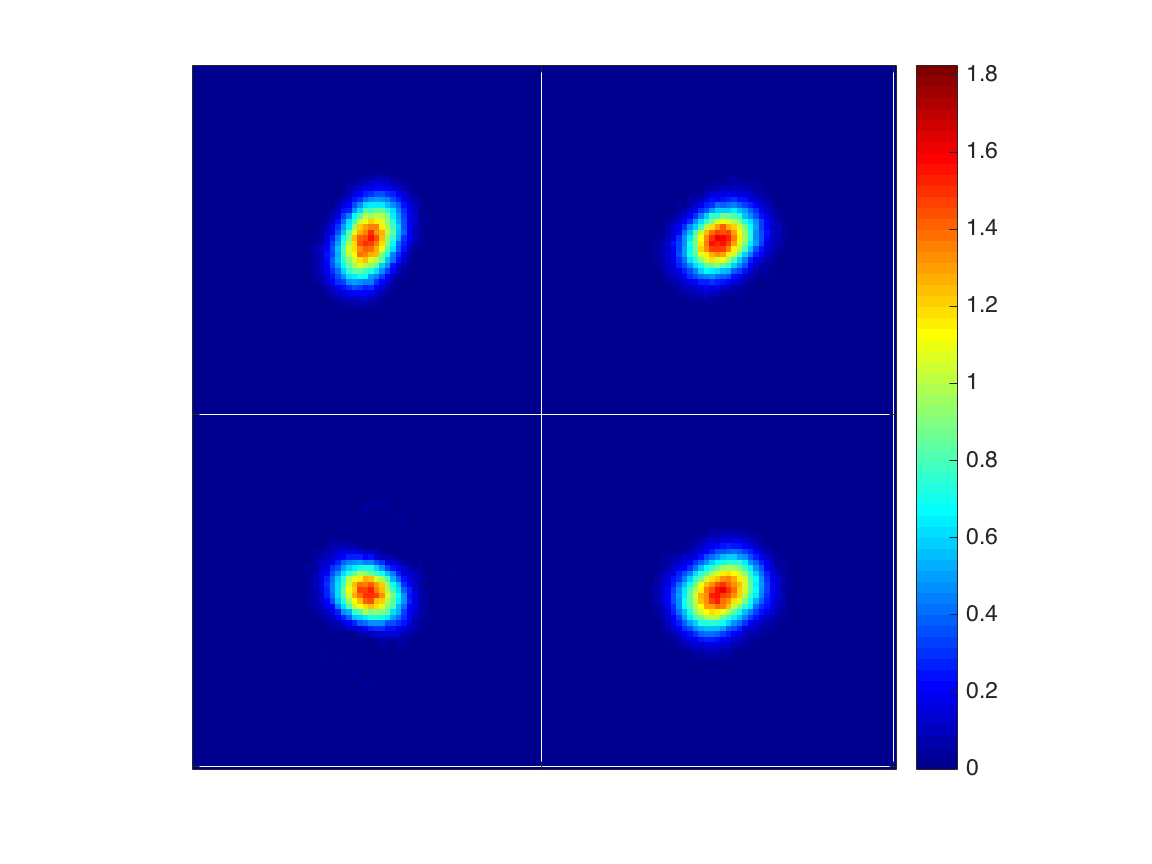}}
		\caption{\scriptsize Denoised ($e$PCA + EBLP)}
	\end{subfigure}
	\caption{XFEL diffraction pattern formation model and denoising. See section \ref{xfel} for details.}
	\label{fig:xfel}
\end{figure}

X-ray free electron lasers (XFEL) are an increasingly popular experimental technique to understand the three-dimensional structure of molecules \citep[e.g.,][]{favre2015xtop, maia2016trickle}.  XFEL imaging leads to two-dimensional diffraction patterns of single particles.  A key advantage of XFEL is that it uses extremely short femtosecond X-ray pulses, during which the molecule does not change its structure. As illustrated in Figure \ref{fig:xfel}, these images are very noisy due to the low number of photons, and the count-noise at each detector follows an approximately Poisson distribution. Further, we only capture one diffraction pattern per particle, and the particle orientations are unknown.

In order to reconstruct the 3-D structure of the particle, one approach is to use expectation-maximization (EM) \citep[e.g.,][]{scheres2007disentangling, duane2009}. Alternatively, assuming that the orientations are uniformly distributed over the special orthogonal group $SO(3)$, Kam's method \citep{kam1977determination, Kam1980} could provide a way to estimate the 3-D structure without optimizing likelihood via EM \citep[see e.g.,][]{Saldin2009}. 
A requirement is to estimate the covariance matrix of the noiseless 2-D images. This motivates us to develop the $e$PCA method for covariance estimation and PCA of Poisson data, and more generally for exponential families. To illustrate the improvement in covariance estimation of $e$PCA over PCA, in Figure \ref{fig:xfel} we show the result of denoising simulated XFEL using different estimated covariance matrices, where EBLP refers to the denoiser we develop in section \ref{denoise} for use in conjunction with $e$PCA.

\subsection{Genetic polymorphism data/SNPs}
In genomics, Single Nucleotide Polymorphism (SNP) data are the basis of thousands of Genome-Wide Association Studies (GWAS), which have recently led to hundreds of novel associations between common traits and genetic variants \citep[e.g.,][]{visscher2012five}.

SNP data can be represented as an $n \times p$ matrix $X$ with $X_{ij}$ equal to the number of minor alleles (0, 1 or 2) of the $j$-th SNP in the genome of the $i$-th individual. The number of individuals $n$ can be more than 10,000, while the number of SNPs can be as large as 2.5 million. Binomial models are natural for such data.  PCA is commonly used to infer population structure from SNP data, with a wide range of applications, including correcting for confounding in GWAS (see e.g., \cite{patterson2006population}). It is thus of interest to understand the proper way to estimate the covariance matrix and PCs.

Among other potential application areas, we point out RNA-sequencing, where negative binomial models are routinely in use \citep{anders2010differential}.

\subsection{Our contributions}
\label{contrib}
We now briefly summarize our contributions:
\benum
\item We propose the new method $e$PCA for PCA of exponential family data. $e$PCA is based on a new covariance estimator that we develop in a sequence of steps (Sec. \ref{covar_norotat} and \ref{cov_est}).  We start with a \emph{diagonal debiasing} of the sample covariance matrix (Sec \ref{diag_deb}), and characterize the finite-sample convergence rate to the population  covariance matrix (Sec. \ref{sec_rate_conv}). 

\item To improve performance for high-dimensional data, we propose a method of \emph{homogenization}, \emph{shrinkage}, and \emph{heterogenization} of the debiased covariance matrix (Sec. \ref{cov_est}). Homogenization is a form of variable weighting, different from the usual method of standardizing the features to have unit variance. We justify it by proving the standard Marchenko-Pastur law \citep{marchenko1967distribution} for the homogenized sample covariance matrix (Sec. \ref{mp_law}), and by showing that homogenization improves the signal strength (Sec. \ref{SNR}). An additional eigenvalue shrinkage step---that we call \emph{scaling}---is needed beyond  the well-understood shrinkage methods for homoskedastic Gaussian distributions \citep[e.g.,][]{Donoho2013}. This leads to our final covariance estimator, and $e$PCA consists of its the eigendecomposition.

We evaluate our covariance estimators in a simulation study, and show that they reduce the MSE for covariance, eigenvalue, and eigenvector estimation (Sec. \ref{bc_ev}). 

\item  For biallelic genetic markers such as Single Nucleotide Polymorphisms (SNPs), homogenization agrees with the widely used normalization assuming Hardy-Weinberg equilibrium (HWE) (Sec. \ref{white_hwe}). This provides perhaps the first theoretical justification for HWE normalization.

\item We apply $e$PCA to develop a new denoising method (Sec. \ref{denoise}), a form of empirical Best Linear Predictor (EBLP) from random effects models \citep[][Sec. 7.4]{searle2009variance}, where we use our covariance estimator to estimate parameters in the BLP denoiser. In areas such as electrical engineering and signal processing, the BLP is known as the ``Wiener filter'' or the ``Linear Minimum Mean Squared Estimator (LMMSE)''  \citep[e.g.,][Ch. 12]{kay1993fundamentals}.

\item We apply $e$PCA denoising to simulated XFEL data where it leads to better denoising than PCA (Sec. \ref{xfel}) 
We also apply $e$PCA to a SNP dataset from the Human Genome Diversity Project (HGDP) \citep{li2008worldwide}, where it leads to a clearer structure in the PC scores than PCA (Sec. \ref{hgdp}).
\eenum
 
\section{Related work}
\label{rel_work}

To give context for our method, we review related work. The reader intersted in the methodology can skip directly to Section \ref{covar_norotat}. 
We refer to \cite{jolliffe2002principal} for a detailed overview of PCA methodology, to \cite{anderson1958introduction} for a more general overview of multivariate statistical analysis including PCA, and to \cite{yao2015large} for discussions of high-dimensional statistics, random matrix theory and PCA. 

\subsection{Standardization and weighting in PCA}

In applying PCA, a key concern is whether or not to standardize the variables \cite[e.g.,][Sec. 2.3]{jolliffe2002principal}. Standardization ensures that results for different sets of random variables are more comparable, and also that PCs are less dominated by individual variables with large variances. Not standardizing makes statistical inference more convenient. In exploratory analyses, however, standardization is usually preferred. In our setting, the homogenization method (Sec. \ref{sec_white}) has several advantages over standardization. 

A more general class of methods is \emph{weighted PCA}, where PCA is applied to rescaled random variables $w_j X(j)$, for some $w_j>0$ \citep[][Sec. 2.3., Sec. 14.2.]{jolliffe2002principal} In general, choosing the weights can be nontrivial. %Some of the suggestions beyond dividing by standard errors include dividing by the range or mean of random variables \citep{gower1966some}. 
Our homogenization step of $e$PCA  (Sec. \ref{sec_white}) is a particular weighting method, justified for data from exponential families. In addition to proposing it, we provide several theoretical justifications: the standard Marchenko-Pastur law, and the improvements in signal to noise ratio (SNR) (see Sec. \ref{sec_white}).

\subsection{PCA in non-Gaussian distributions, GLLVMs}

There have been several approaches suggested for extending PCA to non-Gaussian distributions,   see. e.g., \cite{jolliffe2002principal}, Sec. 14.4. One possibility is to use robust estimates of the covariance matrix \cite[see][Sec. 14.4, for references]{jolliffe2002principal}. %Another approach is to change the quadratic objective function maximized by PCA into one that is suitable for non-Gaussian distributions, such as predictive power \citep[e.g.,][]{qian1994principal}. %Yet another class of methods is Independent Component Analysis (ICA), which looks for derived variables that are not just uncorrelated, but also independent \citep{hyvarinen2004independent}. While ICA works for non-Gaussian distributions, it is not aimed at estimating the low-dimensional subspace of the signals, or estimating the underlying covariance matrix.
Another approach assumes that the natural parameter lies in a low dimensional space  \citep{Collins2001}, and then attempts to maximize the log-likelihood. This leads to a non-convex optimization problem for which an alternating maximization method is proposed, without global convergence guarantees. More recently, \citet{Udell2014, udell2016} described a similar generalization of PCA, while \citet{li2010simple} proposed another likelihood-based method, both without global convergence guarantees. Scalable methods include \cite{josse2016bootstrap}, albeit without precise performance guarantees in high dimensions. 

Within factor analysis, generalized linear latent variable models (GLLVMs) model the relationship of an observed variable from a general distribution with unobserved latent variables \citep{knott1999latent, huber2004estimation}. These flexible likelihood-based methods  enable careful modelling and statistical inference for parameters of interest in low-dimensional settings.  However, estimation and inference are computationally challenging, and published examples have at most 10-20 dimensions \citep{huber2004estimation}. In contrast our algorithm is as fast as PCA and we avoid any optimization problems. In addition, we have some understanding of the performance in high dimensions, by connecting to random matrix theory.

%In contrast, our approach avoids high-dimensional optimization problems and provides precise results about the performance of the method by building on random matrix theory.

%In our settings, it is often more reasonable to model the mean parameter of the exponential family---e.g., the clean images in image analysis---as having a low rank structure. For instance, in an XFEL experiment, the images can be thought of as ``low-complexity'' perturbations of the mean image according to the state of the molecule. For this reason, our methods are not directly comparable in theory or in simulations. 

%\subsection{Factor analysis and Generalized Linear Latent Variable Models}

\subsection{Denoising and covariance estimation by singular value shrinkage}

Recently, results from random matrix theory have been used for studying covariance estimation and PCA for Gaussian and rotationally invariant data \citep[e.g.,][]{shabalin2013reconstruction, Donoho2013,nadakuditi2014optshrink}. While the qualitative insights they identify---e.g., the improvements due to eigenvalue shrinkage---are relevant to our setting, the specific results and methods do not apply directly.

The recent work of \cite{bigot2016generalized}  develops a generalized Stein's Unbiased Risk Estimation (SURE) approach for singular value shrinkage denoising of low-rank matrices in exponential families.  However, their shrinkage formulas become numerically intractable for Frobenius norm beyond Gaussian errors, and they instead introduce a heuristic algorithm. Their work is geared towards higher signal-to-noise ratio settings. % in their simulations. %Moreover in their Poisson-noise simulations, they work in a regime where optimal shrinkage reduces to no shrinkage (e.g., their Fig. 5a), possibly because the signal is very strong. In contrast, we work in a setting of much weaker signal-to-noise ratior (SNR), where the necessary singular value shrinkage is substantial.

\subsection{Image processing and denoising}

There are many approaches to denoising in image and signal processing, the majority designed for Gaussian noise \cite[see e.g.,][]{starck2010sparse}.  Most classical methods are designed for ``single-image denoising'', and do not share information across multiple images.  Our setting is different, because we have many very noisy samples---e.g., XFEL images. %Instead, these prior works preprocess raw images by extracting small patches from a single image, and consider those patches as their data vectors. In our experiments, we do not use such preprocessing, because we also want to estimate covariance of the image space, rather than the patch space, especially in the context of 3D structure reconstruction from XFEL patterns. %Some of the key approaches exploit sparsity, either in a fixed basis or dictionary---such as Fourier or wavelet---or in a basis that is estimated (or learned) from the data. 

\cite{starck2010sparse} Sec. 6.5. provides an overview of the classical methods for Poisson noise. Popular approaches reduce to the Gaussian case by a wavelet transform such as a Haar transform \citep{nowak1999wavelet}; by adaptive wavelet shrinkage; or by approximate variance stabilization such as the Anscombe transform. The latter is known to work well for Poisson signals with large parameters, due to approximate normality. However, the normal approximation breaks down for the Poisson with a small parameter, such as photon-limited XFEL \cite[see e.g.,][Sec. 6.6]{starck2010sparse}.

Other methods are based on singular value thresholding (SVT), with various approaches to handling non-Gaussian noise. For example, \citet{Furnival2016} performs SVT of the data matrix of image time-series in low noise, picking the regularization parameter to minimize the Poisson-Gaussian Unbiased Risk Estimator. We instead homogenize the data and propose a second-moment based denoising method.  Alternatively, \citet{Cao2014} frames denoising as a regularized maximum likelihood problem and uses SVT to optimize an approximation of the Poisson likelihood. Our approach avoids nonconvex likelihood optimization problems. %Another approach applies the alternating minimization-based PCA extension proposed in \citet{Collins2001} with refinements to exploit self-similarity in natural images \citep{salmon2014poisson}.  

%There are a variety of additional methods, including Bayesian ones. For instance, \citet{Sonnleitner2016} proposed to compute the Bayesian posterior distribution of the intensity at each pixel using local photon counts and carefully chosen priors, but this local smoothing method does not scale to photon-limited settings where the average photon count is much less than one.

\section{Covariance estimation}
\label{covar_norotat} 

$e$PCA is the eigendecomposition of a new covariance matrix estimator. To develop this estimator, we start with the sample covariance matrix and propose a sequence of improvements (see Table \ref{cov_tab} and below).

\begin{table}[h] 
\footnotesize
\centering
\caption{Covariance estimators}

\renewcommand{\arraystretch}{1.4}
\begin{tabular}{|l|l|l|l|l|l} 
\cline{1-5}
Notation& Name&Formula& Defined in & Motivation &  \\ \cline{1-5}
$S$&Sample covariance&$S =n^{-1}\sum_{i=1}^{n} (Y_i-\bar Y) (Y_i-\bar Y)^\top$ & \eqref{sam_cov} & -&  \\
$S_d$& Diagonal debiasing& $S_d = S - \diag[V(\bar Y)]$ &\eqref{diag_deb_est}& Hierarchy &  \\
$S_{h}$  & Homogenization& $S_{h} = D_n^{-\sfrac{1}{2}}S_d D_n^{-\sfrac{1}{2}}$ & \eqref{whit_est}    & Heteroskedasticity & \\
$S_{h,\eta}$& Shrinkage& $S_{h,\eta} = \eta(S_{h})$&\eqref{shr_est}&High dimensionality&  \\ 
$S_{he}$& Heterogenization& $S_{he} = D_n^{\sfrac{1}{2}}S_{h,\eta}D_n^{\sfrac{1}{2}}$&\eqref{rec_est}&Heteroskedasticity&  \\  
$S_{s}$& Scaling & $S_{s}=\sum \hat \alpha_i \hat v_i \hat v_i^\top$, where $S_{he} = \sum \hat v_i \hat v_i^\top$ &\eqref{scaled_cov}&Heteroskedasticity&  \\ 
\cline{1-5}
\end{tabular}
\label{cov_tab}
\end{table}

\begin{algorithm}[t]
	\SetAlgoLined
	\KwIn{Data $Y =[Y_1,\cdots,Y_n]^\top \in \R^{n\times p}$; Desired rank $r \le p$; \\ Mean-variance map $V$ of exponential family, as defined in (\ref{eqn:V}).}
	\KwOut{Covariance estimator $S_s \in \R^{p\times p}$ of noiseless vectors;
		$e$PCA: eigendecomposition of $S_s$.}
	
	Compute the sample mean $\bar Y = n^{-1}\sum_{i=1}^n Y_i$\\
	Compute the sample covariance matrix $S = n^{-1}\sum_{i=1}^n (Y_i-\bar Y)(Y_i-\bar Y)^\top$\\
	Compute the variance estimates $D_n  = \diag[V(\bar Y)]$\\
	Homogenize and diagonally debias the covariance matrix $\smash{S_{h}  = D_n^{-\sfrac{1}{2}} S D_n^{-\sfrac{1}{2}}} - I_p$ \\
	Compute the eigendecomposition $\smash{S_{h} = \hat W \Lambda \hat W ^\top}$\\
	Shrink the eigenvalues $S_{h,\eta} = \hat W \eta(\Lambda_r)\hat W^\top = \sum_{i=1}^r \hat \ell_i \hat w_i \hat w_i^\top$ of top $r$ eigenvalues $\Lambda_r=\diag(\lambda_1,\ldots,\lambda_r)$. %Diagonal debiasing can be performed in the same step by subtracting 1 from each eigenvalue.
	\\
	Compute the scaling coefficients $\hat \alpha_i = [1 - s^2(\hat \ell _i; \gamma) \tau_i]/  c^2(\hat \ell _i; \gamma)$ (as in \eqref{alpha_def})\\
	Heterogenize the covariance matrix $S_{he}= D_n^{\sfrac{1}{2}}S_{h,\eta} D_n^{\sfrac{1}{2}}$\\
	Scale the covariance matrix $S_s=\sum \hat \alpha_i \hat v_i \hat v_i^\top$, where the eigendecomposition of $S_{he}$ is $S_{he}=\sum \hat v_i \hat v_i^\top$\\
	\caption{Covariance matrix estimation and $e$PCA}\label{cov_est_alg}
\end{algorithm}

We will work with observations $Y$ from the canonical one-parameter exponential family with density 

\beq
\label{ef_dist}
p_\theta(y)  = \exp[\theta y - A(\theta)]
\eeq

\noindent with respect to a $\sigma$-finite measure $\nu$ on $\mathbb{R}$ (see e.g., \cite{lehmann2005testing}).
 Here $\theta \in \R$ is the natural parameter of the family and $A(\theta)  = \log \int  \exp(\theta y) d\nu(y) $ is the log-partition function. We assume the distribution is well-defined for all $\theta$ in an open set. 
 The mean and variance of $Y$ can be expressed as $\E Y = A'(\theta)$ and $\Va[Y] = A''(\theta)$, where we denote $g'(\theta)= dg(\theta)/d\theta$. 
 
Our running example will be the Poisson distribution $y\sim \Po(x)$. Here the carrier measure is the discrete measure with density $\nu(dy) = 1/y!$ with respect to the counting measure on the non-negative integers, while $\theta = \log(x)$ and $A(\theta) = \exp(\theta)$.

\subsection{The observation model}
\label{obs_mod}

Let $Y\in \R^p$ be a random vector with some unknown distribution. We observe $n$ i.i.d.~noisy data vectors $Y_i \sim Y$. In the XFEL application, $Y$ is the noisy image with the pixels as coordinates. We consider the following hierarchical model for $Y$. First, a latent vector---or hyperparameter---$\theta \in \R^p$ is drawn from a probability distribution $D$ with mean $\mu_{\theta}$ and covariance matrix $\Sigma_{\theta}$. Conditional on $\theta$, the coordinates of $Y = (Y(1),\ldots,Y(p))^\top$ are drawn independently from an exponential family $Y(j) \sim p_{\theta(j)}(y)$ defined in \eqref{ef_dist}. Formally, denoting by $\dot{\sim}$ the mean and the covariance of a random vector: 
\begin{align*}
 \theta \, & \dot{\sim}\,  (\mu_{\theta}, \Sigma_{\theta}) \\
Y(j)|\theta(j)  & \sim p_{\theta(j)}(y),  \,\,\,\, Y = (Y(1),\ldots,Y(p))^\top. 
\end{align*}

 Therefore, the mean of $Y$ conditional on $\theta$ is 
$$X := \E(Y|\theta) = (A'(\theta(1)),\ldots,A'(\theta(p)))^\top = A'(\theta),$$ 
so the noisy data vector $Y$ can be expressed as $Y= A'(\theta)+ \tilde\ep$, with $\E(\tilde\ep|\theta)=0$, while the marginal mean of $Y$ is $\E Y =\E A'(\theta)$. Thus one can think of $Y$ as a noisy realization of the clean vector $X=A'(\theta)$. However, the latent vector $\theta$ is also random and varies from sample to sample. In the XFEL application, $A'(\theta)$ are the unobserved noiseless images. %In the RNA-seq applications, $A'(\theta)$ are the mean read counts of all genes, for a given cell. 

The assumption of conditional independence given $\theta$ may sound restrictive.  It means that all latent effects that induce correlations do so through $\theta$ and not through some other mechanism. However, we can always capture some of the latent correlations in the ``mean'' structure by increasing the number of PCs. In addition, similar conditional independence is also common in empirical work such as  bulk RNA-Seq analysis \citep[e.g.,][]{anders2010differential}.

It is important that we model the \emph{mean} $A'(\theta)$ of the exponential family as our clean signal, as opposed to the \emph{natural parameter} $\theta$. One reason is that this enables a simple and deterministic algorithm, in contrast to the typical likelihood methods. Another reason is that in many applications, it is reasonable to assume that the means of noisy signals have a ``low-complexity'' structure, such as lying on a low-dimensional linear subspace. For instance, \cite{Basri2003} found that the images of a single face under different lighting conditions inhabit an approximately 9-dimensional linear space. As mentioned in Sec. \ref{rel_work}, this is a key modelling assumption distinguishing our approach from prior work like \cite{Collins2001}.

We thus have $Y = A'(\theta) + \diag[A''(\theta)]^{\sfrac{1}{2}}\ep$, where the coordinates of $\ep$ are conditionally independent and standardized given $\theta$. Therefore, the covariance of $Y$ conditional on $\theta$ is 
$$\C[Y|\theta]=\diag[A''(\theta(1)),\ldots,A''(\theta(p))] = \diag[A''(\theta)].$$  
 The marginal covariance of $Y$ is given by the law of total covariance: 
\beq
\label{cov}
\C[Y]  = \C[\E(Y|\theta)]  + \E [\C[Y|\theta]] =\C[A'(\theta)] + \E \diag[A''(\theta)].
\eeq

For Poisson observations $Y\sim \Po_p(X)$, where $X \in \R^p$ is random, we can write $Y = X + \diag(X)^{\sfrac{1}{2}}\ep$. The natural parameter is the vector $\theta$ with $\theta(j) = \log X(j)$. Since $A'(\theta(j))  = A''(\theta(j))= \exp(\theta(j)) = X(j)$, we see $\E Y = \E X $, and 
$\C[Y]  =  \C[X]+\E \diag[X].$

\subsection{Diagonal debiasing}
\label{diag_deb}

We will propose several estimators of increasing sophistication to estimate the covariance matrix $\Sigma_x =  \C[A'(\theta)]$ of the noiseless vectors $X_i=A'(\theta_i)$ (see Table \ref{cov_tab}). Clearly, due to the covariance equation \eqref{cov}, the sample covariance matrix of $Y_i$ is biased for estimating the diagonal elements of $\Sigma_x$. Fortunately, this bias can be corrected. Indeed, we only need to subtract the noise variances $\E A''(\theta(j))$. We know that $\E Y(j) = \E A'(\theta(j))$, so it is natural to define associated estimators via the \emph{variance map} of the exponential family, which takes a mean parameter $A'(\theta)$ into the associated variance parameter $A''(\theta)$. Formally, 
\begin{equation} \label{eqn:V}
V(m) = A''[(A')^{-1}(m)].
\end{equation}
If the distribution of $Y$ is non-degenerate, $A''(\theta) = \Va_{\theta}(Y)>0$, so $A'$ is increasing and invertible, and the variance map is well-defined. 

We define the sample covariance estimator 
\beq
\label{sam_cov}
S = n^{-1}\sum_{i=1}^{n} (Y_i -\bar Y) (Y_i -\bar Y)^\top,
\eeq
where $\bar Y = n^{-1} \sum_{i=1}^{n}Y_i $ is the sample mean. We estimate $\E A''(\theta)$ by $V(\bar Y)$, and define the \emph{diagonally debiased} covariance estimator 
\beq
\label{diag_deb_est}
S_d = S - \diag[V(\bar Y)].
\eeq

Continuing with our Poisson example, $A'(\theta)=A''(\theta) = \exp(\theta)$, so $V(m) = m$, and $S_d = S - \diag[\bar Y]$. In this example the estimator is unbiased, because $V$ is linear. When $V$ is non-linear, the estimator can become slightly biased.  

\subsubsection{The rate of convergence}
\label{sec_rate_conv}
Our first theoretical result characterizes the finite-sample convergence rate of the diagonally debiased covariance estimator $S_d$, for any fixed $n,p$. This estimator is not a sample covariance matrix, which is inconsistent in our case when $n\to\infty$ and $p$ is fixed. Thus it is necessary to study its convergence rate from first principles.

For this we need to make a few technical assumptions. First, we assume that the variance map $V$ is Lipschitz with constant $L$. It is easy to check that this is true for the Gaussian and Poisson distributions. We also assume that the coordinates of the random vector $\theta$ are almost surely bounded, $\|\theta\|_{\infty} \le B$. Since $A'$ is continuous and invertible, this is equivalent to the boundedness of $A'(\theta)$. This is reasonable in the areas that we are interested in---XFEL imaging does not have infinite energy, so we have an upper bound on the intensity of pixels. Finally we assume that $m_4 = \max_i \E[Y(i)^4] \ge C$ for some universal constant $C>0$. This is reasonable, as it states that at least some entries of the random vector have non-vanishing magnitude. 

Let $\lesssim$ denote inequality up to constants not depending on $n$ and $p$. Let $\V \cdot \V_{\Fr}$ be the Frobenius norm and $\V \cdot \V$ be the operator norm. Our result, proved in Sec. \ref{pf_rate_low_dim}, is

\begin{theorem}[Rate of convergence of debiased covariance estimator] The diagonally debiased covariance estimator $S_d $ has the following rates of convergence. In Frobenius norm, with $\mu:=\E Y = \E X = \E A'(\theta)$: 
	$$\E[\V S_d - \sig_x\V_{\Fr}] \lesssim \sqrt{\frac{p}{n}} 
	\left[\sqrt{p} \cdot m_4 
+\| \mu \| \right].$$
In operator norm, with the dimensional constant $C(p) =4(1+2 \lceil \log p\rceil)$: 
	$$\E[\V S_d - \sig_x\V] \lesssim  \sqrt{C(p)}  \frac{(\E\|Y\|^4)^{\sfrac{1}{2}} + (\log n)^3(\log p)^2}{\sqrt{n}}
	+ \sqrt{\frac{p}{n}} \left[
1 + \sqrt{\frac{p}{n}}
+\| \mu \| \right].$$
	\label{rate_low_dim}
\end{theorem}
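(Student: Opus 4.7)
The plan is to decompose $S_d - \sig_x$ via the law of total covariance (already invoked in \eqref{cov}) as
\[
S_d - \sig_x = (S - \C[Y]) - (\diag[V(\bar Y)] - \E\diag[A''(\theta)]),
\]
and bound the two pieces separately. The first is a sample-covariance deviation; the second is the error from the plug-in diagonal debiasing. The Frobenius bound follows from elementary second-moment estimates, while the operator bound requires matrix concentration plus a truncation.

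For the Frobenius bound on $\V S - \C[Y]\V_{\Fr}$, I would expand $S = n^{-1}\sum_i Y_i Y_i^\top - \bar Y\bar Y^\top$ and $\C[Y] = \E YY^\top - \mu\mu^\top$, splitting into the centered average $n^{-1}\sum_i(Y_iY_i^\top - \E YY^\top)$ and the correction $\bar Y\bar Y^\top - \mu\mu^\top$. The centered average has expected squared Frobenius norm bounded by $n^{-1}\E\V Y\V^4$ by iid variance additivity; Cauchy--Schwarz gives $\V Y\V^4 \le p\sum_j Y(j)^4$, hence $\E\V Y\V^4 \le p^2 m_4$, producing a Frobenius contribution of order $p\sqrt{m_4/n}$. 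For the correction, I would expand $\bar Y\bar Y^\top - \mu\mu^\top = (\bar Y - \mu)(\bar Y - \mu)^\top + \mu(\bar Y - \mu)^\top + (\bar Y - \mu)\mu^\top$ and use $\E\V\bar Y - \mu\V^2 = n^{-1}\sum_j\Va Y(j) \lesssim p\sqrt{m_4}/n$ to produce the $\sqrt{p/n}\V\mu\V$ cross term. For the diagonal debiasing piece, the Lipschitz hypothesis gives $|V(\bar Y(j)) - V(\mu(j))| \le L|\bar Y(j) - \mu(j)|$ coordinatewise, and the residual deterministic Jensen-type gap $V(\mu) - \E A''(\theta)$ is controlled by combining Lipschitz with the uniform bound $\V\theta\V_\infty \le B$; both contributions are absorbed by the above bounds.

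For the operator norm, the same decomposition works, but I would replace the Frobenius variance bound on the centered average by a matrix Bernstein-type inequality applied to $n^{-1}\sum_i(Y_iY_i^\top - \E YY^\top)$. Since $Y_iY_i^\top$ is unbounded in general, I would truncate $Y_i$ at a scale growing polylogarithmically in $n$ and $p$, apply matrix Bernstein (or a noncommutative Khintchine/Rosenthal-type bound) to the truncated portion, and control the residual by a moment tail estimate. This truncation produces the $(\log n)^3(\log p)^2$ factor and the dimensional constant $C(p) = 4(1+2\lceil\log p\rceil)$ characteristic of matrix concentration in dimension $p$. The centering correction and the diagonal debiasing piece contribute the lower-order $\sqrt{p/n}[1 + \sqrt{p/n} + \V\mu\V]$ term, bounded in operator norm by their Frobenius analogues via $\V\cdot\V \le \V\cdot\V_{\Fr}$.

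The hardest step will be the operator-norm concentration of the centered sum for unbounded $Y$: matrix Bernstein needs a uniform bound, so the truncation level must be tuned so that the truncated and tail pieces are of the same (small) order, and the stated logarithmic inflation is exactly the price of this tuning. A secondary technical subtlety is that when $V$ is nonlinear the plug-in $V(\bar Y)$ carries a Jensen-type asymptotic bias for $\E V(X) = \E A''(\theta)$; absorbing this bias into the stated rate is precisely where the Lipschitz hypothesis on $V$ together with $\V\theta\V_\infty \le B$ enter the argument.
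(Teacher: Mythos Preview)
Your decomposition and the Frobenius-norm argument essentially match the paper's proof: the same three-term split (second-moment deviation, centering correction $\bar Y\bar Y^\top-\mu\mu^\top$, diagonal-debiasing error), the same $\E\|Y\|^4\le p^2 m_4$ bound via Cauchy--Schwarz, and the same Lipschitz control of $V(\bar Y)-\E A''(\theta)$ (the paper makes the Jensen-gap step explicit by writing $A''(\theta(i))=V(\E[\bar Y(i)\mid\theta(i)])$ and applying Lipschitz twice).

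For the operator norm you take a genuinely different route. The paper does \emph{not} truncate and invoke matrix Bernstein; instead it applies Tropp's matrix Rosenthal--Khintchine inequality (Theorem~5.1 of \cite{Tropp2015}), which for unbounded summands gives
\[
\bigl(\E\|V_n\|^2\bigr)^{1/2}\le \sqrt{C(p)}\,\|\E V_n^2\|^{1/2}+\sqrt{C(p)}\,\bigl(\E\max_i\|T_i\|^2\bigr)^{1/2},
\]
and then bounds $\E\max_i\|T_i\|^2=\E\max_i\|Y_i\|^4/n^2$ by a scalar Rosenthal-type moment inequality from \cite{Boucheron2005} applied to $Q=\sum_j Y(j)^2$ with exponent $m=\log n$. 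The constant $C(p)=4(1+2\lceil\log p\rceil)$ is exactly Tropp's dimensional factor, and the $(\log n)^3(\log p)^2$ term comes from the Boucheron bound at $m=\log n$, not from any truncation tuning. Your truncation-plus-Bernstein approach is a legitimate alternative and would yield a qualitatively similar result, but the polylog profile you obtain will in general differ from the one stated; so your attribution of these specific factors to ``the price of truncation'' is off. The advantage of the paper's route is that it avoids the truncation bookkeeping entirely; the advantage of yours is that it relies only on the more widely known matrix Bernstein inequality.
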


The two error rates are both of interest, and complement each other. % Neither error rate implies the other. The proof for the Frobenius norm rate only uses 4-th moment bounds for $Y(i)$, while the operator norm rate uses higher moments---of order $\log(p)$---which relies on more of the strength of the exponential family.  
The Frobenius norm rate captures the deviation across all entries of the covariance matrix.
%Since the squared Frobenius norm is approximately a sum of $p^2$ squared deviations of sample means from population means, each of which is $O(n^{-1})$, we expect a rate of $\smash{\sfrac{p}{n^{\sfrac{1}{2}}}}$ in Frobenius norm. If the bound $m_4$ on the fourth moment is of order one, that is exactly what we obtain. It is not completely clear that our rates are optimal, but this suggests so for the Frobenius norm.
%In contrast the error rate in operator norm captures the deviation in the extreme eigenvalues of the error matrix. 
The operator norm rate is typically faster than the Frobenius norm rate. For instance, in XFEL it is reasonable to assume that that the total intensity across all detectors is fixed as the resolution increases.  This leads to a fixed value for $\E\|Y\|^4$ that does not grow with $n$. The operator norm rate can be as fast as $(p/n)^{\sfrac{1}{2}}$ while the Frobenius norm rate is $p/n^{\sfrac{1}{2}}$.

Our proof of Thm. \ref{rate_low_dim} exploits that exponential family random variables are sub-exponential, so we can use corresponding moment bounds. We also rely on operator-norm bounds for random matrices from \cite{Tropp2015} and on moment bounds from \cite{Boucheron2005}.

\section{Homogenization and shrinkage}
\label{cov_est}

\subsection{Homogenization}
\label{sec_white}

In the previous sections, we showed that the diagonally debiased sample covariance matrix converges at a rate $O(pn^{-\sfrac{1}{2}})$. Next we propose a shrinkage method to improve this estimator in the high dimensional regime where $n, p\to \infty$ and $p/n\to\gamma>0$.  As a preliminary step, it is helpful to homogenize the empirical covariance matrix and remove the effects of heteroskedasticity. This allows us to get closer to the \emph{standard spiked model} \citep{johnstone2001distribution} where the noise has the same variance for all features. In that setting covariance estimation via eigenvalue shrinkage has been thoroughly studied \citep{Donoho2013}. 

The vector of noise variances affecting the different components is $\E[A''(\theta)]$. For a given signal $Y = A'(\theta) + \diag[A''(\theta)]^{\sfrac{1}{2}}\ep$, homogenization transforms it to $Y_h = \diag[A''(\theta)]^{-\sfrac{1}{2}}A'(\theta) + \ep$. The covariance is transformed from $\Cov{Y}$ to $\diag[A''(\theta)]^{-\sfrac{1}{2}}\Cov{Y} \diag[A''(\theta)]^{-\sfrac{1}{2}}$. Since the diagonal correction  $D_n  = \diag[V(\bar Y)]$  estimates $\E \diag[A''(\theta)]$, we define the \emph{homogenized} covariance estimator by
\beq
\label{whit_est}
S_{h}=D_n^{-\sfrac{1}{2}}S_d D_n^{-\sfrac{1}{2}} =D_n^{-\sfrac{1}{2}} S D_n^{-\sfrac{1}{2}} - I_p.
\eeq
For Poisson observations, every entry of the noisy vector has to be divided by square root of the corresponding entry of the sample mean, so  $S_{h}  = \diag[\bar Y]^{-\sfrac{1}{2}}S\diag[\bar Y]^{-\sfrac{1}{2}} - I_p$. 

Homogenization is different from \emph{standardization}, the classical method for removing heteroskedasticity. To standardize, each feature---e.g., pixel---is divided by its empirical standard deviation \cite[e.g.,][Sec. 2.3.]{jolliffe2002principal}. This ensures that all features have the same norm. The sample covariance matrix becomes a sample correlation matrix. In our case it turns out that this procedure ``over-corrects''. The overall variance $\Va[Y(i)]$ of each feature is the sum of the signal variance $\Va[A'(\theta(i))]$ and the noise variance $\E[A''(\theta(i))]$. Homogenization divides by the estimated noise standard errors, while standardization divides by the \emph{overall} standard error due to the signal and noise. 

Therefore, in our setting homogenization is more justified than standardization. Moreover, the standard Marchenko-Pastur law holds for the homogenized estimator (Thm. \ref{MP_white} in the next section). This also suggests that the top ``noise'' eigenvalue has a well-understood Tracy-Widom distribution asymptotically \citep{johnstone2001distribution}, which can be used to devise tests of significance.  Another justification is that standardization improves the signal strength for ``delocalized'' eigenvectors (Sec. \ref{SNR}). We discuss these in detail below.

\subsubsection{Marchenko-Pastur law}
\label{mp_law}
A key advantage of homogenization is that the homogenized estimator has a simple well-understood asymptotic behavior. In contrast, the unhomogenized estimator has a more complicated behavior. In this section, we show both of the above claims. We show that the limit spectra of our covariance matrix estimators are characterized by the Marchenko-Pastur (MP) law \citep{marchenko1967distribution}, proving the general MP law for the sample covariance $S$, and the standard MP law for the homogenized covariance $S_{h}$. 

For simplicity, we consider the case is when $\theta \in \mathbb{R}^p$ is fixed. This can be thought of as the ``null'' case, where all mean signals are the same. Then we can write $Y_i = A'(\theta) + \diag[A''(\theta)]^{\sfrac{1}{2}}\ep_i$, where $\ep_i$ have independent standardized entries. Therefore, letting $\mathcal{Y}$ be the $n\times p$ matrix whose rows are $Y_i^\top$, we have  
$\mathcal{Y} = \vec{1} A'(\theta)^\top + \mathcal{E} \diag[A''(\theta)]^{\sfrac{1}{2}}$, where $ \vec{1}  = (1,1,\ldots,1)^\top$ is the vector of all ones, and $\mathcal{E}$ is an $n \times p$ matrix of independent standardized random variables.

Let $H_p$ be the uniform distribution on the $p$ scalars $A''(\theta(i))$, $i=1,\ldots,p$.  We assume that $A''(\theta(i))>c$ for some universal constant $c>0$. In the Poisson example, this means that the individual rates $x(i)$ are bounded away from 0. The reason for this assumption is to avoid the very sparse regime, where only a few nonzero entries per row are observed. In that case, the MP law is not expected to hold. 

Consider the high dimensional asymptotic limit when $n,p\to\infty$ so that $p/n\to\gamma>0$. Suppose moreover that $H_p$ converges weakly to some limit distribution, i.e., $H_p \Rightarrow H$. Since $\diag[A''(\theta)]$ can be viewed as the population covariance matrix  of the noise, $H$ is the limit population spectral distribution (PSD). Since $\mathcal{E}$ has independent standardized entries with bounded moments, it follows that the distribution of the $p$ eigenvalues of $n^{-1}\mathcal{Y}^\top \mathcal{Y}$ converges almost surely to the general Marchenko-Pastur distribution $F_{\gamma,H}$ \citep[][Thm. 4.3]{bai2009spectral}. 

Now, the sample covariance matrix $S$ is a rank-one perturbation of $n^{-1}\mathcal{Y}^\top \mathcal{Y}$. Therefore its eigenvalue distribution also converges to the MP law. We state this for comparison with the next result. 

\begin{prop}[Marchenko-Pastur law for sample covariance matrix]
\label{MP_unhomogenized}
The eigenvalue distribution of $S$ converges almost surely to the general Marchenko-Pastur distribution $F_{\gamma,H}$. 
\end{prop}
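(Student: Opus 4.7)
The plan is to deduce the claim by a one-line rank-one perturbation argument on top of the already-stated MP convergence for the (uncentered) Gram matrix $n^{-1}\mathcal{Y}^\top\mathcal{Y}$. The only real work is to identify the right classical lemma; the rest is bookkeeping.

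First I expand the sample covariance matrix to make the perturbation explicit:
$$
S \;=\; n^{-1}\sum_{i=1}^n Y_iY_i^\top \;-\; \bar Y\bar Y^\top \;=\; n^{-1}\mathcal{Y}^\top\mathcal{Y} \;-\; \bar Y\bar Y^\top.
$$
Thus $S$ and $n^{-1}\mathcal{Y}^\top\mathcal{Y}$ differ by the matrix $\bar Y\bar Y^\top$, which has rank at most one (regardless of the random realization).

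Next I invoke the standard low-rank perturbation inequality for empirical spectral distributions (Bai and Silverstein, \emph{Spectral Analysis of Large Dimensional Random Matrices}, Theorem A.43): for any two $p\times p$ Hermitian matrices $A, B$, with $F^A, F^B$ denoting the empirical c.d.f.s of their eigenvalues,
$$
\sup_{x\in\R}\,\bigl|F^A(x)-F^B(x)\bigr| \;\le\; \frac{\operatorname{rank}(A-B)}{p}.
$$
Applied with $A=S$ and $B=n^{-1}\mathcal{Y}^\top\mathcal{Y}$, this gives a uniform Kolmogorov gap of at most $1/p$ between the two empirical spectral distributions, which tends to $0$ deterministically as $p\to\infty$. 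Combined with the almost-sure convergence $F^{n^{-1}\mathcal{Y}^\top\mathcal{Y}}\Rightarrow F_{\gamma,H}$ already stated in the paragraph above, this forces $F^{S}\Rightarrow F_{\gamma,H}$ almost surely, completing the proof.

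There is essentially no technical obstacle here: both ingredients are off-the-shelf, and the centering correction is cheap because a rank-one modification can perturb the empirical c.d.f.\ uniformly only by $O(1/p)$. The analogous ``hidden'' perturbation step was already absorbed at the previous level, where the rank-one mean shift $\vec{1}A'(\theta)^\top$ in $\mathcal{Y}$ (together with its two cross-product contributions) was likewise reduced via the same Bai--Silverstein lemma to $\diag[A''(\theta)]^{\sfrac{1}{2}}\cdot n^{-1}\mathcal{E}^\top\mathcal{E}\cdot\diag[A''(\theta)]^{\sfrac{1}{2}}$, to which the general Marchenko--Pastur theorem (Bai and Silverstein, Thm.~4.3) applies directly under the assumptions in the paragraph (bounded moments of $\mathcal{E}$, $A''(\theta(i))>c$, and $H_p\Rightarrow H$).
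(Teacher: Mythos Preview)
Your proof is correct and follows exactly the paper's approach: the paper's entire argument for this proposition is the one sentence ``the sample covariance matrix $S$ is a rank-one perturbation of $n^{-1}\mathcal{Y}^\top\mathcal{Y}$, therefore its eigenvalue distribution also converges to the MP law.'' You have simply filled in the details by naming the specific rank inequality (Bai--Silverstein, Thm.~A.43) and, helpfully, made explicit the analogous low-rank reduction hidden in the paper's appeal to Thm.~4.3 for $n^{-1}\mathcal{Y}^\top\mathcal{Y}$ itself.
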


Since the general MP law has a complicated implicit description that needs to be studied numerically \citep[see e.g.,][]{dobriban2015efficient}, it is useful to work with the homogenized covariance matrix  $S_{h}$. Indeed, we establish that the standard Marchenko-Pastur law characterizes its limit spectrum.  The standard Marchenko-Pastur distribution has a simple closed-form density, and there are many useful tools already available for low-rank covariance estimation \citep[e.g.,][]{shabalin2013reconstruction, Donoho2013}. 

\begin{theorem}[Marchenko-Pastur law for homogenized covariance matrix]
\label{MP_white}
The eigenvalue distribution of $S_{h}+I_p$ converges almost surely to the standard Marchenko-Pastur distribution with aspect ratio $\gamma$. 
\end{theorem}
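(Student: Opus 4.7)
The plan is to reduce the claim to the classical Marchenko--Pastur theorem applied to an ``oracle'' homogenized matrix, by showing that the data-driven diagonal rescaling $D_n^{-\sfrac{1}{2}}$ is asymptotically equivalent to rescaling by the true noise standard deviations $\Delta := \diag[A''(\theta)]$.

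Let $T := \Delta^{-\sfrac{1}{2}} S \Delta^{-\sfrac{1}{2}}$. Under the model with fixed $\theta$, we have $Y_i - \bar Y = \Delta^{\sfrac{1}{2}}(\ep_i - \bar\ep)$, so $S = \Delta^{\sfrac{1}{2}} \bigl[n^{-1}\sum_i (\ep_i - \bar\ep)(\ep_i - \bar\ep)^\top\bigr] \Delta^{\sfrac{1}{2}}$ and therefore
\[
T = \frac{1}{n}\sum_{i=1}^n (\ep_i - \bar\ep)(\ep_i - \bar\ep)^\top = \frac{1}{n}\mathcal{E}^\top \mathcal{E} - \bar\ep\,\bar\ep^\top.
\]
Since the entries of $\mathcal{E}$ are independent, standardized, and have bounded moments, the classical Marchenko--Pastur theorem ensures that the ESD of $n^{-1}\mathcal{E}^\top \mathcal{E}$ converges almost surely to the standard MP law with aspect ratio $\gamma$. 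The $\bar\ep\,\bar\ep^\top$ term is a rank-one perturbation, which by Cauchy interlacing changes the empirical CDF pointwise by at most $1/p \to 0$; hence the ESD of $T$ has the same limit.

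The remaining task is to show that $\V (S_{h} + I_p) - T \V \to 0$ almost surely, which by Weyl's inequality forces the ESDs of $S_{h}+I_p$ and $T$ to share the same almost-sure limit. Writing
\[
(S_{h} + I_p) - T = D_n^{-\sfrac{1}{2}} S (D_n^{-\sfrac{1}{2}} - \Delta^{-\sfrac{1}{2}}) + (D_n^{-\sfrac{1}{2}} - \Delta^{-\sfrac{1}{2}}) S \Delta^{-\sfrac{1}{2}},
\]
and using that $\V \Delta^{-\sfrac{1}{2}} \V$ is bounded by the assumption $A''(\theta(i)) \ge c$, that $\V D_n^{-\sfrac{1}{2}}\V$ is almost surely eventually bounded (once $D_n$ is close to $\Delta$), and that $\V S \V$ is almost surely bounded (e.g. via Bai--Yin applied to the factorization $S = \Delta^{\sfrac{1}{2}} T \Delta^{\sfrac{1}{2}}$), it suffices to prove $\V D_n^{-\sfrac{1}{2}} - \Delta^{-\sfrac{1}{2}} \V \to 0$ a.s. Since both matrices are diagonal, this reduces to $\max_j |V(\bar Y(j))^{-\sfrac{1}{2}} - A''(\theta(j))^{-\sfrac{1}{2}}| \to 0$, which by the Lipschitz property of $V$ and the lower bound $A''(\theta(j)) \ge c$ further reduces to $\max_j |\bar Y(j) - A'(\theta(j))| \to 0$ a.s.

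The main obstacle is this last uniform-in-$j$ concentration. Each coordinate deviation $\bar Y(j) - A'(\theta(j))$ is a centered average of $n$ independent sub-exponential variables (exponential-family observations with bounded natural parameter are sub-exponential), so a Bernstein-type inequality yields tails of the form $\exp(-cn\delta^2) + \exp(-cn\delta)$. A union bound over the $p = O(n)$ coordinates produces a probability summable in $n$, and Borel--Cantelli delivers the desired almost-sure vanishing of the maximum. Given this concentration step, the perturbation and reduction arguments above are standard.
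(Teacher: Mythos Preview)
Your proposal is correct and follows essentially the same route as the paper: reduce to the standard Marchenko--Pastur law for $n^{-1}\mathcal{E}^\top\mathcal{E}$, absorb the centering as a rank-one perturbation, and show the empirical rescaling $D_n^{-\sfrac{1}{2}}$ is asymptotically the oracle $\Delta^{-\sfrac{1}{2}}$ via uniform-in-$j$ concentration of $\bar Y(j)$ plus Borel--Cantelli. The only organizational difference is that the paper carries out the perturbation argument at the level of the data matrices $n^{-\sfrac{1}{2}}\mathcal{E}$, $\mathcal{Y}_a$, $\mathcal{Y}_w$ (so that the key bound is $\|n^{-\sfrac{1}{2}}\mathcal{E}\|\cdot\|I_p-\Delta^{\sfrac{1}{2}}D_n^{-\sfrac{1}{2}}\|\to 0$) rather than at the covariance level, which lets it avoid separately invoking a bound on $\|S\|$; the substance is the same.
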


In the proof presented in Appendix \ref{pf_MP_white}, we deduce this from the Marchenko-Pastur law for the error matrix $n^{-\sfrac{1}{2}}\mathcal{E}$, for which standard results from \cite{bai2009spectral} apply. The emergence of the standard MP law motivates the shrinkage method presented next.

\subsection{Eigenvalue shrinkage} 
\label{eval_shr}
Since the early work of Stein \citep{stein1956some} it is known that the estimation error of the sample covariance can be decreased by eigenvalue shrinkage. Therefore, we will apply an eigenvalue shrinkage method to the homogenized covariance matrix $S_{h}$. Let $\eta(\cdot)$ be a generic matrix shrinker, defined for symmetric matrices $M$ with eigendecomposition $M = U \Lambda U^\top$ as $\eta (M) = U \eta(\Lambda) U^\top$. Here $\eta(\Lambda)$ is defined by applying the scalar shrinker $\eta$---typically a nonlinear function---elementwise on the diagonal of the diagonal matrix $\Lambda$. Then our \emph{homogenized and shrunken} estimators will have the form 
\beq
\label{shr_est}
S_{h,\eta}  = \eta(S_{h})= \eta(D_n^{-\sfrac{1}{2}}S_d D_n^{-\sfrac{1}{2}}).
\eeq
We are interested in settings where the clean signals lie on a low-dimensional subspace. We then expect the true covariance matrix $\Sigma_x$ of the clean signals to be of low rank. However, based on Thm. \ref{MP_white}, even in the case when $\Sigma_x = 0$, the empirical homogenized covariance matrix is of full rank, and its eigenvalues have an asymptotic MP distribution. We are thus interested in shrinkers $\eta$ that set all noise eigenvalues to zero, specifically $\eta(x)=0$ for $x$ within the support of the shifted MP distribution $x \in [(1-\sqrt{\gamma})^2, (1+\sqrt{\gamma})^2]-1$. An example is operator norm shrinkage \citep{Donoho2013}. %Alternatively, in many exploratory analyses, one may start with a guess for the desired number $r$ of principal components. In that case, one can keep only the top $r$ components after homogenization. 

\begin{figure}[t]
	\centering
	\begin{subfigure}{.5\textwidth}
		\centering
		\includegraphics[width=1.05\textwidth, trim= 50 20 10 0,clip]{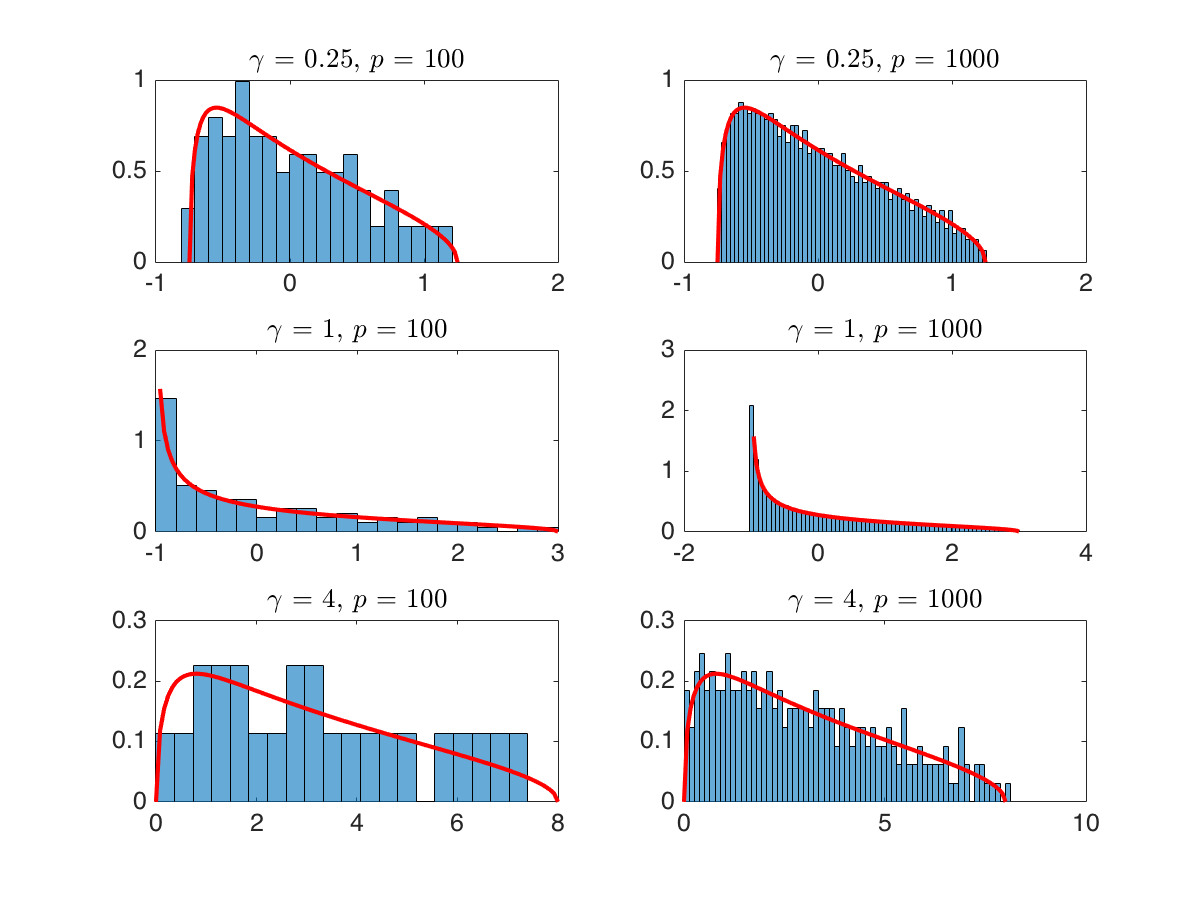} \caption{Rank 0 case} \label{whitened_estim_MPoverlay}
	\end{subfigure}%
	\begin{subfigure}{.5\textwidth}
		\centering
		\includegraphics[width=1.05\textwidth, trim= 50 20 10 0 ,clip]{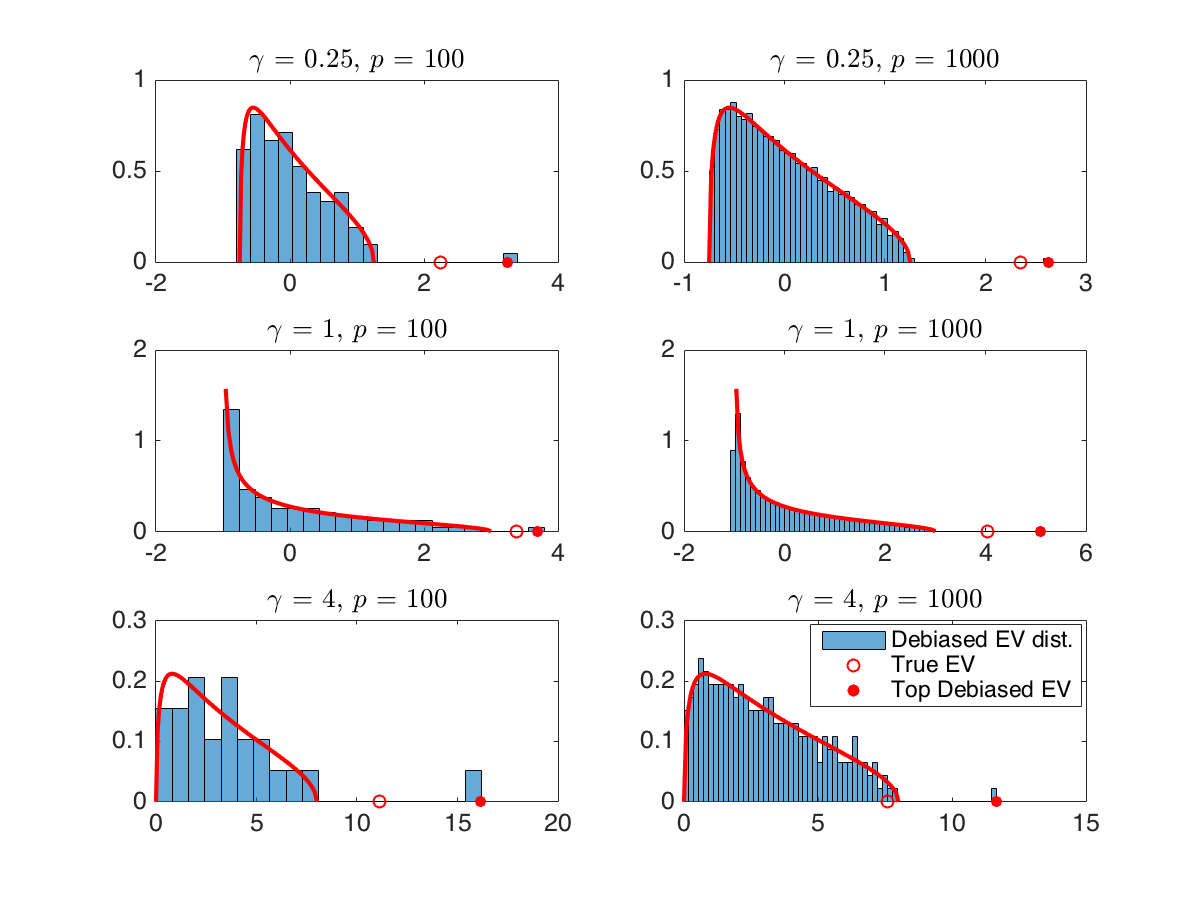} \caption{Rank 1 case (Spiked Model)} \label{spike_whitened_estim}
	\end{subfigure}
	\caption{Empirical distribution of eigenvalues of homogenized sample covariance $S_{h}$ for different values of $\gamma = p/n$, with the corresponding shifted Marchenko-Pastur density overlaid as a red curve. Data simulated according to \ref{eval_shr}. In the legend for (b), `Top Debiased EV' refers top eigenvalue of $S_{h}$, while `True EV' refers to the top eigenvalue of $D_n^{-\sfrac{1}{2}}\Sigma_xD_n^{-\sfrac{1}{2}}$, which we want to estimate.}
	\label{spectrum_homogenized}
\end{figure}

However, homogenization by $D_n\neq I_p$ also changes the direction of the eigenvectors. Therefore, to improve the accuracy of subspace estimates after eigenvalue shrinkage, we \emph{heterogenize}, multiplying back by the estimated standard errors. We define the \emph{heterogenized} covariance estimator as:
\beq
\label{rec_est}
S_{he}  = D_n^{\sfrac{1}{2}}\cdot S_{h,\eta}\cdot D_n^{\sfrac{1}{2}}.
\eeq

Heterogenization is a non-linear operation that changes both the eigenvectors and eigenvalues. While it improves the estimates of the eigenvectors (PCs), it turns out that it introduces a bias in the eigenvalues. Therefore, we will need a final \emph{scaling} step to correct this bias (Sec. \ref{bc_ev}). 

To understand homogenization empirically, we perform two simulations. First, we generate non-negative i.i.d $\{X_i\}_{1\leq i \leq n}$ lying in a low-dimensional space of dimension $r$: we pick $r$ vectors $v_1, …, v_r \in \R^p$ whose coordinates are i.i.d uniformly distributed in $[0,1]$, and normalize each to have an L1 norm of unity. For each $i$, sample $r$ coefficients $a_{i1}, …, a_{ir}$ independently from the uniform distribution on [0,1]. Define
$X_i = a_{i1}v_1 + … + a_{ir} v_r$. Note that $X_i$ are non-negative, reside in a hyperplane spanned by $v_1,…,v_r$, and the mean and covariance of $X_i$ can be found easily in terms of $v_1,…,v_r$. The coefficients $a_{i1},…,a_{ir}$ are also normalized so that $a_{i1}+…+a_{ir}=A$, where $A=25(1+\sqrt{\gamma})^2$ is a constant relating to signal strength, chosen empirically to push the top eigenvalue outside of the bulk.  Finally we sample $Y_i \sim \Po_p(X_i)$ independently.

We display a Monte Carlo instance of the eigenvalue histogram of $S_{h}$ on Figure \ref{spectrum_homogenized}. When $r=0$, the standard MP distribution---shifted by $-1$---is a good match (Fig. \ref{whitened_estim_MPoverlay}). This is in accordance with Thm. \ref{MP_white}. When $r=1$, the standard MP distribution still matches the bulk of the noise eigenvalues (Fig. \ref{spike_whitened_estim}). Moreover, we observe the same qualitative behaviour as in the classical spiked model, where the top \emph{empirical eigenvalue} overshoots the \emph{population eigenvalue}. Next we study this phenomenon more precisely.

\subsubsection{The spiked model: Colored and homogenized}
\label{spiked_sec}

\begin{table}[]
\centering
\caption{Spiked models: Summary of the original and homogenized spiked model.}
\label{spiked_tab}
{\renewcommand{\arraystretch}{1.4}
\begin{tabular}{|l|l|l|}
\hline
Model  & Original & Homogenized \\ \hline
Latent Signal  & $X_i = u + z_i v$ & $D^{-\sfrac{1}{2}} X_i = D^{-\sfrac{1}{2}} u + z_i D^{-\sfrac{1}{2}}v$  \\ \hline
Marginal Covariance & $\C[Y]= v v^\top+D$ & $\C[Y_h]= D^{-\sfrac{1}{2}} v v^\top D^{-\sfrac{1}{2}}+ I_p$ \\ \hline
Eigenvector  & $v_{norm}= v/\|v\|$  & $w = D^{-\sfrac{1}{2}} v / \|D^{-\sfrac{1}{2}} v\|$\\ \hline
Spike   & $t = v^\top v$ & $\ell =v^\top D^{-1} v $  \\ \hline
SNR   & $\frac{v^\top v}{\tr{D}}$ & $\frac{v^\top D^{-1} v}{p}$ \\ \hline
\end{tabular}
}
\end{table}

To develop a method for estimating the eigenvalue after homogenization and heterogenization, we study a generalization of the spiked model \citep{johnstone2001distribution} appropriate for our setting. Specifically, based on the covariance structure of the noisy signal, Eq. \eqref{cov}, we model the mean parameter $X = A'(\theta)$ of the exponential family---the clean observation---as a low rank vector. 
For simplicity, we will present the results in the rank one case, but they generalize directly to higher rank.

Suppose that the $i$-th clean observation has the form $X_i  = A'(\theta_i) = u + z_i v$, where $u,v$ are deterministic $p$-dimensional vectors, and $z_i$ are i.i.d.~standardized random variables. In the Poisson case where $Y_i \sim \Po_p(X_i)$, this assumes that the latent mean vectors are $X_i = u + z_i v$. The vector $u$ is the global mean of the clean images, while  $v$ denotes the direction in which they vary.

For  $X_i$  to be a valid mean parameter, we need the additional condition that $u(j) +z_i |v(j)| \in A'(\Theta)$, for all $i,j$, where $\Theta$ is the natural parameter space of the exponential family, and $f(S)$ denotes the forward map of the set $S$ under the function $f$. For instance, in the Poisson case, we need that $X_i(j) \ge 0$ for all $i,j$. If we take $z_i$ to be uniform random variables on $[-\sqrt{3},\sqrt{3}]$, so that their variance is unity, then a sufficient condition is that $ u(j) \ge \sqrt{3}|v(j)|$ for all $j$.

Using our formula for the marginal covariance of the noisy observations, $\C[Y]  =  \C[X]+\E \diag[V(X)]$, and defining  $D = \E \diag[V(X)]$, we obtain
\beq\label{spiked}
\C[Y]  = v v^\top+D.
\eeq
For instance, in the Poisson case we have $\C[Y]  = v v^\top+\diag[u]$.

We homogenize the observations dividing by the elements of $D^{\sfrac{1}{2}}$. The elements of $D$ are expected values of variances. They are thus positive, except for coordinates that that can be discarded because they have no variability.  The homogenized observations are $Y_h = D^{-\sfrac{1}{2}} Y$, and their population covariance matrix is
\beq\label{spiked_w}
\C[Y_h]  = D^{-\sfrac{1}{2}} v v^\top D^{-\sfrac{1}{2}}+ I_p.
\eeq

We now compare this with the usual \emph{standard spiked model} \citep{johnstone2001distribution} where the observations $Y_h$ are Gaussian and have covariance matrix
$\C[Y_h]  = \ell w w ^\top + I_p,$
where $\ell\ge 0$ and the vector $w$ has unit norm. The top eigenvalue is called the ``spike''. This model has been thoroughly studied in probability theory and statistics. In particular, the Baik-Ben Arous-P\'ech\'e (BBP) phase transition (PT) \citep{baik2005phase} shows that when $n,p\to\infty$ such that $p/n \to \gamma>0$, the top eigenvalue of the sample covariance matrix asymptotically separates from the Marchenko-Pastur bulk if the population spike $\ell> \sqrt{\gamma}$. Otherwise, the top sample eigenvalue does not separate from the MP bulk. This was shown first for complex Gaussian observations, then generalized to other distributions \citep[see e.g.,][]{yao2015large}.

Heuristically, comparing with \eqref{spiked_w}, we surmise that a spiked model with $\ell =  v^\top D^{-1} v $ and $w = D^{-\sfrac{1}{2}}v/$  $\|D^{-\sfrac{1}{2}} v\|$ is a good approximation in our case. In particular the BBP phase transition should happen approximately when
$ v^\top D^{-1} v  = \sqrt{\gamma}.$
In the Poisson case the condition is $v^\top \diag[u]^{-1} v  = \sqrt{\gamma}$. Next we provide numerical evidence for this surmise, and develop its consequences.

\subsubsection{Homogenization improves SNR}\label{SNR}

In this section we justify our homogenization method theoretically, showing that it can improve the signal-to-noise ratio. This was observed empirically in previous work on covariance estimation in a related setting, but a theoretical explanation is lacking \citep{bhamre2016denoising}. 

As usual, we define the SNR of a ``signal+noise'' vector observation $y = s+n$ as the ratio of the trace of the covariances of $s$ and of $n$. In the unhomogenized model from Eq. \eqref{spiked}
$$\text{SNR}=\frac{\tr{\Cov X}}{\tr{\E\diag[V(X)]}} =  \frac{\tr{vv^\top}}{\tr{D}} = \frac{v^\top v}{\tr{D}}.$$
In particular, the SNR is of order $O(1/p)$ in the typical case when the vector $v$ has norm of unit order. In the homogenized model from Eq. \eqref{spiked_w}, the SNR equals $v^\top D^{-1} v/p$. 

Suppose now that $v$ is approximately \emph{delocalized} in the sense that 
$p\cdot v^\top D^{-1} v \approx \tr D^{-1}\cdot v^\top v.$ 
This holds for instance if the entries of $v$ are i.i.d.~centered random variables with the same variance $\sigma^2$.  In that case, $\E v^\top D^{-1} v = \sigma^2 \tr D^{-1}$ and $\E v^\top  v = \sigma^2 p$, and under higher moment assumptions it is easy to show the concentration of these quantities around their means, showing delocalization as above. If $v$ is delocalized, then we obtain that the SNR in the homogenized model is higher than in the original model. Indeed, this follows because $D$ is diagonal, so by the Cauchy-Schwarz inequality
$$\frac{v^\top D^{-1} v}{p} 
\approx \frac{\tr D^{-1} \cdot v^\top v}{p^2}  
=  \frac{\sum_{i=1}^{p} D_i^{-1}  \cdot v^\top v}{p^2}   
\ge   \frac{v^\top v}{\sum_{i=1}^{p} D_i}  
= \frac{v^\top v}{\tr{D}}.$$
Moreover, we can define the \emph{improvement} (or \emph{amplification}) in SNR as
\beq\label{i_snr}
\I= \frac{\tr{D}}{p} \cdot \frac{v^\top D^{-1} v}{v^\top v}.
\eeq
The above heuristic can be formalized as follows:
\begin{prop}
\label{prop_snr_wh}
Suppose the signal eigenvector $v$ is delocalized in the sense that for some $\ep>0$, 
$$\frac{v^\top D^{-1} v}{v^\top v} \ge (1-\ep) \frac{\tr[D^{-1}]}{p}.$$
Let moreover $\beta$ be the following measure of heteroskedasticity: 
$$\beta =  \frac{\sum_{i=1}^{p} D_i \cdot \sum_{i=1}^{p} D_i^{-1}}{p^2} \ge 1.$$
Then the SNR is improved by homogenization, by a ratio $\I \ge (1-\ep)\beta$.
\end{prop}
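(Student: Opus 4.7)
The plan is to prove this by direct substitution: plug the delocalization hypothesis into the definition of $\I$ from \eqref{i_snr}, and identify the result with $\beta$.

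First I would recall from \eqref{i_snr} that
\[
\I \;=\; \frac{\tr D}{p}\cdot\frac{v^\top D^{-1} v}{v^\top v}.
\]
Applying the delocalization hypothesis $v^\top D^{-1} v/(v^\top v) \ge (1-\ep)\,\tr[D^{-1}]/p$ directly gives
\[
\I \;\ge\; (1-\ep)\cdot \frac{\tr D}{p}\cdot\frac{\tr[D^{-1}]}{p}
\;=\;(1-\ep)\cdot\frac{\sum_{i=1}^{p} D_i \cdot \sum_{i=1}^{p} D_i^{-1}}{p^2}
\;=\;(1-\ep)\,\beta,
\]
which is the desired inequality.

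The only remaining loose end is justifying the claim $\beta \ge 1$ stated in the proposition. This is Cauchy--Schwarz applied to the vectors with entries $D_i^{1/2}$ and $D_i^{-1/2}$:
\[
p^2 \;=\; \Bigl(\sum_{i=1}^p D_i^{1/2}\cdot D_i^{-1/2}\Bigr)^{\!2}
\;\le\;\Bigl(\sum_{i=1}^p D_i\Bigr)\Bigl(\sum_{i=1}^p D_i^{-1}\Bigr),
\]
with equality iff all $D_i$ are equal (the homoskedastic case), which also motivates calling $\beta$ a measure of heteroskedasticity.

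There is no real obstacle here: the proposition is essentially the combination of a hypothesis (delocalization) with a one-line Cauchy--Schwarz. The only thing worth remarking on in the write-up is the interpretation -- namely that when $v$ is spread out and $D$ is noticeably heteroskedastic, $\beta$ is strictly greater than $1$, so homogenization produces a genuine SNR gain, whereas in the homoskedastic limit $D = \sigma^2 I_p$ we recover $\beta = 1$ and the transformation has no effect.
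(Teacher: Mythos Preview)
Your proof is correct and matches the paper's approach exactly: the paper does not give a separate formal proof of this proposition, treating it as an immediate formalization of the preceding heuristic discussion, which consists of precisely the direct substitution into \eqref{i_snr} together with the Cauchy--Schwarz inequality $\sum D_i \cdot \sum D_i^{-1} \ge p^2$ that you wrote out.
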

If $\beta$ is large and $\ep>0$ is small, the SNR can improve substantially.

\subsubsection{Eigenvalue shrinkage and scaling} \label{bc_ev}

We now continue with our overall goal of estimating the covariance matrix $\C[X] = vv^\top$  of $X$.  This has one nonzero eigenvalue $t=\|v\|^2$ and corresponding eigenvector $v_{norm}  = v/\|v\|$. We use the top eigenvector of the heterogenized covariance matrix $S_{he}$ as an estimator of $v_{norm}$. To estimate $t$, a first thought is to use the top empirical eigenvalue of $S_{he}$, but as we show next, this naive estimator is biased. %To understand and correct the bias, we review some basic facts about the spiked model in high dimensions. 

For data with independent coordinates and equal variances, the cumulative work of many authors \cite[e.g.,][etc]{baik2005phase, baik2006eigenvalues,paul2007asymptotics, benaych2011eigenvalues} shows that if the population spike $\ell$ is above the BBP phase transition---i.e.,  $\ell > \sqrt{\gamma}$---then the top sample spike pops out from the Marchenko-Pastur distribution of the ``noise'' eigenvalues. The top eigenvalue will converge to the value given by \emph{the spike forward map}: 
\begin{equation*}
\lambda(\ell;\gamma)=
\left\{
	\begin{array}{ll}
		(1+\ell) \left(1+ \frac{\gamma}{\ell}\right) & \mbox{\, if \, } \ell>\gamma^{\sfrac{1}{2}}, \\
		(1+\gamma^{\sfrac{1}{2}})^2 & \mbox{\, otherwise.}
	\end{array}
\right.
\end{equation*}

We conjecture that the BBP phase transition also applies to our case, and describes the behavior of the spikes after homogenization. We have verified this in numerical simulations in certain cases (data not shown due to space limitations). Therefore, as in previous work, we propose to estimate $\ell$ consistently by inverting the spike forward map \cite[see e.g.,][]{lee2010convergence, Donoho2013}, i.e., defining $\smash{\hat \ell = \lambda^{-1}(\lambda_{\max}(S_{h}))}$. \cite{Donoho2013} provided an asymptotic optimality result for this estimator of the spike in operator norm loss. 

Once we have a good estimator $\hat \ell$ of $\ell  = v^\top D^{-1} v$, a first thought is to estimate $t = v^\top v$ as the top eigenvalue of the heterogenized covariance matrix $S_{he}$. However, this estimator is biased. The estimation accuracy is affected in a significant way by the inconsistency of the empirical eigenvector $\hat w$ of $S_{h}$ as an estimator of the true eigenvector $w = D^{-\sfrac{1}{2}} v / \|D^{-\sfrac{1}{2}} v\|$. We can quantify this heuristically based on results for Gaussian data. In the Gaussian standard spiked model the empirical and true eigenvectors have an asymptotically deterministic angle: $(w^\top \hat w)^2 \to c^2(\ell;\gamma)$ almost surely, where $c(\ell;\gamma)$ is the \emph{cosine forward map} given by \cite[e.g.,][etc]{paul2007asymptotics, benaych2011eigenvalues}:
\begin{equation*}
c(\ell;\gamma)^2=
\left\{
	\begin{array}{ll}
		\frac{1-\gamma/\ell^2}{1+\gamma/\ell} & \mbox{\, if \, } \ell>\gamma^{\sfrac{1}{2}}, \\
		0 &  \mbox{\, otherwise.}
	\end{array}
\right.
\end{equation*}
Heuristically, in finite samples we can write
$\hat w  \approx c w + s \ep,$
where $s=s(\ell;\gamma)\ge0$ is the sine defined by $s^2 = 1-c^2$, and $\ep$ is white noise with approximate norm $\|\ep\|=1$. Then, since  $w^\top D w  = v^\top v / v^\top D^{-1} v = t/\ell$, and $\ep^\top D \ep \approx \tr(D)/d$, we have 
$$
\|\hat v\|^2 
\approx \ell \cdot \hat w^\top D \hat w 
\approx \ell \cdot (c w + s \ep)^\top D (c w + s \ep)
\approx \ell \cdot (c^2 w^\top D w + s^2 \ep^\top D \ep) 
 \approx t c^2 + \ell s^2 \tr(D)/p.
$$
Comparing this to $\|v\|^2 = t = t c^2 + ts^2$, we find that the bias is 
$$\|\hat v\|^2 - t \approx s^2 \left(v^\top D^{-1}v \cdot \tr(D)/p - v^\top v\right) = s^2 t \cdot (\I-1) \ge 0.$$
This suggests that $\|\hat v\|^2$ is an upward biased estimator of $t =\|v\|^2$. Interestingly, the bias is closely related to the improvement $\I$ in SNR.

To correct the bias, we propose an estimator of the form $\hat t(\alpha) = \alpha \|\hat v\|^2$ for which $\alpha \|\hat v\|^2 \approx \|v\|^2$. We have $ \|\hat v\|^2 \approx t  \cdot [1+ s^2 (\I-1)]$, suggesting that we define $\alpha  = [1+ s^2 (\I-1)]^{-1}$. This quantity is an unknown population parameter, and it depends on $s^2$ and $\I$. We can estimate $s^2$ in the usual way by $\hat s^2 = s^2(\hat \ell;\gamma)$. Since $\I$ itself depends on the parameter $t$ we are trying to estimate, we plug in the same estimator $\hat t(\alpha) = \alpha \|\hat v\|^2$, leading to the following estimator of $\I$ (where we also define $\tau$ for future use):
$$\hat \I(\alpha) 
= \frac{\tr{D_n}}{p} \cdot \frac{\hat \ell}{\hat t(\alpha)} 
= \frac{\tr{D_n}}{p} \cdot \frac{\hat \ell}{\alpha \|\hat v\|^2} 
= \frac{\tau}{\alpha}.$$
Since  $\alpha  = [1+ s^2 (\I-1)]^{-1}$, it is reasonable to require that the fixed-point equation 
$\hat \alpha  = [1+ \hat s^2 (\hat \I(\hat \alpha)-1)]^{-1}$ holds.

We can equivalently rewrite the fixed-point equation as
$1/\hat \alpha = \hat c^2+ \hat s^2 \hat \I(\hat\alpha) = \hat c^2+ \hat s^2 \tau/\hat \alpha$. Or, when $\hat c^2>0$, 
\beq
\label{alpha_def}
\hat \alpha  = \frac{1-\hat s^2\tau}{\hat c^2}.
\eeq
When $\hat c^2=0$, i.e., when $\hat \ell \le \sqrt{\gamma}$, the equation reads $1/\hat \alpha  = \tau/\hat\alpha$. If $\tau=1$, this has solution $\alpha=1$, else it has no solution. Therefore, when $\hat c^2=0$, we define $\hat \alpha = 1$. We finally define $\hat t(\hat \alpha) = \hat\alpha \|\hat v\|^2$. The implication is that we ought to rescale the estimated magnitude of the signal subspace corresponding to $v$ by $\hat \alpha$. 

In the multispiked case, suppose $X_j = u + \sum_{i=1}^r z_{ij} v_i$. Then the marginal covariance of $Y$ is $\Cov{Y} = \sum_{i=1}^r v_i v_i^\top + D$. Suppose that the $v_i$ are sorted in the order of decreasing norm. Suppose moreover that the heterogenized sample covariance $S_{he}$ has the form
$S_{he} = \sum_{i=1}^{r} \hat v_i \hat v_i^\top =  \sum_{i=1}^{r} \hat\lambda_i \hat u_i \hat u_i^\top ,$
where $\hat u_i$ are orthonormal, and the $\hat\lambda_i \ge 0$ are sorted in decreasing order. Based on our above discussion, we define the \emph{scaled} covariance matrix as 
\beq
\label{scaled_cov}
S_s = \sum_{i=1}^{r} \hat \alpha_i \hat v_i \hat v_i^\top,
\eeq
where $\hat \alpha_i$ is defined in \eqref{alpha_def}, with $\hat s^2 = \hat s^2_i = s^2(\hat \ell_i;\gamma)$. This concludes our methodology for covariance estimation. We use the terminology $e$PCA for the eigendecomposition of the covariance matrix estimator \eqref{scaled_cov}. Both the eigenvalues and the eigenvectors of this estimator are different from those of the sample covariance matrix.

$e$PCA is summarized in Alg. \ref{cov_est_alg}. Clearly, $e$PCA is applicable when the variables $x(i)$ have known non-identical distributions, which the modification that homogenization should be done by the mean-variance map of the distribution of each particular coordinate. As discussed at the beginning of Sec. \ref{eval_shr}, we assume here that we have a guess $r$ for the number of PCs. In exploratory analyses, one can often try several choices for $r$. While there are many formal methods for choosing the rank $r$ \citep[see e.g.,][]{jolliffe2002principal}, it is beyond our scope to investigate them in detail here. %(see Sec. \ref{future_work}).

\subsubsection{Simulations with $e$PCA}

We report the results of a simulation study with $e$PCA.  We simulate data $Y_i$ from the Poisson model $Y_i \sim \Po_p(X_i)$, where the mean parameters are $X_i = u + z_i \ell^{\sfrac{1}{2}} v$, the $z_i$ are i.i.d.~unit variance random variables uniformly distributed on $[-\sqrt{3},\sqrt{3}]$, and $u \in \mathbb{R}^{p}$ has entries $u(i)$ sorted in increasing order on a uniform grid on $[1,3]$, while $v \in \mathbb{R}^{p}$ has entries $v(i)$ sorted in increasing order on a uniform grid on $[-1,1]$, standardized so that $\|v\|^2 = 1$. We take the dimension $p = 500$, and $\gamma = \sfrac{1}{2}$, so $n=1000$. The phase transition occurs when the spike is $\ell  = \sqrt{\gamma}/v^\top \diag[u]^{-1} v \approx 1.2$. We vary the spike strength $\ell$ on a uniform grid of size 20 on $[0,3]$. We generate $n_M = 100$ independent Monte Carlo trials, and compute the mean of the heterogenized spike estimator $\hat t = \|\hat v\|^2$ and the $e$PCA---or scaled---estimator $\hat t(\hat \alpha) = \hat\alpha \|\hat v\|^2$.

\begin{figure}[h]
\centering
\begin{subfigure}{.33\textwidth}
  \centering
  \includegraphics[scale=0.3]{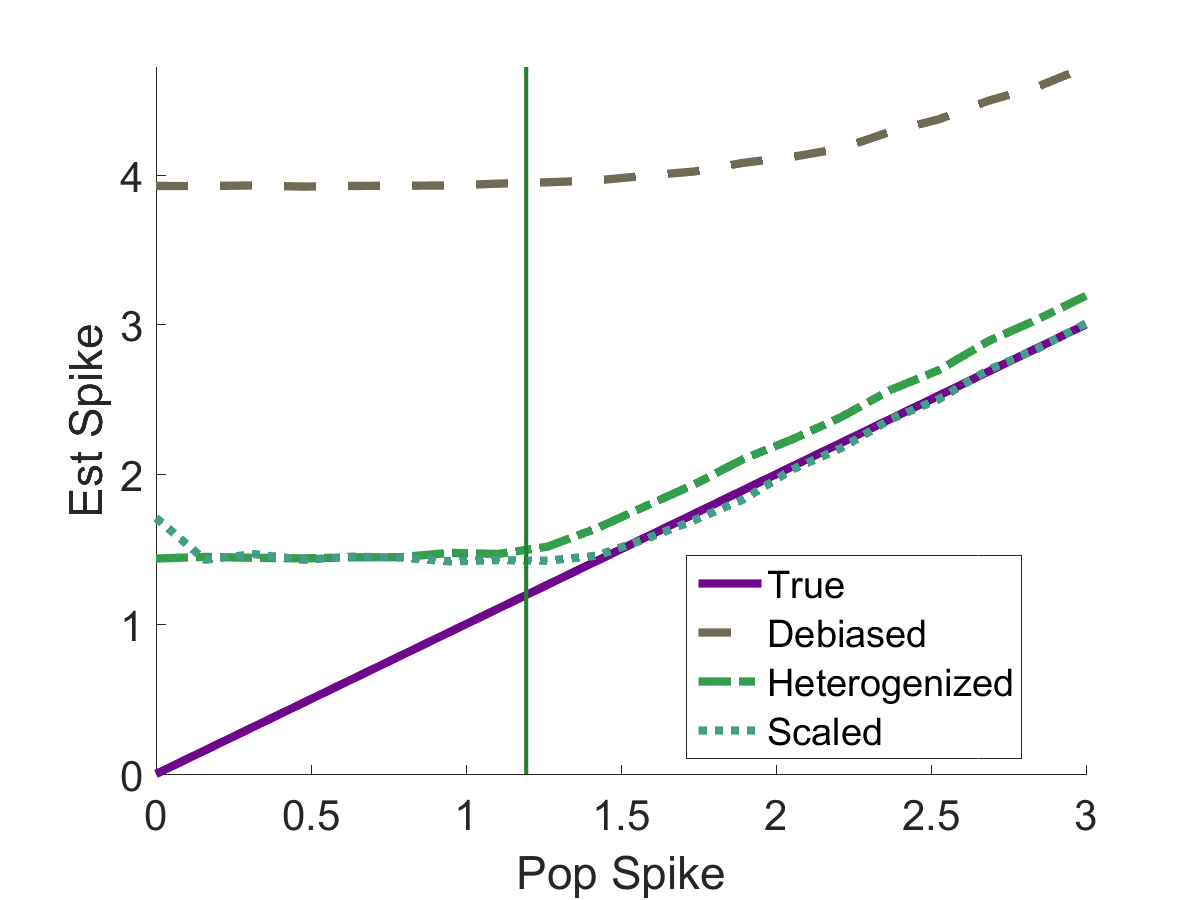}
\end{subfigure}%
\begin{subfigure}{.33\textwidth}
  \centering
  \includegraphics[scale=0.3]{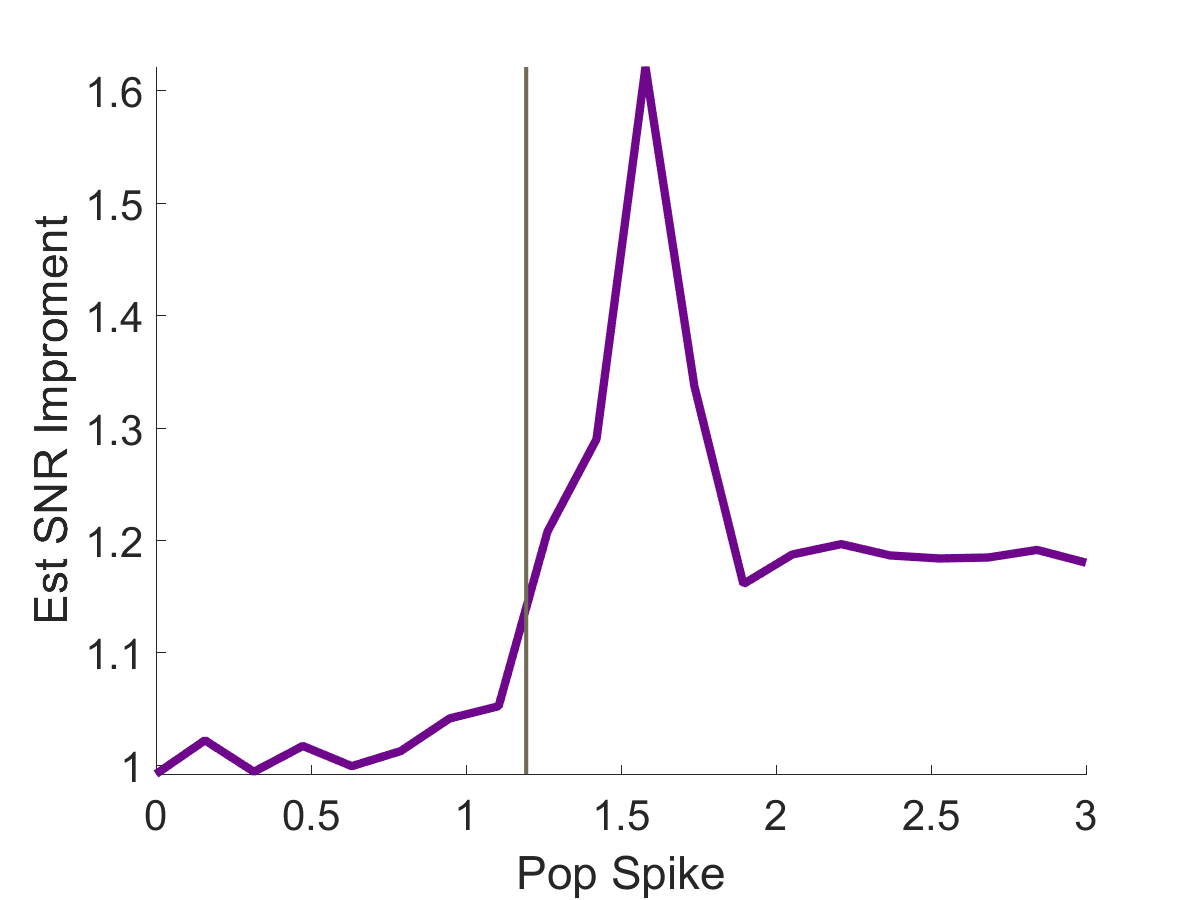}
\end{subfigure}
\begin{subfigure}{.33\textwidth}
  \centering
	\includegraphics[scale=0.3]{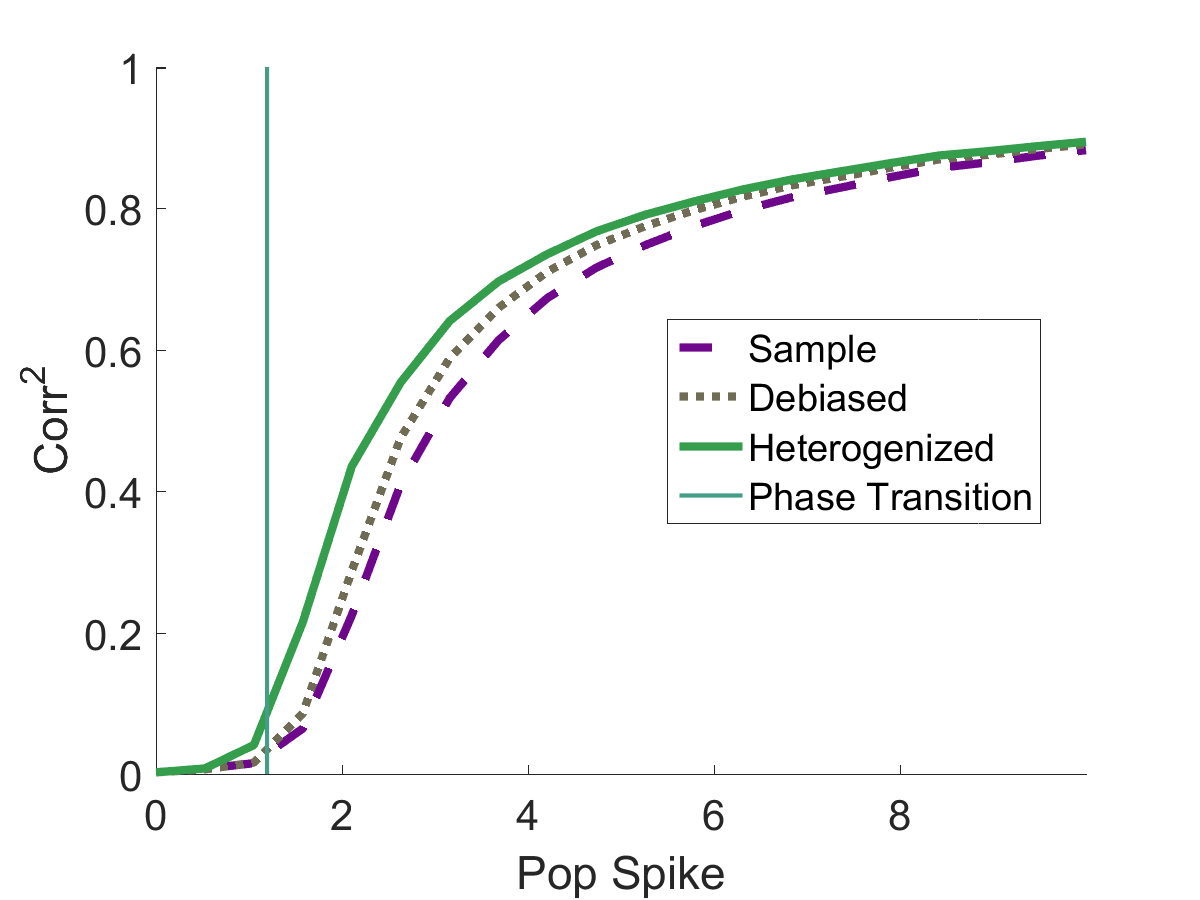}
\end{subfigure}
\caption{Simulation with $e$PCA. Left: Spike estimation; true, debiased, heterogenized, and scaled ($e$PCA). Middle: Estimated improvement in SNR due to homogenization. Right: Squared correlation beween $v$ and leading eigenvector of various covariance estimates; sample, debiased, heterogenized ($e$PCA). Plotted against the spike.}
\label{improved_spike_estim}
\end{figure}

The results displayed in Fig. \ref{improved_spike_estim} (left) show that the $e$PCA/scaled estimator (top eigenvalue of $S_s$) reduces the bias of the heterogenized estimator (top eigenvalue of $S_{he}$) especially for large spikes. Both are much better than the debiased estimator (top eigenvalue of $S_d$). Below the phase transition (vertical line), both estimators have the same approximate value. %We also observed that in very high dimensional regimes where $p > n$, the heterogenized eigenvalue estimates have lower variance compared to debiased eigenvalue estimates.

 We can also define an estimator of the improvement in SNR $\I$, as $\hat \I(\hat \alpha)$. The mean of this estimator over the same simulation is displayed in Fig. \ref{improved_spike_estim} (middle). We observe that it is approximately unity below the PT. This makes sense, because the spike is below the PT both before and after homogenization. The improvement in SNR has a ``jump'' just above the PT, because the spike pops out from the bulk after homogenization. This is where homogenization helps the most. However, $\hat \I$ is not ``infinitely large'', because the signal is detectable in the unhomogenized spectrum, except it is spread across all eigenvalues \citep[see e.g.,][]{dobriban2016sharp}. Finally, $\hat \I(\hat \alpha)$ drops to a lower value, still above unity, and stabilizes. We find this an illuminating way to quantify the improvement due to homogenization.

Finally, we also display the mean of the squared correlation between the true and empirical eigenvectors of various covariance estimators in figure \ref{improved_spike_estim} (right). The predicted PT matches the empirical PT. The $e$PCA eigenvector---top eigenvector of $S_s$---in this case agrees with the eigenvector of the heterogenized covariance matrix $S_{he}$, because both are of rank one. $e$PCA has the highest correlation, and the improvement is significant just above the PT.

\subsection{Homogenization agrees with HWE normalization}
\label{white_hwe}
It is of special interest that for Binomial(2) data, and specifically for biallelic genetic markers such as Single Nucleotide Polymorphisms, our homogenization method recovers exactly the well-known normalization assuming Hardy-Weinberg equilibrium (HWE). In these datasets the entries $X_{ij}$ are counts ranging from 0 to 2 denoting the number of copies of the variant allele of biallelic marker $j$ in the genome of individual $i$. The HWE normalization divides the entries of SNP $j$ by $\sqrt{2 \hat p_j (1-\hat p_j)}$, where $\hat p_j = (2n)^{-1} \sum_i X_{ij}$ is the estimated allele frequency of variant $j$ \citep[e.g.,][p. 2075]{patterson2006population}. It is easy to see that this is exactly the same as our homogenization method assuming that the individual data points $X_{ij}$ are Binomial$(2)$-distributed. 

Previously, the HWE normalization was motivated by a connection to genetic drift, and by the empirical observation that it improves results on observational and simulated data \citep[][p. 2075]{patterson2006population}. Our theoretical results justify HWE normalization. In particular, our Thm. \ref{MP_white} suggests that the Marchenko-Pastur is an accurate null distributions after homogenization. Numerical results also suggest that the approximations to both the MP law and the Tracy-Widom distribution for the top eigenvalue are more accurate than after standardization (data not shown for space reasons). In addition, our result on the improved SNR (Prop. \ref{prop_snr_wh}) suggests that ``signal'' becomes easier to identify after homogenization.

However, in practice we often see similar results with homogenization and standardization. In many SNP datasets, the variants not approximately in HWE---i.e., the variants for which a goodness of  fit test to a Binomial$(2)$ distribution is rejected---are removed as part of data quality control. Therefore, most remaining SNPs have an empirical distribution well fit by a Binomial$(2)$. In such cases standardization and homogenization lead to similar results. 

%\subsection{Multiple exponential families}
%
%Our methodology can handle observations for which the different coordinates have different exponential family noise. For instance, some coordinates may be Gaussian, others may be Binomial, and yet others may be Poisson. This is important in applications where heterogeneous datasets are integrated. For instance, in genomics, there are many different data types, including Gaussian (log-gene expression level), Binomial count (SNP), and Poisson/negative Binomial (RNA-seq).  Our methodology extends because it only depends on the first two moments of the distributions, while our theoretical results only depend on the analytic properties of the coordinate-wise mean-variance map $V$. Thus both the methods and the theory extend to heterogeneous data. 

\section{Denoising}
\label{denoise}

As an application of $e$PCA, we develop a method to denoise the observed data. Formally the goal of denoising is to predict the noiseless signal vectors $X_i = A'(\theta_i)$.  Our model is a random effects model \citep[see e.g.,][]{searle2009variance}, hence we predict $X_i$ using the Best Linear Predictor---or BLP \citep[][Sec. 7.4]{searle2009variance}.  Let $\tilde\E(X|Y) = BY +C$ denote the minimum MSE linear predictor of the random vector $X$ using $Y$, where $B$ is a deterministic matrix, and $C$ is a deterministic vector. This is known under various names, including the \emph{Wiener filter}, see Sec. \ref{contrib}. We will refer to it as the BLP, which is the common terminology in random effects models.
 It is well known \citep[e.g.,][Sec. 7.4]{searle2009variance} that
\begin{equation*}
 B = \Sigma_x \left[\diag[\E A''(\theta)] + \Sigma_x\right]^{-1} \text{ and } 
 C = \diag[\E A''(\theta)] \left[\diag[\E A''(\theta)]+ \Sigma_x\right]^{-1} \E A'(\theta).
\end{equation*}

The BLP depends on the unknown parameters $\Sigma_x$, $\diag[\E A''(\theta)]$,  and $\E[A'(\theta)]$. The standard strategy, known as \emph{Empirical BLP} or EBLP \citep[e.g.,][]{searle2009variance} is to estimate these unknown parameters using the entire dataset, and denoise the vectors $Y_i$ by plug-in: 
\begin{equation*}
 \hat X_i = \hat \Sigma_x \left[\diag[\hat \E A''(\theta)] + \hat \Sigma_x\right]^{-1} Y_i + \diag[\hat \E A''(\theta)] \left[\diag[\hat \E A''(\theta)]+ \hat \Sigma_x\right]^{-1}\bar Y.
\end{equation*}
We will use $e$PCA, i.e., the scaled covariance matrix $S_s$ proposed in \eqref{scaled_cov} to estimate $\Sigma_x$. As before in Sec. \ref{diag_deb}, we will use the sample mean $\bar Y$ to estimate $\E[A'(\theta)]$, and $V(\bar Y)$ to estimate the noise variances $\E A''(\theta)$. However, in principle different estimators could be used. 

For the Poisson distribution, we have
$$\hat X_i=S_{s} \left(\diag[\bar Y] + S_{s} \right)^{-1} \hat Y_i +\diag[\bar Y]\left(\diag[\bar Y] + S_{s} \right)^{-1} \bar Y.$$ 
In some examples there are coordinates where $\bar Y(j)=0$. In our XFEL application this corresponds to pixels where no photon was observed during the entire experiment. This causes a problem because the matrix $\hSigma = \diag[\bar Y] + S_{s}$ may no longer be invertible: $S_s$ is of low rank, while $\diag[\bar Y]$ is also not of full rank. To avoid this problem, we implement a ridge-regularized covariance estimator $\hSigma_\ep = (1-\ep) \hSigma + \ep \cdot \tilde m I_p$ as in \cite{ledoit2004well}, where $\tilde m = \tr \hSigma/p$ and $\ep>0$ is a small constant. Note that $\smash{\tr \hSigma_\ep = \tr \hSigma}$. The ridge-regularized estimator $\smash{\hSigma_\ep}$ has a small bias, but is invertible. In our default implementation we take $\ep = 0.1$%, corresponding to a ``10\% total energy regularization'' in the XFEL application
. Similar results are achieved in our XFEL application for $\ep$ in the range of 0.05--0.2. In new applications we suggest that the user try this range of $\ep$ and choose one based on empirical performance. The same method can be implemented for any exponential family. Another potential solution to the invertibility problem---not pursued here---is to discard the pixels with $\bar Y(j)=0$.

\section{Experiments}

We apply $e$PCA to a simulated XFEL dataset, and an empirical genetics dataset, comparing with PCA.

\subsection{XFEL images}\label{xfel}

We simulate $n_0=70,000$
noiseless XFEL diffraction intensity maps of a lysozyme (Protein Data Bank  1AKI) with Condor \citep{maia2016condor}.  We rescale the average pixel intensity to 0.04 such that shot noise dominates, following previous work \citep[e.g.,][]{Schwander:12}. To sample an arbitrary number $n$ of noisy diffraction patterns, we sample an intensity map at random, and then sample the photon count of each detector pixel from a Poisson distribution whose mean is the pixel intensity. The images are 64 pixels by 64 pixels, so $p=4096$. Figure \ref{fig:xfel} illustrates the intensity maps and the resulting noisy diffraction patterns.

\subsubsection{Covariance estimation} For covariance estimation, we vary the sample size $n$ in the range $3\le \log_{10}(n) \le 5$. We fix the rank of each estimator to be 10, though other choices lead to similar results. The diagonally debiased, heterogenized, and scaled covariance estimates $S_d$, $S_{he}$, $S_s$ each improve on the sample covariance $S$ (Fig. \ref{xfel_err_cov_est}) in MSE. The largest improvement is due to diagonal debiasing, but scaling leads to the smallest MSE. 

\begin{figure}[h]
	\centering
	\begin{subfigure}{.5\textwidth}
		\centering
		\includegraphics[scale= 0.44, trim = 10 0 10 0, clip]{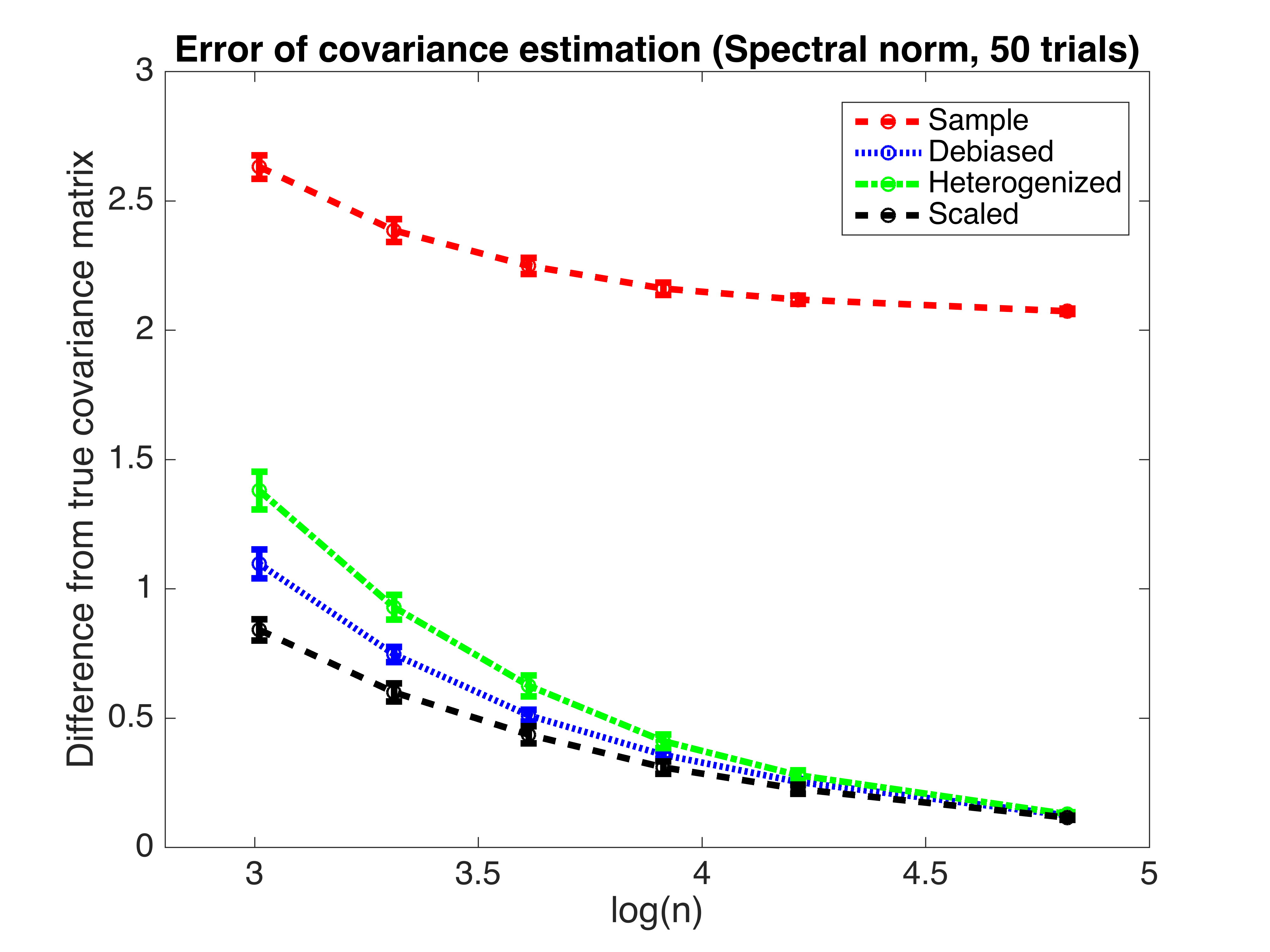} 
	\end{subfigure}%
	\begin{subfigure}{.5\textwidth}
		\centering
		\includegraphics[scale= 0.44, trim = 10 0 10 0, clip]{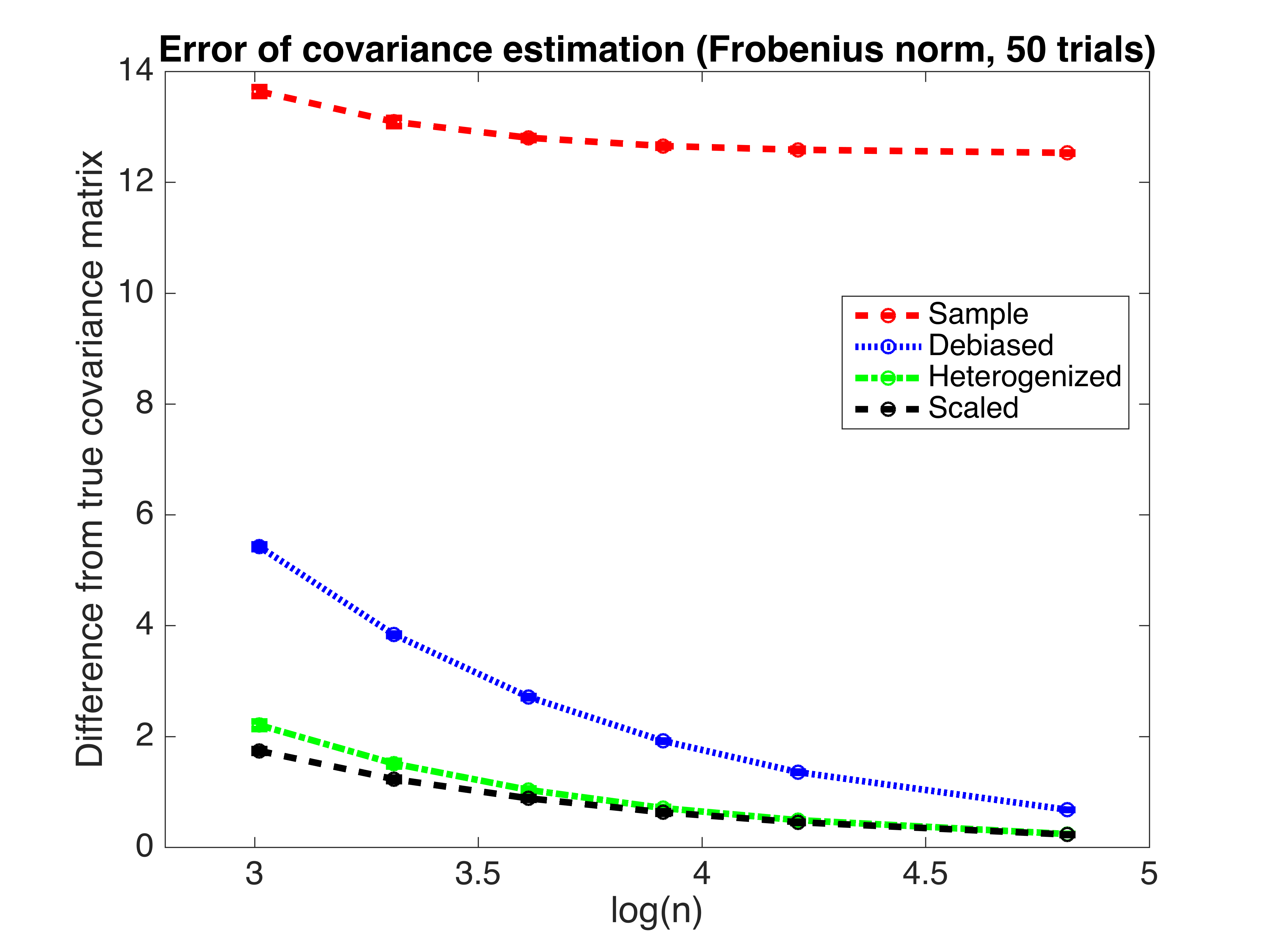} 
	\end{subfigure}
	\caption{Error of covariance matrix estimation, measured as the spectral norm (left) and Frobenius norm (right) of the difference between each covariance estimate (Sample, Debiased, Heterogenized, Scaled) and the true covariance matrix. 
		}
	\label{xfel_err_cov_est}
\end{figure}

\begin{figure}[h]
	\centering
	\includegraphics[width = 1.05\textwidth, trim = 170 330 70 10, clip ]{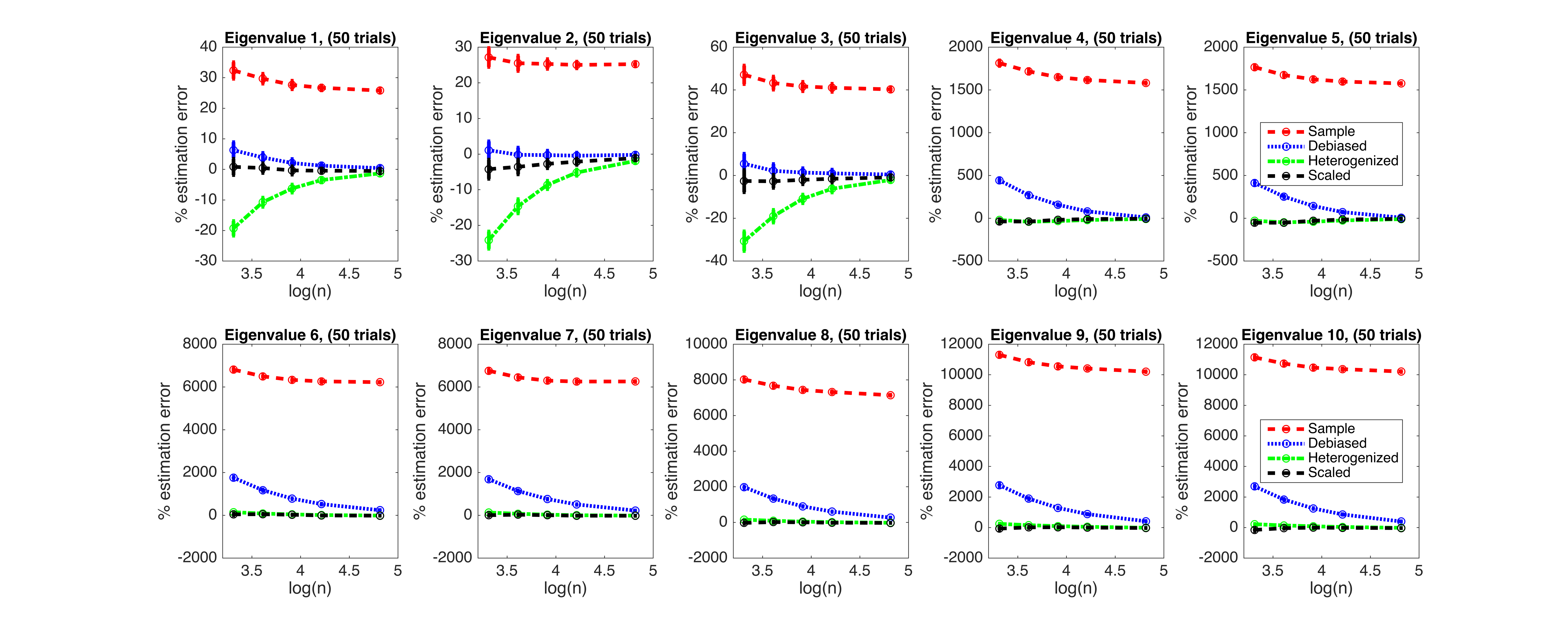} 
	\caption{Error of eigenvalue estimation for the top 5 eigenvalues, measured as percentage error relative to the true eigenvalue, for XFEL data. We plot the mean and standard deviation (as error bars) over 50 Monte Carlo trials.} \label{xfel_err_eval_est}
\end{figure}

Figure \ref{xfel_err_eval_est}  summarizes the error of eigenvalue estimation. The $e$PCA eigenvalues are indeed much closer to the true eigenvalues than the eigenvalues of the debiased or sample covariance matrices $S_d$ or $S$. The estimation error for ePCA eigenvalues is small regardless of sample size. 

We visualize the eigenvectors (or eigenimages) for XFEL diffraction patterns in Figure \ref{eigenpatterns}. The $e$PCA eigenvectors---those of the heterogenized matrix $S_{he}$---accurately estimate two more eigenimages with small eigenvalues than alternative methods. This shows that $e$PCA significantly improves on PCA for covariance estimation in XFEL data.

\begin{figure}[h]
	\centering
	\includegraphics[width = 1.05\textwidth, trim = 90 164 90 55, clip, right]{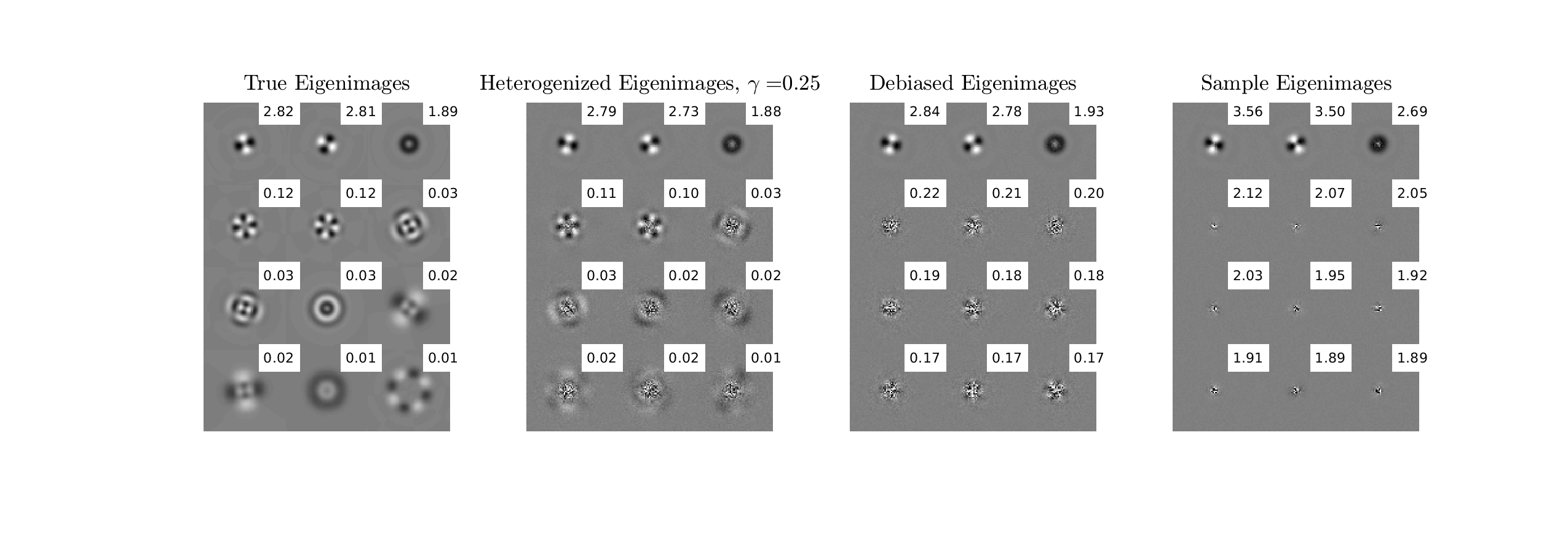}
	\caption{XFEL Eigenimages for $\gamma =  \sfrac{1}{4} $, ordered by eigenvalue} \label{eigenpatterns}
\end{figure}

The $e$PCA/heterogenized eigenvectors 1 to 2 in Figure \ref{eigenpatterns} appear misaligned with the corresponding true eigenvectors.  A likely explanation is that the top eigenvectors have similar eigenvalues, leading to some reordering and rotation in the estimated eigenvectors. 
%This affects the alignment of the estimated eigenvectors and the true eigenvectors. 
Therefore, we also report the error of estimating the overall low-rank subspace, for rank $r=10$, measured as the estimation MSE of the projection matrix $U_rU_r^T$. Other values of $r$ lead to comparable results. Figure \ref{fig:project_fro} clearly shows that the $e$PCA/heterogenized covariance matrix best estimates the low-rank subspace inhabited by the clean data. %Moreover, it is significantly more efficient statistically, improving faster with more data than the two alternative estimates. 

\begin{figure}[h]
	\centering
		\begin{subfigure}[b]{.48\textwidth}
				\centering
				\includegraphics[width = \textwidth, trim = 20 5 15 10, clip]{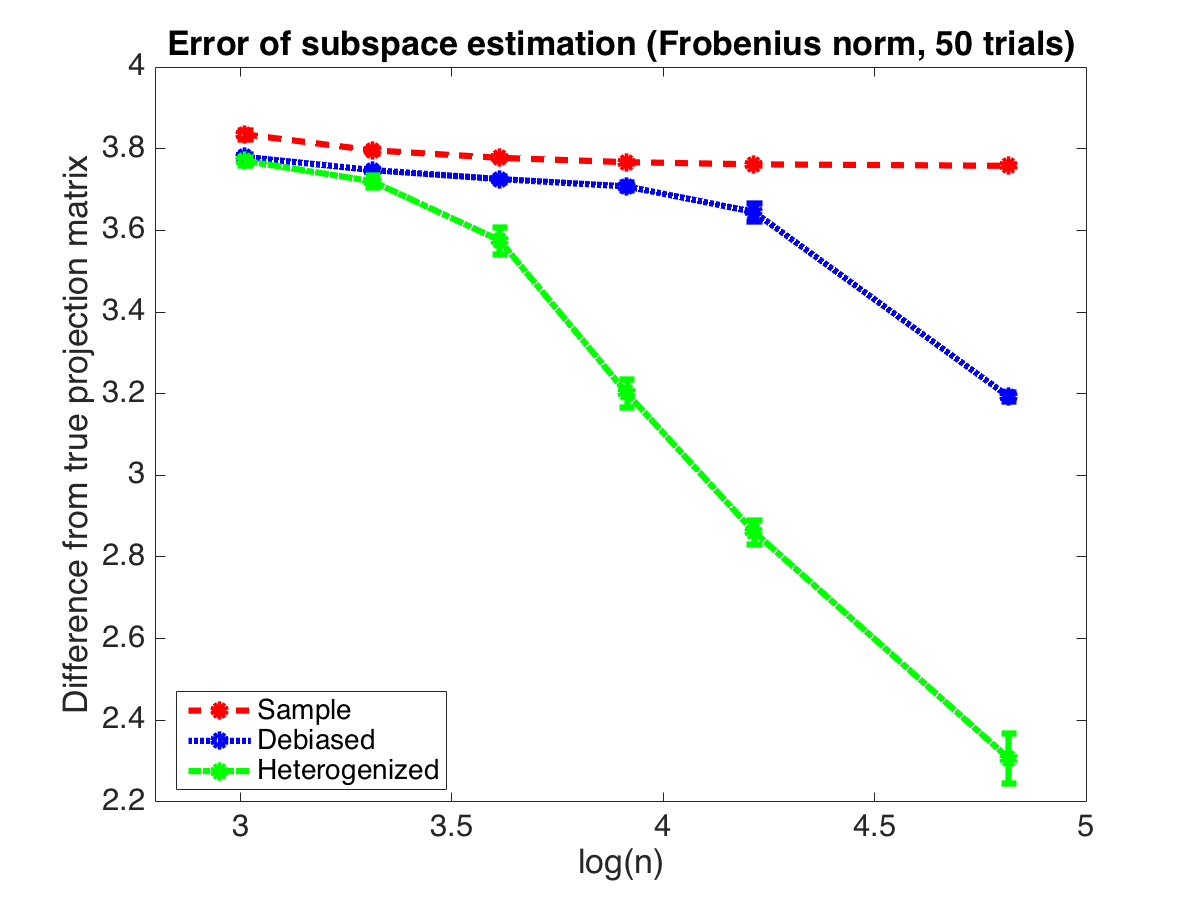} 
				\caption{}\label{fig:project_fro}
		\end{subfigure} 
%	\begin{subfigure}{.48\textwidth}
%		\centering
%		\includegraphics[width = 1.1\textwidth, trim = 10 0 10 10, clip]{denoising_compare2_rank12.png} 
%		\caption{ }\label{denoisingcompare_xfel}
%	\end{subfigure}
		\begin{subfigure}[b]{.48\textwidth}
\centering
\includegraphics[width = 1.1\textwidth]{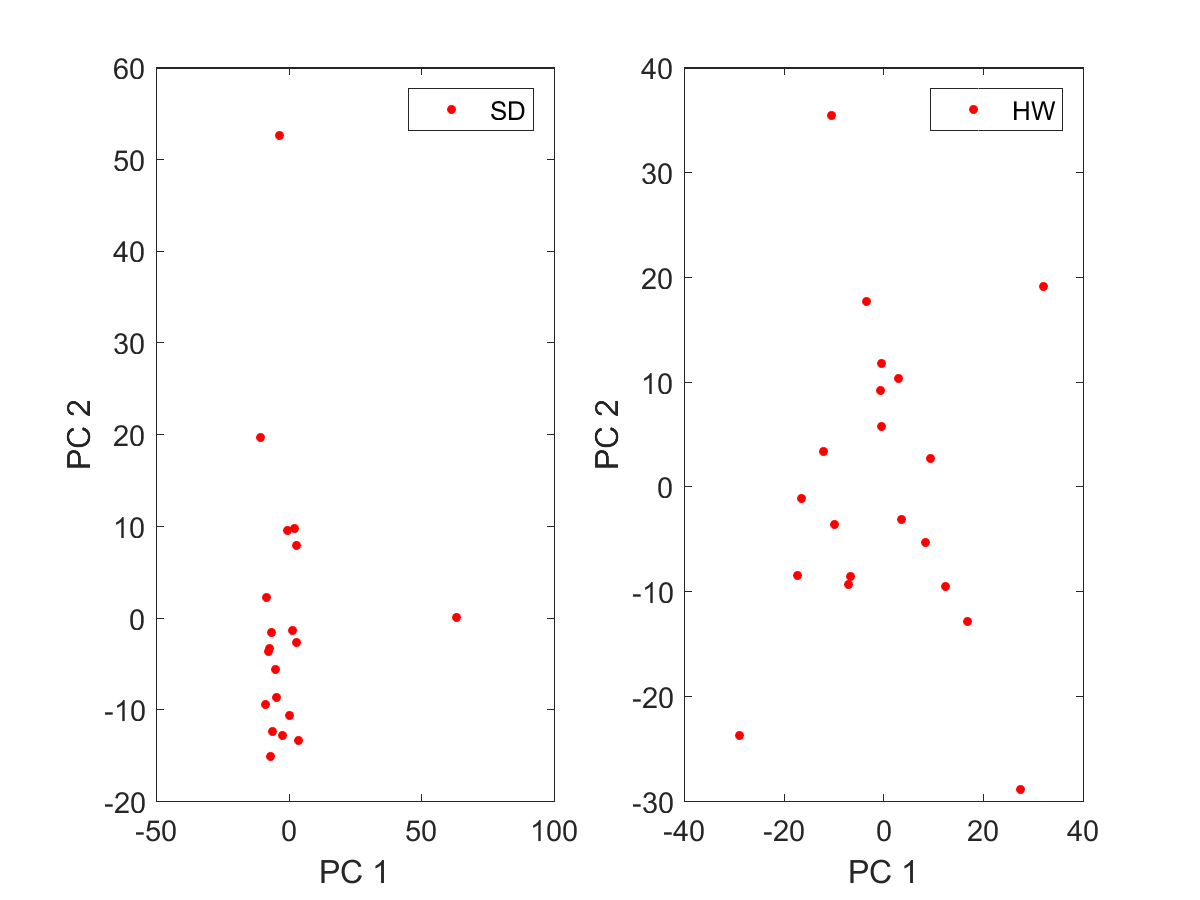}
\caption{}\label{hgdp}
	\end{subfigure}
	\caption{a) Subspace estimation error for XFEL data. We plot the mean and standard deviation (as error bars) over 50 Monte Carlo trials. b) HGDP dataset: PC scores of 20 CEU samples after standardization (SD, left) and homogenization/HWE normalization (HW, right).}
\end{figure}

%As a special note, our final scaling step $S_s$ does not change the eigenvectors compared to heterogenization $S_{he}$, so the heterogenization and scaling are equivalent for all purely eigenvector-dependent measures of error (such as subspace estimation). In those cases heterogenization is identical to $e$PCA.

\subsubsection{Denoising}\label{sec:denoising}

Finally, we report the results of denoising the XFEL patterns. We compare ``PCA denoising'' or ``vanilla projection'', i.e., orthogonal projection onto sample or $e$PCA/heterogenized eigenimages; and EBLP denoising. PCA denoising results in clear artifacts, while the reconstructions after EBLP denoising are always the closest to the clean images (Fig. \ref{denoisingimages_xfel}). In EBLP denoising, our scaled covariance matrix leads to much better results than the sample covariance matrix.  EBLP also does better when measured by reconstruction mean squared error, MSE $:= (pn)^{-1}\sum_{i=1}^n\| \hat{X_i}-X_i \|^2 $.
%$This underscores that it is important to bias-correct before plugging in to the optimal BLP formulas. 
%The EBLP denoiser is also better than both alternatives as measured by mean squared error $pn^{-1}\sum_{i=1}^n\| \hat{Y_i}-X_i \|^2 $, see Fig. \ref{denoisingcompare_xfel}.% Notably, the advantage of EBLP is greater in the high dimensional regime. 

\begin{figure}[t]
	\centering
	\includegraphics[width = 1\textwidth, trim = 70 195 50 150, clip]{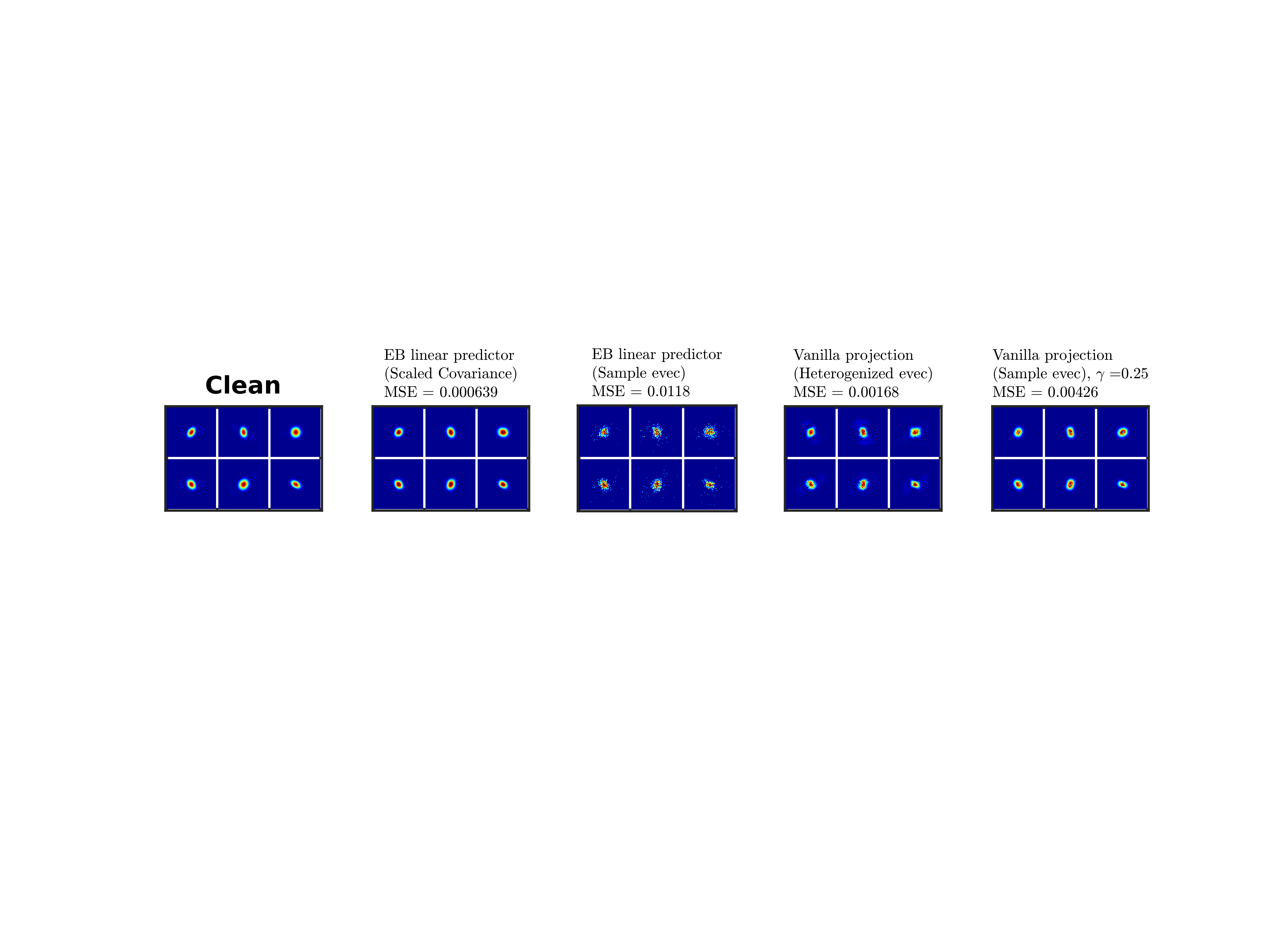} 
	\caption{Sampled reconstructions using the XFEL dataset
		($n = 16,384$; $p = 4096$), fixing the rank of covariance estimates at $r=10$. Color scale of each reconstruction clipped to match that of clean images. } \label{denoisingimages_xfel}
\end{figure}

We also compare $e$PCA to the exponential family PCA method based on alternating minimization proposed by \citet{Collins2001} in Figure \ref{denoisingcompare_xfel}. $e$PCA is faster and recovers the images with higher accuracy, as measured by MSE (see the caption of Figure \ref{denoisingcompare_xfel}). Our experiments with variance stabilizing transforms, such as the Anscombe \citep{anscombe1948} and Freeman-Tukey transforms \citep{freeman1950}, all gave denoising results significantly worse than standard PCA (results not shown due to space limitations). This may be because the known inverse transforms \citep[e.g.,][]{Makitalo2011} are ineffective in the photon-limited regime.

\begin{figure}[t]
	\centering
	\includegraphics[width = .9\textwidth, trim = 70 190 40 155, clip]{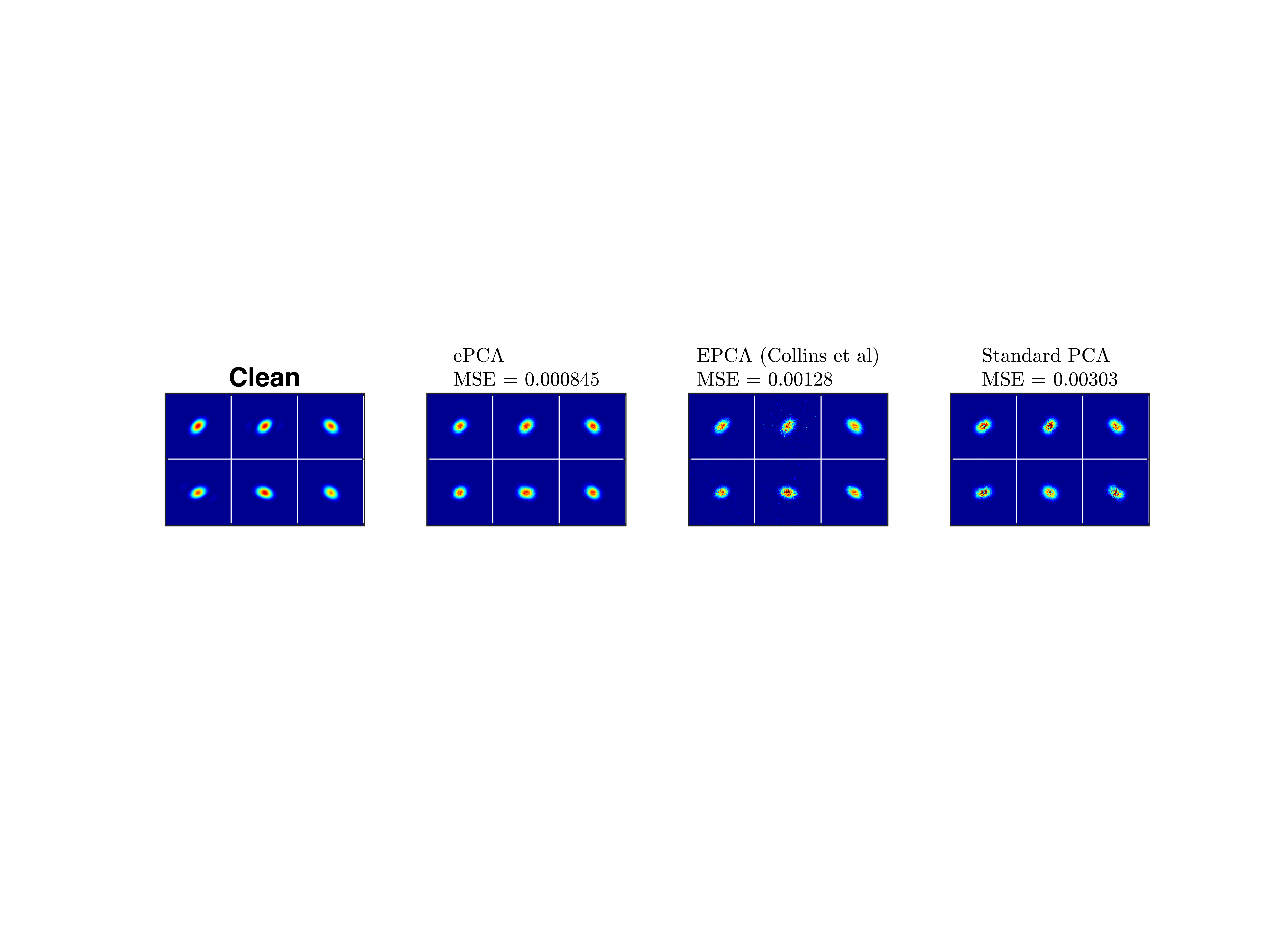} 
	\caption{Comparing various methods' sampled reconstructions of  the XFEL dataset ($n = 1000$; $p = 4096$), fixing the rank estimate for each method to $r=8$. For reference, the MSE for noisy images is 0.0401. We also note that $e$PCA took 13.9 seconds, while \citet{Collins2001}'s exponential family PCA took 10900 seconds, or 3 hours, to finish running on a 2.7 GHz Intel Core i5 processor. } \label{denoisingcompare_xfel}
\end{figure}

\subsection{HGDP dataset}\label{hgdp_sec}

We also apply $e$PCA to a subset of the Human Genome Diversity Project (HGDP) dataset \citep{li2008worldwide}, which contains Single Nucleotide Polymorphism (SNP) markers obtained from human samples. We obtained a homogeneous random set of $n=20$ Caucasian samples from the CEU cohort, typed on $p=120,631$ SNPs. We removed SNPs that showed no variability, with $p' = 107,026$ SNPs remaining.  For each SNP we imputed missing data as the mean of the available samples. We then computed the PC scores starting from two covariance matrices: (1) the one obtained after usual standardization of each feature to have unit norm, and (2) $S_{h}$ obtained by using our homogenization method, which in this case agrees with HWE normalization as defined in e.g., \cite{patterson2006population} (see Sec. \ref{white_hwe}). In Fig. \ref{hgdp} we see that homogenization/HWE normalization apparently leads to a clearer structure in the PC scores than standardization. Two samples on the standardized PC scores appear to be extreme outliers, but our data is a homogeneous random sample and we do not expect outliers. This suggests that standardization is more sensitive to outliers or artifacts. %However, the difference between the two methods is relatively small. 
These results are in line with the existing empirical observations about the superiority of HWE normalization \citep{patterson2006population}.

\section*{Acknowledgements} 
The authors are grateful to Yuval Kluger and Art Owen for helpful comments on an earlier version of the manuscript.  They wish to thank Joey Arthur, Nick Patterson, Kris Sankaran, Joel Tropp, Ramon van Handel, Jingshu Wang, Teng Zhang, and Jane Zhao for valuable discussions. They thank Filipe Maia, Max Hantke, and Benjamin Rose for help with software. 
They thank Patrick Perry for pointing out GLLVMs.
A. S. was partially supported by Award Number
R01GM090200 from the NIGMS, FA9550-12-1-0317 from AFOSR, Simons Foundation Investigator Award and Simons Collaboration on Algorithms and Geometry, and the Moore Foundation Data-Driven Discovery Investigator Award.
E. D. was partially supported by NSF grant DMS-1407813, and by an HHMI International Student Research Fellowship.

%\pagebreak
{\small
\setlength{\bibsep}{0.2pt plus 0.3ex}
\bibliographystyle{plainnat-abbrev}
\bibliography{PACM2015}

\begin{thebibliography}{54}
\providecommand{\natexlab}[1]{#1}
\providecommand{\url}[1]{\texttt{#1}}
\expandafter\ifx\csname urlstyle\endcsname\relax
  \providecommand{\doi}[1]{doi: #1}\else
  \providecommand{\doi}{doi: \begingroup \urlstyle{rm}\Url}\fi

\bibitem[Anders and Huber(2010)]{anders2010differential}
S.~Anders and W.~Huber.
\newblock Differential expression analysis for sequence count data.
\newblock \emph{Genome Biology}, 11\penalty0 (10):\penalty0 1, 2010.

\bibitem[Anderson(2003)]{anderson1958introduction}
T.~W. Anderson.
\newblock \emph{An Introduction to Multivariate Statistical Analysis}.
\newblock Wiley New York, 2003.

\bibitem[Anscombe(1948)]{anscombe1948}
F.~J. Anscombe.
\newblock The transformation of poisson, binomial and negative-binomial data.
\newblock \emph{Biometrika}, 35\penalty0 (3-4):\penalty0 246, 1948.

\bibitem[Bai and Silverstein(2009)]{bai2009spectral}
Z.~Bai and J.~W. Silverstein.
\newblock \emph{Spectral analysis of large dimensional random matrices}.
\newblock Springer Series in Statistics. Springer, 2009.

\bibitem[Baik and Silverstein(2006)]{baik2006eigenvalues}
J.~Baik and J.~W. Silverstein.
\newblock Eigenvalues of large sample covariance matrices of spiked population
  models.
\newblock \emph{Journal of Multivariate Analysis}, 97\penalty0 (6):\penalty0
  1382--1408, 2006.

\bibitem[Baik et~al.(2005)Baik, Ben~Arous, and P{\'e}ch{\'e}]{baik2005phase}
J.~Baik, G.~Ben~Arous, and S.~P{\'e}ch{\'e}.
\newblock Phase transition of the largest eigenvalue for nonnull complex sample
  covariance matrices.
\newblock \emph{Annals of Probability}, 33\penalty0 (5):\penalty0 1643--1697,
  2005.

\bibitem[Basri and Jacobs(2003)]{Basri2003}
R.~Basri and D.~W. Jacobs.
\newblock {Lambertian Reflectance and Linear Subspaces}.
\newblock \emph{IEEE Transactions on Pattern Analysis and Machine
  Intelligence}, 25\penalty0 (2):\penalty0 218--233, 2003.

\bibitem[Benaych-Georges and Nadakuditi(2011)]{benaych2011eigenvalues}
F.~Benaych-Georges and R.~R. Nadakuditi.
\newblock The eigenvalues and eigenvectors of finite, low rank perturbations of
  large random matrices.
\newblock \emph{Advances in Mathematics}, 227\penalty0 (1):\penalty0 494--521,
  2011.

\bibitem[Bhamre et~al.(2016)Bhamre, Zhang, and Singer]{bhamre2016denoising}
T.~Bhamre, T.~Zhang, and A.~Singer.
\newblock Denoising and covariance estimation of single particle cryo-{EM}
  images.
\newblock \emph{Journal of Structural Biology}, 195\penalty0 (1):\penalty0
  72--81, 2016.

\bibitem[Bigot et~al.(2016)Bigot, Deledalle, and
  F{\'e}ral]{bigot2016generalized}
J.~Bigot, C.~Deledalle, and D.~F{\'e}ral.
\newblock Generalized {SURE} for optimal shrinkage of singular values in
  low-rank matrix denoising.
\newblock \emph{arXiv preprint arXiv:1605.07412}, 2016.

\bibitem[Boucheron et~al.(2005)Boucheron, Bousquet, Lugosi, and
  Massart]{Boucheron2005}
S.~Boucheron, O.~Bousquet, G.~Lugosi, and P.~Massart.
\newblock {Moment inequalities for functions of independent random variables}.
\newblock \emph{Annals of Probability}, 33\penalty0 (2):\penalty0 514--560,
  2005.

\bibitem[Cao and Xie(2014)]{Cao2014}
Y.~Cao and Y.~Xie.
\newblock Low-rank matrix recovery in {Poisson} noise.
\newblock In \emph{Signal and Information Processing (GlobalSIP), 2014 IEEE
  Global Conference on}, pages 384--388. IEEE, 2014.

\bibitem[Collins et~al.(2001)Collins, Dasgupta, and Schapire]{Collins2001}
M.~Collins, S.~Dasgupta, and R.~Schapire.
\newblock {A generalization of principal component analysis to the exponential
  family}.
\newblock \emph{Advances in Neural Information Processing Systems (NIPS)},
  2001.

\bibitem[Dobriban(2015)]{dobriban2015efficient}
E.~Dobriban.
\newblock Efficient computation of limit spectra of sample covariance matrices.
\newblock \emph{Random Matrices: Theory and Applications}, 04\penalty0
  (04):\penalty0 1550019, 2015.

\bibitem[Dobriban(2016)]{dobriban2016sharp}
E.~Dobriban.
\newblock Sharp detection in {PCA} under correlations: all eigenvalues matter.
\newblock \emph{arXiv preprint arXiv:1602.06896, to appear in The Annals of
  Statistics}, 2016.

\bibitem[Donoho et~al.(2013)Donoho, Gavish, and Johnstone]{Donoho2013}
D.~Donoho, M.~Gavish, and I.~Johnstone.
\newblock {Optimal shrinkage of eigenvalues in the Spiked Covariance Model}.
\newblock \emph{arXiv preprint arXiv:1311.0851}, 0906812:\penalty0 1--35, 2013.

\bibitem[Favre-Nicolin et~al.(2015)Favre-Nicolin, Baruchel, Renevier, Eymery,
  and Borb{\'e}ly]{favre2015xtop}
V.~Favre-Nicolin, J.~Baruchel, H.~Renevier, J.~Eymery, and A.~Borb{\'e}ly.
\newblock {XTOP: high-resolution X-ray diffraction and imaging}.
\newblock \emph{Journal of Applied Crystallography}, 48\penalty0 (3):\penalty0
  620--620, 2015.

\bibitem[Freeman and Tukey(1950)]{freeman1950}
M.~F. Freeman and J.~W. Tukey.
\newblock {Transformations Related to the Angular and the Square Root}.
\newblock \emph{Ann. Math. Statist.}, 21\penalty0 (4):\penalty0 607--611, 12
  1950.

\bibitem[Furnival et~al.(2016)Furnival, Leary, and Midgley]{Furnival2016}
T.~Furnival, R.~K. Leary, and P.~A. Midgley.
\newblock Denoising time-resolved microscopy image sequences with singular
  value thresholding.
\newblock \emph{Ultramicroscopy}, 2016.
\newblock ISSN 0304-3991.

\bibitem[Hantke et~al.(2016)Hantke, Ekeberg, and Maia]{maia2016condor}
M.~F. Hantke, T.~Ekeberg, and F.~R. N.~C. Maia.
\newblock Condor: A simulation tool for flash x-ray imaging.
\newblock \emph{Journal of Applied Crystallography}, 49\penalty0 (4):\penalty0
  1356--1362, 2016.

\bibitem[Huber et~al.(2004)Huber, Ronchetti, and
  Victoria-Feser]{huber2004estimation}
P.~Huber, E.~Ronchetti, and M.-P. Victoria-Feser.
\newblock Estimation of generalized linear latent variable models.
\newblock \emph{Journal of the Royal Statistical Society: Series B (Statistical
  Methodology)}, 66\penalty0 (4):\penalty0 893--908, 2004.

\bibitem[Johnstone(2001)]{johnstone2001distribution}
I.~M. Johnstone.
\newblock On the distribution of the largest eigenvalue in principal components
  analysis.
\newblock \emph{Annals of Statistics}, 29\penalty0 (2):\penalty0 295--327,
  2001.

\bibitem[Jolliffe(2002)]{jolliffe2002principal}
I.~Jolliffe.
\newblock \emph{Principal Component Analysis}.
\newblock Wiley Online Library, 2002.

\bibitem[Josse and Wager(2016)]{josse2016bootstrap}
J.~Josse and S.~Wager.
\newblock Bootstrap-based regularization for low-rank matrix estimation.
\newblock \emph{Journal of Machine Learning Research}, 17\penalty0
  (124):\penalty0 1--29, 2016.

\bibitem[Kam(1980)]{Kam1980}
Z.~Kam.
\newblock {The reconstruction of structure from electron micrographs of
  randomly oriented particles.}
\newblock \emph{Journal of Theoretical Biology}, 82\penalty0 (1):\penalty0
  15--39, 1980.

\bibitem[Kam(1977)]{kam1977determination}
Z.~Kam.
\newblock Determination of macromolecular structure in solution by spatial
  correlation of scattering fluctuations.
\newblock \emph{Macromolecules}, 10\penalty0 (5):\penalty0 927--934, 1977.

\bibitem[Kay(1993)]{kay1993fundamentals}
S.~M. Kay.
\newblock \emph{Fundamentals of Statistical Signal Processing: Estimation
  Theory}, volume~3.
\newblock Prentice Hall, 1993.

\bibitem[Knott and Bartholomew(1999)]{knott1999latent}
M.~Knott and D.~J. Bartholomew.
\newblock \emph{Latent variable models and factor analysis}.
\newblock Edward Arnold, 1999.

\bibitem[Ledoit and Wolf(2004)]{ledoit2004well}
O.~Ledoit and M.~Wolf.
\newblock A well-conditioned estimator for large-dimensional covariance
  matrices.
\newblock \emph{Journal of multivariate analysis}, 88\penalty0 (2):\penalty0
  365--411, 2004.

\bibitem[Lee et~al.(2010)Lee, Zou, and Wright]{lee2010convergence}
S.~Lee, F.~Zou, and F.~A. Wright.
\newblock Convergence and prediction of principal component scores in
  high-dimensional settings.
\newblock \emph{Annals of Statistics}, 38\penalty0 (6):\penalty0 3605--3629,
  2010.

\bibitem[Lehmann and Romano(2005)]{lehmann2005testing}
E.~L. Lehmann and J.~P. Romano.
\newblock \emph{Testing statistical hypotheses}.
\newblock Springer Science \& Business Media, 2005.

\bibitem[Li and Tao(2010)]{li2010simple}
J.~Li and D.~Tao.
\newblock {Simple exponential family PCA.}
\newblock In \emph{AISTATS}, pages 453--460, 2010.

\bibitem[Li et~al.(2008)Li, Absher, Tang, Southwick, Casto, Ramachandran, Cann,
  Barsh, Feldman, Cavalli-Sforza, and Myers]{li2008worldwide}
J.~Z. Li, D.~M. Absher, H.~Tang, A.~M. Southwick, A.~M. Casto, S.~Ramachandran,
  H.~M. Cann, G.~S. Barsh, M.~Feldman, L.~L. Cavalli-Sforza, and R.~Myers.
\newblock Worldwide human relationships inferred from genome-wide patterns of
  variation.
\newblock \emph{Science}, 319\penalty0 (5866):\penalty0 1100--1104, 2008.

\bibitem[Loh and Elser(2009)]{duane2009}
N.-T.~D. Loh and V.~Elser.
\newblock Reconstruction algorithm for single-particle diffraction imaging
  experiments.
\newblock \emph{Phys. Rev. E}, 80:\penalty0 026705, Aug 2009.

\bibitem[Maia and Hajdu(2016)]{maia2016trickle}
F.~R. Maia and J.~Hajdu.
\newblock {The trickle before the torrent—diffraction data from X-ray
  lasers}.
\newblock \emph{Scientific Data}, 3, 2016.

\bibitem[Makitalo and Foi(2011)]{Makitalo2011}
M.~Makitalo and A.~Foi.
\newblock Optimal inversion of the anscombe transformation in low-count
  {Poisson} image denoising.
\newblock \emph{IEEE Transactions on Image Processing}, 20\penalty0
  (1):\penalty0 99--109, Jan 2011.
\newblock ISSN 1057-7149.

\bibitem[Marchenko and Pastur(1967)]{marchenko1967distribution}
V.~A. Marchenko and L.~A. Pastur.
\newblock Distribution of eigenvalues for some sets of random matrices.
\newblock \emph{Mat. Sb.}, 114\penalty0 (4):\penalty0 507--536, 1967.

\bibitem[Nadakuditi(2014)]{nadakuditi2014optshrink}
R.~R. Nadakuditi.
\newblock Optshrink: An algorithm for improved low-rank signal matrix denoising
  by optimal, data-driven singular value shrinkage.
\newblock \emph{IEEE Transactions on Information Theory}, 60\penalty0
  (5):\penalty0 3002--3018, 2014.

\bibitem[Nowak and Baraniuk(1999)]{nowak1999wavelet}
R.~D. Nowak and R.~G. Baraniuk.
\newblock Wavelet-domain filtering for photon imaging systems.
\newblock \emph{IEEE Transactions on Image Processing}, 8\penalty0
  (5):\penalty0 666--678, 1999.

\bibitem[Patterson et~al.(2006)Patterson, Price, and
  Reich]{patterson2006population}
N.~Patterson, A.~Price, and D.~Reich.
\newblock Population structure and eigenanalysis.
\newblock \emph{PLoS Genet}, 2\penalty0 (12):\penalty0 e190, 2006.

\bibitem[Paul(2007)]{paul2007asymptotics}
D.~Paul.
\newblock Asymptotics of sample eigenstructure for a large dimensional spiked
  covariance model.
\newblock \emph{Statistica Sinica}, 17\penalty0 (4):\penalty0 1617--1642, 2007.

\bibitem[Saldin et~al.(2009)Saldin, Shneerson, Fung, and Ourmazd]{Saldin2009}
D.~K. Saldin, V.~L. Shneerson, R.~Fung, and A.~Ourmazd.
\newblock Structure of isolated biomolecules obtained from ultrashort x-ray
  pulses: exploiting the symmetry of random orientations.
\newblock \emph{Journal of Physics: Condensed Matter}, 21\penalty0 (13), 2009.

\bibitem[Scheres et~al.(2007)Scheres, Gao, Valle, Herman, Eggermont, Frank, and
  Carazo]{scheres2007disentangling}
S.~H. Scheres, H.~Gao, M.~Valle, G.~T. Herman, P.~P. Eggermont, J.~Frank, and
  J.-M. Carazo.
\newblock Disentangling conformational states of macromolecules in {3D-EM}
  through likelihood optimization.
\newblock \emph{Nature Methods}, 4\penalty0 (1):\penalty0 27--29, 2007.

\bibitem[Schwander et~al.(2012)Schwander, Giannakis, Yoon, and
  Ourmazd]{Schwander:12}
P.~Schwander, D.~Giannakis, C.~H. Yoon, and A.~Ourmazd.
\newblock {The symmetries of image formation by scattering. II. Applications}.
\newblock \emph{Opt. Express}, 20\penalty0 (12):\penalty0 12827--12849, Jun
  2012.
\newblock \doi{10.1364/OE.20.012827}.

\bibitem[Searle et~al.(2009)Searle, Casella, and McCulloch]{searle2009variance}
S.~R. Searle, G.~Casella, and C.~E. McCulloch.
\newblock \emph{Variance components}.
\newblock John Wiley \& Sons, 2009.

\bibitem[Shabalin and Nobel(2013)]{shabalin2013reconstruction}
A.~A. Shabalin and A.~B. Nobel.
\newblock Reconstruction of a low-rank matrix in the presence of gaussian
  noise.
\newblock \emph{Journal of Multivariate Analysis}, 118:\penalty0 67--76, 2013.

\bibitem[Starck et~al.(2010)Starck, Murtagh, and Fadili]{starck2010sparse}
J.-L. Starck, F.~Murtagh, and J.~M. Fadili.
\newblock \emph{Sparse image and signal processing: wavelets, curvelets,
  morphological diversity}.
\newblock Cambridge university press, 2010.

\bibitem[Stein(1956)]{stein1956some}
C.~Stein.
\newblock Some problems in multivariate analysis.
\newblock \emph{Technical Report, Dept of Statistics, Stanford University},
  1956.

\bibitem[Tropp(2016)]{Tropp2015}
J.~A. Tropp.
\newblock {The Expected Norm of a Sum of Independent Random Matrices: An
  Elementary Approach}.
\newblock In \emph{High-Dimensional Probability VII}, Progress in Probability
  71. Birkhaeuser, 2016.

\bibitem[Udell et~al.(2014)Udell, Horn, Zadeh, and Boyd]{Udell2014}
M.~Udell, C.~Horn, R.~Zadeh, and S.~Boyd.
\newblock {Generalized Low Rank Models}.
\newblock In \emph{NIPS Workshop on Distributed Machine Learning and Matrix
  Computations}, 2014.

\bibitem[Udell et~al.(2016)Udell, Horn, Zadeh, and Boyd]{udell2016}
M.~Udell, C.~Horn, R.~Zadeh, and S.~Boyd.
\newblock {Generalized Low Rank Models}.
\newblock \emph{Foundations and Trends in Machine Learning}, 9\penalty0
  (1):\penalty0 1--118, 2016.

\bibitem[Vershynin(2012)]{Vershynin2011}
R.~Vershynin.
\newblock {Introduction to the non-asymptotic analysis of random matrices}.
\newblock In \emph{Compressed sensing}, pages 210--268. Cambridge Univ. Press,
  Cambridge, 2012.

\bibitem[Visscher et~al.(2012)Visscher, Brown, McCarthy, and
  Yang]{visscher2012five}
P.~M. Visscher, M.~A. Brown, M.~I. McCarthy, and J.~Yang.
\newblock Five years of {GWAS} discovery.
\newblock \emph{The American Journal of Human Genetics}, 90\penalty0
  (1):\penalty0 7--24, 2012.

\bibitem[Yao et~al.(2015)Yao, Bai, and Zheng]{yao2015large}
J.~Yao, Z.~Bai, and S.~Zheng.
\newblock \emph{Large Sample Covariance Matrices and High-Dimensional Data
  Analysis}.
\newblock Cambridge University Press, 2015.

\end{thebibliography}
}

\appendix
\section{Appendix}
\subsection{Proof of Theorem~\ref{rate_low_dim}}
\label{pf_rate_low_dim}

Let $\mu = \E Y = \E A'(\theta)$ and $B_0=\E Y Y^\top = \Cov Y +\mu \mu^\top =\Sigma_x +\diag[\E A''(\theta)] +\mu\mu^\top$. Let $\|\cdot \|_a$ denote a generic matrix norm, such as the operator norm or the Frobenius norm. By the triangle inequality and the Cauchy-Schwarz inequality
	\begin{align*}
	\E[\V S_d - \sig_x\V_{a}] &= \E \left[\V \frac{1}{n} \sum_{i=1}^n Y_iY_i^\top - \bar Y \bar Y^\top - \diag[V(\bar Y)] - \sig_x \V_{a} \right] \\
	&\leq \E \left[\V \frac{1}{n} \sum_{i=1}^n Y_iY_i^\top - B_0\V_{a}\right] 
	+ \E\left[ \V \bar Y \bar Y^\top - \mu\mu^\top \V_{a} \right] 
	+\E \left[ \V  \diag[V(\bar Y)] - \diag[\E A''(\theta)]  \V_{a} \right] \\
	&\leq \left[\E \V \frac{1}{n} \sum_{i=1}^n Y_iY_i^\top - B_0\V_{a}^2\right]^{\sfrac{1}{2}} 
	+ \E\left[ \V \bar Y \bar Y^\top - \mu\mu^\top \V_{a} \right] 
	+\E \left[ \V  \diag[V(\bar Y)] - \diag[\E A''(\theta)]  \V_{a} \right]
		\end{align*}
We now consider the Frobenius and operator norms separately. For the Frobenius norm, using 		$$\E \left[ \V  \diag[V(\bar Y)] - \diag[\E A''(\theta)]  \V_{\Fr} \right] =  \E \left[ \V V(\bar Y) - \E A''(\theta)  \V \right]$$ and Propositions \ref{THM:mean_bound},  \ref{THM:centering_bound}, and \ref{fro_bound}, we find

$$\E[\V S_d - \sig_x\V_{\Fr}] \lesssim \frac{p}{\sqrt{n}} m_4
	 + \frac{p}{n} + \frac{\| \mu \| \sqrt{p}}{\sqrt{n}}
	 + \frac{\sqrt{p}}{\sqrt{n}}.$$

Now, given that $m_4 = \max_i \E Y(i)^4$ is at least $O(1)$, the second and the last term is of smaller order than the first one. This leads to the bound
$\E[\V S_d - \sig_x\V_{\Fr}] \lesssim  
\sqrt{\frac{p}{n}} \left[
\sqrt{p} \cdot m_4 +\| \mu \| \right].$

	  For the operator norm, using 
$\E \left[ \V  \diag[V(\bar Y)] - \diag[\E A''(\theta)]  \V \right] \le   \E \left[ \V V(\bar Y) - \E A''(\theta)  \V \right]$ 
	  and Propositions \ref{THM:mean_bound},  \ref{THM:centering_bound}, and \ref{tropp_bound}, we find

$$\E[\V S_d - \sig_x\V] \lesssim \sqrt{C(p)}  \frac{(\E\|Y\|^4)^{\sfrac{1}{2}} + (\log n)^3(\log p)^2}{\sqrt{n}}
	 + \frac{p}{n} + \frac{\| \mu \| \sqrt{p}}{\sqrt{n}}
	 + \frac{\sqrt{p}}{\sqrt{n}}.$$
This finishes the proof. 

\subsubsection{Sup-exponential properties}

In this section we establish the sub-exponential property of our random variables. This is needed in the next sections in proving the rates of convergence.

\begin{prop} \label{subexp}
	A random variable $Y\sim p_\theta(y)$ from the exponential family is sub-exponential.
\begin{proof}
The moment generating function of Y is $\E[\exp(tY)] = \exp(A(\theta+t)-A(\theta))$. Since $B$ is differentiable on an open neighborhood of $\theta$, clearly $\E[\exp(tY)]\le e$ for small $t$. Therefore, by the moment generating function characterization of sub-exponential random variables given in (5.16) of  \cite{Vershynin2011}, $Y$ is sub-exponential.
\end{proof}
\end{prop}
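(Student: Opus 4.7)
The plan is to verify the moment generating function (MGF) criterion for sub-exponentiality directly from the exponential family structure. Recall that by assumption in the paper's setup, the natural parameter $\theta$ lies in the interior of the parameter space on which the distribution is well-defined, so there exists an open neighborhood $U$ of $\theta$ contained in the effective domain of the log-partition function $A$.

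First I would compute the MGF of $Y$ under $p_\theta$ explicitly. Using the density $p_\theta(y) = \exp[\theta y - A(\theta)]$ with respect to $\nu$, for any $t \in \R$ we have
\[
\E_\theta[\exp(tY)] = \int \exp(ty) \exp[\theta y - A(\theta)]\, d\nu(y) = \exp[A(\theta + t) - A(\theta)],
\]
provided the integral defining $A(\theta + t)$ converges, i.e., provided $\theta + t$ is in the effective domain of $A$.

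Next I would argue that this MGF is finite in a neighborhood of $0$. Since $\theta$ is an interior point of the domain of $A$, there is some $\delta > 0$ such that $\theta + t$ lies in the domain whenever $|t| \le \delta$. On this interval $A(\theta + t)$ is finite, and in fact $A$ is continuous (indeed smooth) on the interior of its domain, so $A(\theta + t) - A(\theta) \to 0$ as $t \to 0$. In particular, by shrinking $\delta$ if necessary, we can ensure that $A(\theta + t) - A(\theta) \le 1$ for $|t| \le \delta$, which gives $\E_\theta[\exp(tY)] \le e$ for all such $t$.

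Finally I would invoke the MGF-based characterization of sub-exponentiality (for instance property~(5.16) in Vershynin), which states that $Y$ is sub-exponential if and only if $\E[\exp(tY)]$ is bounded on some neighborhood of the origin. The bound above supplies exactly this, completing the proof. There is essentially no obstacle: the only subtlety is confirming the openness of the domain of $A$ at $\theta$, which is part of the paper's standing assumption that the distribution is well-defined for all $\theta$ in an open set.
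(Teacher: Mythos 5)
Your proposal is correct and follows essentially the same route as the paper: compute the MGF as $\exp[A(\theta+t)-A(\theta)]$, use smoothness of $A$ on an open neighborhood of $\theta$ to bound it by $e$ for small $t$, and invoke the MGF characterization (5.16) of Vershynin. Your version is slightly more careful about why the neighborhood exists and why $A(\theta+t)-A(\theta)\le 1$ there, but the argument is the same.
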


In the following proposition, we allow that the prior parameter $\theta$ is random, while requiring that it is bounded.
\begin{prop}
	Let $Y\sim p_\theta(y)$. If $\theta$ is random and supported on a compact interval, then $Y$ is sub-exponential. 
	\begin{proof}
	By the characterization of sub-exponential random variables in (5.16) of  \cite{Vershynin2011}, it is enough to show that $\E[\exp(A(\theta+t)-A(\theta))]\le e$ for small $t$. Suppose $\theta$ is supported on $[a,b]$. Since $A(\theta+t)$ is continuously differentiable in a neighborhood of $\theta$, we have $|A(\theta+t)-A(\theta)| \le C t |A'(\theta)| \le C t \sup_{\theta\in[a,b]}|A'(\theta)|$ for some $C>0$, and for all $t$. Hence
$\E[\exp(A(\theta+t)-A(\theta))]\le \E[\exp(t C \sup_{\theta\in[a,b]}|A'(\theta)|)] \le e.$
The last inequality holds for sufficiently small $t$.
	\end{proof}
\end{prop}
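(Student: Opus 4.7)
The plan is to verify the sub-exponential property of $Y$ via the moment generating function (MGF) characterization, namely that $Y$ is sub-exponential if and only if its MGF is finite on an open neighborhood of the origin (as in the cited Vershynin (2011), Prop.~5.16). The main conceptual task is to upgrade the conditional statement from the preceding proposition (for each fixed $\theta$, $Y\mid\theta$ is sub-exponential) to a marginal statement about the random-$\theta$ mixture, which requires uniform control of the conditional MGF over the support of $\theta$.

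First I would compute the conditional MGF. For $Y\mid\theta \sim p_\theta$, a direct calculation from the exponential family density \eqref{ef_dist} gives
$$\E[\exp(tY)\mid \theta] = \exp\bigl(A(\theta+t) - A(\theta)\bigr)$$
for all $t$ such that $\theta + t$ remains in the natural parameter space. Applying the tower property, the marginal MGF is
$$\E[\exp(tY)] = \E_\theta\bigl[\exp(A(\theta+t) - A(\theta))\bigr],$$
so the problem reduces to bounding the right-hand side for $t$ in a neighborhood of $0$, uniformly over the support of $\theta$.

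Next I would exploit compactness. Let $[a,b]$ denote the support of $\theta$. Since the exponential family is assumed well-defined on an open set containing $[a,b]$, we can choose $\delta > 0$ such that $A$ is continuously differentiable on the compact interval $[a-\delta,\,b+\delta]$. Continuity of $A'$ on a compact set yields a finite constant $M := \sup_{s\in[a-\delta,b+\delta]} |A'(s)|$. Then for any $|t|<\delta$ and any realization $\theta\in[a,b]$, the segment joining $\theta$ and $\theta+t$ lies in $[a-\delta,b+\delta]$, so the mean value theorem gives the almost-sure Lipschitz bound $|A(\theta+t)-A(\theta)| \le M|t|$.

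Combining these bounds, $\E[\exp(tY)] \le \exp(M|t|) < \infty$ for every $|t|<\delta$, so the MGF of $Y$ is finite on an open neighborhood of $0$; the MGF characterization then yields sub-exponentiality. I do not expect a serious obstacle: the only subtlety beyond the fixed-$\theta$ proposition is making sure $\delta$ and $M$ can be selected uniformly in $\theta$, which is exactly what compactness of the support plus smoothness of $A$ on an open containing set delivers.
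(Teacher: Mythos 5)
Your proof is correct and follows essentially the same route as the paper: compute the conditional MGF $\E[\exp(tY)\mid\theta]=\exp(A(\theta+t)-A(\theta))$, apply the tower property, and use compactness of the support together with continuity of $A'$ to bound $|A(\theta+t)-A(\theta)|$ uniformly, then invoke the MGF characterization of sub-exponentiality. If anything, your version is slightly more careful than the paper's in choosing $\delta$ so that the whole segment $[\theta,\theta+t]$ stays inside a compact set where $A'$ is bounded.
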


\subsubsection{Auxiliary rates}

Using the sub-exponential properties, we now prove the rates of convergence needed in the proof of Thm. \ref{rate_low_dim} presented in Sec. \ref{pf_rate_low_dim}. Let $K(i) = \sup_{q\ge 1}q^{-1}(\E Y(i)^q)^{1/q}$ be the sub-exponential norm of the $i$-th coordinate of $Y$ \citep[see e.g.,][Sec 5.2.4]{Vershynin2011}. By assumption, these norms are uniformly bounded, so that $K(i) \le K<\infty$ for some universal constant $K$. 

\begin{prop}
\label{THM:mean_bound}
$\E[\V V(\bar{Y}) - \E A''(\theta) \V] \lesssim \frac{\sqrt{p}}{\sqrt{n}} $ up to universal constant factors.
\begin{proof} 
By the Cauchy-Schwarz inequality, $[\E\V V(\bar{Y}) - \E A''(\theta)\V]^2 \le \E[\V  V(\bar{Y}) - \E A''(\theta)\V^2]$. Since the latter quantity decomposes into $d$ mean squared error terms, it is enough to show that each of them is bounded by $C/n$ up to universal constant factors. Now, 
$$\E[{V(\bar Y(i))}-\E{A''(\theta(i))}]^2 \le2 \E[{V(\bar Y(i))}-V(\E{\bar Y(i)})]^2 + 2 \E[V(\E{\bar Y(i)})-\E{A''(\theta(i))}]^2.$$
For the first term, by the Lipschitz property of $V$, and by the definition of $K$, we have 
$$\E[{V(\bar Y(i))}-V(\E{\bar Y(i)})]^2 \le L^2 \E [\bar Y(i)-\E{\bar Y(i)}]^2  = n^{-1} L^2 \Va Y(i)  \le n^{-1} L^2 \E Y(i)^2 \le n^{-1} c L^2 K^2.$$

For the second term, notice that $A''(\theta(i)) = V(\E[\bar Y(i)|\theta(i)])$. Denoting for  convenience $Z=\bar Y(i)$, $\alpha = \theta(i)$, this reads $A''(\alpha) = V(\E[Z|\alpha])$, and thus
$T:=V(\E \alpha)-\E V(\E[Z|\alpha]) = \E\left\{ V(\E \alpha)-V(\E[Z|\alpha])\right\} .$
Hence, by the Cauchy-Shwarz inequality and by the Lipschitz property of $V$, 
$\E T^2 \le \E\left\{ V(\E \alpha)-V(\E[Z|\alpha])\right\}^2 \le L^2 \E(\E \alpha-\E[Z|\alpha])^2.$
Finally, the term  $\E(\E \alpha-\E[Z|\alpha])^2 = \Va (\bar Y(i)|\theta(i)) = n^{-1}  \Va (Y(i)|\theta(i))\le n^{-1} c L^2 K^2$ since $Y(i)$ is sub-exponential with norm at most $K$. Putting together all bounds, we obtain $\E[{V(\bar Y(i))}-\E{A''(\theta(i))}]^2 \lesssim n^{-1}$ up to universal constant factors. By the remark in the beginning of the argument, this finishes the proof. 

\end{proof}
\end{prop}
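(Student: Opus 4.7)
The plan is to reduce the expected vector norm to a sum of per-coordinate MSEs and then show that each coordinate MSE is $O(1/n)$. Since $V(\bar Y) - \E A''(\theta) \in \R^p$ and $\V\cdot\V$ is the Euclidean norm, Cauchy–Schwarz (or Jensen) gives
\[
\bigl[\E\V V(\bar Y) - \E A''(\theta)\V\bigr]^2 \le \E\V V(\bar Y) - \E A''(\theta)\V^2 = \sum_{i=1}^{p}\E\bigl(V(\bar Y(i)) - \E A''(\theta(i))\bigr)^2.
\]
If each summand is $\lesssim 1/n$, the sum is $\lesssim p/n$ and the square root is $\sqrt{p/n}$ as claimed. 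So the whole job is the coordinate-wise MSE.

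For a single coordinate, I would split the error through the deterministic value $V(\E \bar Y(i))$ and apply $(a+b)^2 \le 2a^2+2b^2$. The first piece, $V(\bar Y(i)) - V(\E \bar Y(i))$, is handled by the Lipschitz hypothesis on $V$: its square is at most $L^2(\bar Y(i)-\E\bar Y(i))^2$, whose expectation is $L^2 \Va Y(i)/n$. Because $Y(i)$ is sub-exponential (Proposition \ref{subexp}) with sub-exponential norm uniformly bounded by $K$, the second moment $\E Y(i)^2$ is bounded by a universal constant, so this piece is $\lesssim 1/n$.

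The second piece, $V(\E\bar Y(i)) - \E A''(\theta(i))$, is deterministic and is where the identity defining $V$ enters. Using $A''(\theta(i)) = V(A'(\theta(i))) = V(\E[Y_j(i)\mid\theta_j(i)])$ for a single coordinate, I would rewrite the bias as $\E\bigl[V(\E\bar Y(i)) - V(\E[\bar Y(i)\mid\theta(i)])\bigr]$, apply Lipschitz inside the expectation, then Cauchy–Schwarz, to bound it by $L^2\cdot\Va(\E[\bar Y(i)\mid\theta(i)])$. The conditional-variance term is controlled by $\Va(\bar Y(i)) = \Va(Y(i))/n$ via the law of total variance, yielding another $1/n$ bound.

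The main obstacle is the bias step: making the conditional-expectation identity rigorous in the i.i.d.\ latent model, where $\bar Y(i)$ depends on all $n$ latent parameters $\theta_1(i),\ldots,\theta_n(i)$ rather than a single $\theta(i)$. The cleanest route is to condition on the entire vector $\theta_{1:n}(i)$ so that $\E[\bar Y(i)\mid\theta_{1:n}(i)] = n^{-1}\sum_j A'(\theta_j(i))$, then push the Lipschitz bound through and recognize that the total-variance argument still gives $\Va(\E[\bar Y(i)\mid\theta_{1:n}(i)]) \le \Va(\bar Y(i)) \lesssim 1/n$ by sub-exponentiality. Combining the two $1/n$ pieces, summing over $i$, and taking the square root delivers the stated $\sqrt{p/n}$ rate.
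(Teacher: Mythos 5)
Your proposal is correct and follows essentially the same route as the paper: Jensen/Cauchy--Schwarz to reduce to per-coordinate MSEs, the split through $V(\E\bar Y(i))$ with $(a+b)^2\le 2a^2+2b^2$, Lipschitz-plus-sub-exponential control of the fluctuation term, and the identity $A''=V\circ A'$ combined with a variance-of-conditional-expectation bound for the bias term. Your explicit conditioning on the full vector $\theta_{1:n}(i)$ and the appeal to the law of total variance is in fact a cleaner rendering of the paper's somewhat loosely written final step.
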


\begin{prop}
	\label{THM:centering_bound}
	$\E[\V  \mu\mu^\top - \bar{Y}  \bar{Y} ^\top\V_{a} ] \lesssim  \frac{p}{n} + \frac{\| \mu \| \sqrt{p}}{\sqrt{n}} \text{ up to universal constant factors}$, where $\Vert \cdot\Vert_a$ denotes the Frobenius norm or the operator norm.
	\begin{proof}
		Clearly $\V ab^\top\V_{a} = \V a\V \V b\V $. Then 
	\begin{align*}
	\| aa^\top - bb^\top\|_{a}
	&= \| - a(b-a)^\top - (b-a)a^\top -(b-a)(b-a)^\top \|_{a} \\
	&\leq \| (b-a)(b-a)^\top \|_{a} 
	+\| a(b-a)^\top   \|_{a} 
	+ \| (b-a)a^\top  \|_{a} \text{ by the triangle inequality}\\
	&= \| b-a \|^2 +2\| a \| \| b-a \|.
	\end{align*}
	Using this,  by Proposition \ref{THM:mean_bound_2},
$	\E[\|  \mu\mu^\top - \bar{Y}  \bar{Y} ^\top\| _{a}] \leq  \E[\| \mu- \bar{Y} \|^2] +2\E[\|  \mu \| \| \mu- \bar{Y} \| ]
\lesssim  \frac{p}{n} + \frac{\| \mu \| \sqrt{p}}{\sqrt{n}}.$
	\end{proof}

\end{prop}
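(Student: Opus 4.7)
The plan is to reduce the claim to a standard moment bound for the sample mean $\bar Y$. First I would use the algebraic identity
\[
bb^\top - aa^\top = a(b-a)^\top + (b-a)a^\top + (b-a)(b-a)^\top
\]
with $a=\mu$ and $b=\bar Y$. Since both the operator norm and the Frobenius norm satisfy $\|uv^\top\|_a = \|u\|\,\|v\|$ for rank-one matrices, applying the triangle inequality and taking expectations gives
\[
\E\bigl[\|\bar Y \bar Y^\top - \mu\mu^\top\|_a\bigr] \le \E\|\bar Y - \mu\|^2 + 2\|\mu\|\,\E\|\bar Y - \mu\|.
\]
So the entire statement reduces to controlling $\E\|\bar Y - \mu\|$ and $\E\|\bar Y-\mu\|^2$, each by the advertised rate.

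The second moment is easy: since $Y_1,\ldots,Y_n$ are i.i.d.\ with mean $\mu$, coordinate-wise independence of the sample mean's deviations gives
\[
\E\|\bar Y - \mu\|^2 = \sum_{i=1}^p \Va(\bar Y(i)) = \frac{1}{n}\sum_{i=1}^p \Va(Y(i)) \lesssim \frac{p}{n},
\]
using that each $Y(i)$ is sub-exponential with norm bounded by a universal constant $K$ (established in Proposition~\ref{subexp} and the assumptions of Thm.~\ref{rate_low_dim}). The first moment then follows from Jensen's inequality: $\E\|\bar Y - \mu\| \le (\E\|\bar Y-\mu\|^2)^{1/2} \lesssim \sqrt{p/n}$. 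Plugging these into the triangle inequality display above gives
\[
\E\bigl[\|\bar Y \bar Y^\top - \mu\mu^\top\|_a\bigr] \lesssim \frac{p}{n} + \|\mu\|\sqrt{\frac{p}{n}},
\]
which is exactly the claimed bound. I would package the two moment bounds as the auxiliary Proposition~\ref{THM:mean_bound_2} invoked in the proof.

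There is no real obstacle here, because the argument separates cleanly into an elementary algebraic identity for rank-one perturbations of outer products and a routine variance calculation for the sample mean of i.i.d.\ coordinate-wise sub-exponential random vectors. The only point to be careful about is that the identity $\|uv^\top\|_a = \|u\|\|v\|$ holds identically for both the Frobenius and operator norms under consideration, so a single argument covers both cases and the proof does not need to branch on the choice of matrix norm.
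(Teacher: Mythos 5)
Your proposal is correct and follows the paper's argument exactly: the same rank-one decomposition of $\bar Y\bar Y^\top - \mu\mu^\top$, the same use of $\|uv^\top\|_a = \|u\|\|v\|$ and the triangle inequality, and the same reduction to the moment bounds $\E\|\bar Y-\mu\|^2 \lesssim p/n$ and $\E\|\bar Y-\mu\| \lesssim \sqrt{p/n}$, which the paper likewise isolates as Proposition~\ref{THM:mean_bound_2} and proves via the coordinate-wise variance calculation with the sub-exponential norm bound. No gaps.
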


\begin{prop}\label{THM:mean_bound_2}
We have $\E[\| \bar{Y} - \mu\|^2] \lesssim \frac{p}{n} $ and $\E[\| \bar{Y} - \mu\|] \lesssim \frac{\sqrt{p}}{\sqrt{n}} $ up to universal constant factors.
\begin{proof}
By the Cauchy-Schwarz inequality, $\E[\| \bar{Y} - \mu\|]^2 \le \E[\| \bar{Y} - \mu\|^2]$. Then by the definition of the subexponential norm $K$, we have $\E[{Y(i)}-\E{Y(i)}]^2 \le \E[{Y(i)}]^2 \le c K^2$. Hence
$\E[\| \bar{Y} - \mu\|^2] $ $ = n^{-1}\sum_{i=1}^p \E({Y(i)}-\E{Y(i)})^2 $ $\le n^{-1} cp K^2.$
This finishes the proof.
\end{proof}
\end{prop}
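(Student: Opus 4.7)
The plan is to prove the second moment bound first and then derive the first moment bound from it by Jensen's inequality. Since $\bar Y$ is a sum of independent copies of $Y$, the squared Euclidean norm $\|\bar Y - \mu\|^2$ decomposes coordinate-wise into per-coordinate squared deviations, each of which is a sample-mean variance of $1/n$ times the variance of $Y(i)$. So the job reduces to uniformly bounding the individual variances $\Va[Y(i)]$ by a constant.

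First I would write $\E\|\bar Y-\mu\|^2 = \sum_{i=1}^p \E[(\bar Y(i)-\mu(i))^2] = \sum_{i=1}^p \Va[\bar Y(i)] = n^{-1}\sum_{i=1}^p \Va[Y(i)]$, using independence of the $Y_j$'s across $j=1,\dots,n$ and linearity of expectation. Next I would control each $\Va[Y(i)]$ by its second moment, $\Va[Y(i)] \le \E[Y(i)^2]$. The key input here is that each coordinate $Y(i)$ is sub-exponential with uniformly bounded sub-exponential norm $K(i)\le K$, as established in the preceding sup-exponential section (this uses that the natural parameter $\theta$ is bounded and the one-parameter exponential family moment generating function is finite on an open neighborhood of $\theta$). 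By the standard moment characterization of sub-exponentiality (cf.\ Vershynin, (5.16)), the second moment satisfies $\E[Y(i)^2] \le c K(i)^2 \le c K^2$ for an absolute constant $c$. Summing gives $\E\|\bar Y-\mu\|^2 \le n^{-1} p\cdot cK^2 \lesssim p/n$.

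For the first moment bound, I would invoke Jensen's inequality (equivalently Cauchy--Schwarz): $\E[\|\bar Y-\mu\|] = \E[(\|\bar Y-\mu\|^2)^{1/2}] \le (\E\|\bar Y-\mu\|^2)^{1/2} \lesssim \sqrt{p/n}$. This yields the second claim with the same implicit constant factors.

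There is really no hard step; the one thing to be careful about is making sure the sub-exponential norm is applied at the level of individual coordinates $Y(i)$ and not of the full vector $Y$, since the $p$-dependence in the bound comes precisely from summing $p$ per-coordinate variances. The uniform bound on $K(i)$ is exactly the hypothesis that lets that sum be controlled by $pK^2$ rather than something dimension-dependent.
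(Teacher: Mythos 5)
Your argument is correct and follows essentially the same route as the paper: decompose $\E\|\bar Y-\mu\|^2$ into the $p$ per-coordinate variances $n^{-1}\Va[Y(i)]$, bound each by $\E[Y(i)^2]\le cK^2$ via the uniform sub-exponential norm, and deduce the first-moment bound by Jensen/Cauchy--Schwarz. No gaps.
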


\begin{prop}[Bounding the deviation of the second moment estimator for $Y$: Frobenius norm]\label{fro_bound}
		Let $T_i = \frac{1}{n} \left(Y_iY_i^\top - B \right)$ and $V_n =\sum_{i=1}^nT_i$. Then 
		$\E \left[\V V_n \V_{\Fr}^2 \right] \lesssim \frac{p^2}{n}m_4$.
	\begin{proof}
 Since the $Y_i$ are independent and identically distributed, and $\E T_i = 0$, we have
$$
		\E \left[\V V_n \V_{\Fr}^2 \right] =\E \left[\V \sum_{i=1}^n T_i \V_{\Fr}^2\right] =n\E \V T_1 \V_{\Fr}^2  =  \frac{1}{n}\E( \V Y_1Y_1^\top \V_{\Fr}^2 +\V B\V_{\Fr}^2 -2Tr(Y_1Y_1^\top B))  = \frac{1}{n}(\E(\|Y_1\|^2)^2 - Tr(B^2)).
$$
Now we can bound $\E (\|Y_1\|^2)^2 \le p^2 \max_i \E Y_1(i)^4 \lesssim p^2 m_4$, proving the desired claim.
	\end{proof}
\end{prop}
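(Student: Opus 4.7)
The target is the Frobenius second-moment bound for the centered sum $V_n = \sum_{i=1}^n T_i$ with $T_i = n^{-1}(Y_iY_i^\top - B)$, where $B = \E Y_1 Y_1^\top$. My plan is to exploit independence to reduce to a single-term second moment, then bound that second moment by a direct fourth-moment calculation.

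First, I would observe that the $T_i$ are i.i.d.\ and mean-zero, since $\E[Y_i Y_i^\top] = B$ by definition. Cross terms in $\E\|V_n\|_{\Fr}^2 = \sum_{i,j}\E\langle T_i, T_j\rangle_{\Fr}$ vanish, so
\[
\E\|V_n\|_{\Fr}^2 \;=\; n\,\E\|T_1\|_{\Fr}^2 \;=\; \frac{1}{n}\,\E\|Y_1Y_1^\top - B\|_{\Fr}^2.
\]
Expanding the centered square and using $\E[Y_1Y_1^\top] = B$ gives the standard variance identity
\[
\E\|Y_1Y_1^\top - B\|_{\Fr}^2 \;=\; \E\|Y_1Y_1^\top\|_{\Fr}^2 - \|B\|_{\Fr}^2 \;\le\; \E\|Y_1Y_1^\top\|_{\Fr}^2,
\]
and using the identity $\|yy^\top\|_{\Fr}^2 = \|y\|^4$ reduces the task to bounding $\E(\|Y_1\|^2)^2$.

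The last step is a one-line Cauchy--Schwarz: writing $\|Y_1\|^2 = \sum_{i=1}^p Y_1(i)^2$, we have
\[
(\|Y_1\|^2)^2 \;=\; \Bigl(\sum_{i=1}^p Y_1(i)^2\Bigr)^{\!2} \;\le\; p\sum_{i=1}^p Y_1(i)^4,
\]
so taking expectations and applying the definition $m_4 = \max_i \E Y_1(i)^4$ yields $\E(\|Y_1\|^2)^2 \le p^2 m_4$. Combining with the previous display gives
\[
\E\|V_n\|_{\Fr}^2 \;\le\; \frac{1}{n}\,\E(\|Y_1\|^2)^2 \;\lesssim\; \frac{p^2}{n}\,m_4,
\]
which is the desired bound.

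There is no real obstacle here: independence kills the cross terms, the Frobenius norm of a rank-one matrix reduces to a squared Euclidean norm, and Cauchy--Schwarz plus the uniform fourth-moment bound $m_4$ handles the rest. The only thing worth double-checking is the normalization convention (the $n^{-1}$ inside $T_i$ combined with $n$ summands produces the single $n^{-1}$ in front), and that dropping the $-\|B\|_{\Fr}^2$ term is harmless since we want only an upper bound.
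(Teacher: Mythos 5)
Your proof is correct and follows essentially the same route as the paper's: independence kills the cross terms, the variance identity reduces the problem to $\E(\|Y_1\|^2)^2$, and the fourth-moment bound $p^2 m_4$ finishes it. You are in fact slightly more explicit than the paper, which asserts $\E(\|Y_1\|^2)^2 \le p^2 \max_i \E Y_1(i)^4$ without spelling out the Cauchy--Schwarz step $\bigl(\sum_i Y_1(i)^2\bigr)^2 \le p\sum_i Y_1(i)^4$.
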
	

\begin{prop}[Bounding the deviation of the second moment estimator for $Y$: Operator norm]\label{tropp_bound}
		Let $T_i = \frac{1}{n} \left(Y_iY_i^\top - B \right)$ and $V_n =\sum_{i=1}^nT_i$. Then 
		\begin{align*}
\E \left[\| V_n \|^2 \right]^{\sfrac{1}{2}} &\leq \sqrt{C(p)} \left\| \E \left[V_n^2 \right]\right\| ^{\sfrac{1}{2}} 
+ \sqrt{C(p)} \cdot \left( \E\left[\max_i \| T_i \|^2 \right]\right)^{\sfrac{1}{2}} \\
&\lesssim \sqrt{C(p)}  \frac{(\E\|Y\|^4)^{\sfrac{1}{2}} + (\log n)^3(\log p)^2}{\sqrt{n}}. 
		\end{align*} 
	\begin{proof}
	The first inequality follows directly from Theorem 5.1 in \cite{Tropp2015}. 
Now we find an explicit expression for the right hand side.  For the first term, since the $Y_i$ are independent and identically distributed, and the $T_i$ are centered,
$ \E V_n^2 = \E (\sum_{i=1}^n T_i)^2 = n  \E T_1^2 
= \frac{1}{n}(\E \|Y_1\|^2 Y_1 Y_1^\top- B^2).$
Since $\E V_n^2$ and $B^2$ are positive semi-definite, so $\|\E \|Y_1\|^2 Y_1 Y_1^\top- B^2\| \le \|\E \|Y_1\|^2 Y_1 Y_1^\top\|$, we have 
$
\left\| \E \left[V_n^2 \right]\right\| \le \frac{1}{n}\| \E( \|Y_1\|^2 Y_1 Y_1^\top)\|.
$ 

Now for any fixed vector $u$ with $\|u\|=1$, $u^\top\E( \|Y_1\|^2 Y_1 Y_1^\top) u = \E \|Y_1\|^2 (u^\top Y_1)^2 \le \E (\|Y_1\|^2)^2$. This gives the first term, $\E \|Y\|^4$.

For the second term,  by the triangle inequality and $(a+b)^2 \le 2(a^2+b^2)$,
$$
\E\left[\max_i \| T_i \|^2 \right] = \frac{1}{n}\E\left[\max_i \| Y_iY_i^T - A \|^2 \right] \\
\le \frac{2}{n}(\E\max_i\| Y_iY_i^T \|^2 + \| B\|^2).$$ 
When taking square roots as required by the theorem statement, the second term in this inequality can be bounded by $\| B \| \le Tr(B) = \E \|Y\|^2 \le (\E \|Y\|^4)^{\sfrac{1}{2}}$. For the first term in the bound, defining $Q_i = \sum_{j=1}^dY_i(j)^2$, we have $\| Y_iY_i^T \|^2 = Q_i^2$, so for $m \geq 2$

$$\E\left[(\max_i Q_i)^2\right] 
\leq \E\left[(\sum_{i=1}^n Q_i^{m})^{\sfrac{2}{m}}\right]
\leq \left(\sum_{i=1}^n \E[Q_i^{m}]\right)^{\sfrac{2}{m}}
= \left(n \E[Q_1^{m}]\right)^{\sfrac{2}{m}},$$

where the second inequality follows from  Jensen's inequality. Choosing $m = \log n$, and then  applying Lemma \ref{Rosenthal} the last term can be upper bounded by

$$ e^2\left(  \left(\E[Q_1^{m}]\right)^{\sfrac{1}{m}}\right)^2
 \lesssim[\E Q_1 + (\log n)^3 (\log p)^2]^2 
\lesssim[Tr(B) + (\log n)^3 (\log p)^2]^2.$$

Finally, we use $Tr(B) = \E \|Y\|^2 \le (\E \|Y\|^4)^{\sfrac{1}{2}}$ again. 
Putting these together leads to the result.
	\end{proof}
\end{prop}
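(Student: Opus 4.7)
The plan is to invoke a matrix concentration inequality and then control the matrix variance and the moment of the largest summand separately. The first inequality is an immediate application of Theorem~5.1 of \cite{Tropp2015}---a moment form of the matrix Bernstein inequality---since the $T_i$ are independent, centered, symmetric random matrices. This instantly yields
\[
\E[\|V_n\|^2]^{\sfrac{1}{2}} \leq \sqrt{C(p)}\, \|\E V_n^2\|^{\sfrac{1}{2}} + \sqrt{C(p)}\, \bigl(\E \max_i \|T_i\|^2\bigr)^{\sfrac{1}{2}},
\]
with dimensional constant $C(p) = 4(1+2\lceil \log p \rceil)$. The remaining work is to bound the two terms on the right in terms of $\E\|Y\|^4$.

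For the matrix variance, I would use independence and centering of the $T_i$ to write $\E V_n^2 = n\, \E T_1^2 = n^{-1}\bigl[\E(\|Y_1\|^2 Y_1 Y_1^\top) - B^2\bigr]$, where $B = \E Y_1 Y_1^\top$. Both $\E (\|Y_1\|^2 Y_1 Y_1^\top)$ and $B^2$ are positive semidefinite, so subtracting the latter can only decrease the operator norm, giving $\|\E V_n^2\| \leq n^{-1}\|\E(\|Y_1\|^2 Y_1 Y_1^\top)\|$. Passing to quadratic forms on unit vectors $u$, $u^\top \E[\|Y_1\|^2 Y_1 Y_1^\top] u = \E[\|Y_1\|^2 (u^\top Y_1)^2] \leq \E\|Y\|^4$, so $\|\E V_n^2\|^{\sfrac{1}{2}} \leq n^{-\sfrac{1}{2}}(\E\|Y\|^4)^{\sfrac{1}{2}}$. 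This supplies the first piece of the final bound.

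The bulk of the work is the maximum term $\E \max_i \|T_i\|^2$. The triangle inequality and $(a+b)^2 \leq 2(a^2+b^2)$ reduce it to $(2/n^2)\bigl(\E \max_i \|Y_i\|^4 + \|B\|^2\bigr)$. The deterministic piece is harmless: $\|B\| \leq \mathrm{tr}\, B = \E\|Y\|^2 \leq (\E\|Y\|^4)^{\sfrac{1}{2}}$, again contributing a $(\E\|Y\|^4)^{\sfrac{1}{2}}/\sqrt{n}$ term after taking square roots. For the random piece I would use the moment-method trick: writing $Q_i = \|Y_i\|^2$, the inequality $\E[(\max_i Q_i)^2] \leq \E[(\sum_i Q_i^m)^{\sfrac{2}{m}}] \leq (n\, \E Q_1^m)^{\sfrac{2}{m}}$ follows from monotonicity of $\ell^m$ norms and Jensen's inequality. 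Choosing $m = \lceil\log n\rceil$ converts the prefactor $n^{\sfrac{2}{m}}$ into the harmless constant $e^2$. The $m$-th moment $\E Q_1^m = \E\bigl(\sum_{j=1}^p Y_1(j)^2\bigr)^m$ is then controlled by a Rosenthal-type moment inequality for sums of independent sub-exponentials (Lemma~\ref{Rosenthal}), yielding $(\E Q_1^m)^{\sfrac{1}{m}} \lesssim \E Q_1 + (\log n)^3 (\log p)^2$; the sub-exponential input that feeds this lemma is supplied by Proposition~\ref{subexp}. Combining with $\E Q_1 = \mathrm{tr}\, B \leq (\E\|Y\|^4)^{\sfrac{1}{2}}$ and assembling the pieces gives the claimed $\sqrt{C(p)}\,[(\E\|Y\|^4)^{\sfrac{1}{2}} + (\log n)^3(\log p)^2]/\sqrt{n}$ bound.

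The main obstacle is the sharp control of $\E(\max_i Q_i)^2$ with the correct poly-logarithmic dependence on $n$ and $p$, rather than a crude $\sqrt{n}$ factor from a union bound. The Rosenthal-type moment inequality combined with the $m = \lceil\log n\rceil$ truncation is what produces the right scaling: the $(\log n)^3$ factor is the moment-growth rate of a sub-exponential, while the $(\log p)^2$ factor comes from summing $p$ independent sub-exponential squares. Everything else amounts to bookkeeping once Tropp's matrix Bernstein inequality and this moment input are in place.
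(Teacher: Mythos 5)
Your proposal is correct and follows essentially the same route as the paper's proof: Tropp's matrix Bernstein moment inequality, the positive-semidefinite comparison for the variance term, and the $\ell^m$-norm trick with $m \approx \log n$ fed into the Rosenthal-type bound of Lemma~\ref{Rosenthal} for the maximum term. Your bookkeeping of the $1/n^2$ normalization in $\E\max_i\|T_i\|^2$ is in fact slightly more careful than the paper's (which has an apparent typo writing $1/n$), but this does not change the argument or the final rate.
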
	

\begin{lemma}\label{Rosenthal}
Let $Y(1), \cdots, Y(p)$ be independent random variables distributed according to an exponential family $Y(j) \sim p_{\theta(j)}$ for deterministic $\theta(j) \in \R$. Let $Q = \sum_{j=1}^pY(j)^2$. Define $\kappa := \frac{\sqrt{e}}{2(\sqrt{e}-1)} < 1.27$ and let $\eta$ be any value in $(0,1)$. Then for any $m\geq1$ 
$$	(\E[Q^m])^{1/m}  \leq (1+\eta)\E[Q] + C\frac{\kappa}{2}(1+1/\eta) K m^3(\log p)^2 $$
where $C$ is a small constant.
 
 \begin{proof}
By Theorem 8 in \cite{Boucheron2005}, we get the following Rosenthal-type bound:
$$
(\E[Q^m])^{\sfrac{1}{m}} \leq (1+\eta)\E[Q]
+ \frac{\kappa}{2} m (1+1/\eta)\left(\E\left[\left(\max_{j\leq p} (Y(j)^2)\right)^m\right]\right)^{\sfrac{1}{m}}
$$
We proceed to bound the second term on the right hand side:
$$
\E\left[\left(\max_{j\leq p} (Y(j)^2)\right)^m\right]^{\sfrac{1}{m}} 
= \E\left[\max_{j\leq p} Y(j)^{2m}\right]^{\sfrac{1}{m}}
\leq \E\left[\left(\sum_{j\leq p} Y(j)^{2m\log p}\right)^{\sfrac{1}{\log p}}\right]^{\sfrac{1}{m}}
\leq \left(\sum_{j\leq p} \E\left[ Y(j)^{2m\log p}\right]\right)^{\sfrac{1}{m\log p}}
$$
where the last claim follows from Jensen's inequality. This can be further bounded as
$$
p^{\sfrac{1}{m\log p}} \left(\max_{j\leq p} \E\left[ Y(j)^{2m\log p}\right]\right)^{\sfrac{1}{m\log p}} \\
\lesssim K (m \log p)^2.
$$
On the last line, we have used the moments characterization of sub-exponentiality. 
\end{proof} 
\end{lemma}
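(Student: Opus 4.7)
The plan is to establish the bound by combining a Rosenthal-type moment inequality with a careful bound on the maximum of sub-exponential squared random variables. The key observation is that $Q = \sum_{j=1}^p Y(j)^2$ is a sum of independent nonnegative variables, so standard moment inequalities for sums of independent random variables apply, but the individual summands $Y(j)^2$ are sub-exponential-squared (and hence heavy-tailed), meaning we cannot get a clean second-moment bound alone — we need to pay a logarithmic price to handle the maximum.

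First I would invoke Theorem~8 of \cite{Boucheron2005}, which for nonnegative independent summands $Z_j = Y(j)^2$ yields a Rosenthal-type inequality of the form
\[
(\E[Q^m])^{1/m} \le (1+\eta)\E[Q] + \tfrac{\kappa}{2}\,m(1+1/\eta)\bigl(\E[\max_{j\le p} Y(j)^{2m}]\bigr)^{1/m},
\]
where $\kappa = \sqrt{e}/(2(\sqrt{e}-1))$ as stated. This reduces the problem to bounding the $L^m$ norm of $\max_j Y(j)^2$.

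The second step is to handle the maximum. I would use the standard ``moment inflation'' trick: since $\max_j a_j^q \le (\sum_j a_j^{q\log p})^{1/\log p}$ for nonnegative $a_j$, we get
\[
\E[\max_j Y(j)^{2m}]^{1/m} \le \E\Bigl[\bigl(\sum_j Y(j)^{2m\log p}\bigr)^{1/\log p}\Bigr]^{1/m} \le \Bigl(\sum_j \E[Y(j)^{2m\log p}]\Bigr)^{1/(m\log p)}
\]
by Jensen's inequality (since $1/\log p \le 1$). This in turn is at most $p^{1/(m\log p)} \cdot (\max_j \E Y(j)^{2m\log p})^{1/(m\log p)}$, and $p^{1/(m\log p)} = e^{1/m} = O(1)$.

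Finally, I would plug in the sub-exponential moment bound. By Proposition~\ref{subexp} each $Y(j)$ is sub-exponential with norm bounded by $K$, which by the standard characterization (Vershynin (5.16)) yields $(\E Y(j)^q)^{1/q} \lesssim Kq$ for all $q \ge 1$. Applying this with $q = 2m \log p$ gives $(\E Y(j)^{2m\log p})^{1/(2m\log p)} \lesssim K m \log p$, so squaring gives the $(m\log p)^2$ factor and hence $\E[\max_j Y(j)^{2m}]^{1/m} \lesssim K(m\log p)^2$. Combining with the Rosenthal step yields the desired bound, with the $m^3(\log p)^2$ coming from $m \cdot (m\log p)^2$.

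The main obstacle I expect is tracking constants carefully through the sub-exponential moment bound and verifying that the Jensen step goes in the right direction (since $1/\log p < 1$, we need to use concavity of $x \mapsto x^{1/\log p}$, which gives the inequality stated). Beyond that, the argument is a fairly standard maximal-inequality chain and should be routine.
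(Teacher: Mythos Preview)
Your proposal is correct and follows essentially the same approach as the paper's proof: the same Rosenthal-type inequality from \cite{Boucheron2005}, the same moment-inflation trick $\max_j a_j \le (\sum_j a_j^{\log p})^{1/\log p}$ followed by Jensen, and the same sub-exponential moment bound $(\E Y(j)^q)^{1/q}\lesssim Kq$ to produce the $(m\log p)^2$ factor. The structure and every key step coincide with the paper's argument.
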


\subsubsection{Proof of Theorem~\ref{MP_white}}
\label{pf_MP_white}

It is enough to study the singular values of $\mathcal{Y}_w = n^{-\sfrac{1}{2}}\mathcal{Y}_c D_n^{-\sfrac{1}{2}} $, where $\mathcal{Y}_c = \mathcal{Y} - \vec{1} \bar Y^\top$ is the centered data matrix, because $S_{h}+I_p = \mathcal{Y}_w^{\top} \mathcal{Y}_w$. Our strategy to show that $\mathcal{Y}_w$ is well approximated by the noise matrix $n^{-\sfrac{1}{2}}\mathcal{E}$, which is more convenient to study directly.

Indeed, since $n^{-\sfrac{1}{2}}\mathcal{E}$ has independent entries of mean 0, variance $1/n$, and fourth moment of order $1/n^2$, the distribution of the squares of its singular values converges almost surely to the standard Marchenko-Pastur distribution with aspect ratio $\gamma$. Moreover, its operator norm converges  to $1+\gamma^{\sfrac{1}{2}}$ a.s. \citep{bai2009spectral}. 

This implies that the same two properties hold for the auxiliary matrix $\smash{\mathcal{Y}_a = n^{-\sfrac{1}{2}}(\mathcal{Y} - \vec{1} A'(\theta)^\top) D_n^{-\sfrac{1}{2}}}$. Indeed, we can bound the operator norm of  the difference $E=n^{-\sfrac{1}{2}}\mathcal{E}-\mathcal{Y}_a$ as
$$\|E\| = \|n^{-\sfrac{1}{2}} \mathcal{E} (I_p-\diag[A''(\theta)]^{\sfrac{1}{2}}D_n^{-\sfrac{1}{2}})\|  \le \|n^{-\sfrac{1}{2}}\mathcal{E}\| \|I_p-\diag[A''(\theta)]^{\sfrac{1}{2}}D_n^{-\sfrac{1}{2}}\|.$$

Now $\|n^{-\sfrac{1}{2}}\mathcal{E}\| \to 1+\gamma^{\sfrac{1}{2}}$ a.s., and  $\|I_p-\diag[A''(\theta)]^{\sfrac{1}{2}}D_n^{-\sfrac{1}{2}}\| \to 0$ a.s. by Lemma \ref{op_norm_var} presented below. This shows that the spectral distribution and operator norm of $\mathcal{Y}_a $ converge as required. 

Finally, the difference of the homogenized data matrix and the auxiliary matrix has rank one:
$\mathcal{Y}_w-\mathcal{Y}_a = n^{-\sfrac{1}{2}} \vec{1}(A'(\theta)^\top-\bar Y^\top) D_n^{-\sfrac{1}{2}}.$
Therefore $\mathcal{Y}_w$ has the same Marchenko-Pastur limiting spectrum as $\mathcal{Y}_a$. This finishes the proof of Theorem~\ref{MP_white}.  

\begin{lemma}[Convergence of empirical homogenization matrix]
\label{op_norm_var} We have $\|I_p-\diag[A''(\theta)]^{\sfrac{1}{2}}D_n^{-\sfrac{1}{2}}\| \to 0$ a.s.
\end{lemma}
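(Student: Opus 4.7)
Since the matrix inside the operator norm is diagonal, the operator norm equals the maximum of the absolute diagonal entries, so it suffices to show
\[
 \max_{1 \le j \le p} \left| 1 - \sqrt{\frac{A''(\theta(j))}{V(\bar Y(j))}} \right| \longrightarrow 0 \quad \text{a.s.}
\]
The plan is to reduce this to a uniform law of large numbers for $\bar Y(j)$ versus $A'(\theta(j))$. First I will use the assumption $A''(\theta(j)) \ge c > 0$ together with the Lipschitz property of the variance map $V$ to bound
\[
 \bigl| V(\bar Y(j)) - A''(\theta(j)) \bigr| = \bigl| V(\bar Y(j)) - V(A'(\theta(j))) \bigr| \le L\,|\bar Y(j) - A'(\theta(j))|,
\]
where I used $A''(\theta) = V(A'(\theta))$. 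Once $\max_j |\bar Y(j) - A'(\theta(j))|$ is small, the ratio $A''(\theta(j))/V(\bar Y(j))$ is uniformly close to $1$ (using the lower bound $c$ on $A''(\theta(j))$ to keep the denominator bounded away from $0$), and a first-order Taylor bound on $x \mapsto 1 - \sqrt{x}$ near $x=1$ yields the claim.

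The core task is therefore to establish
\[
 \max_{1 \le j \le p} |\bar Y(j) - A'(\theta(j))| \longrightarrow 0 \quad \text{a.s.}
\]
Writing $\bar Y(j) - A'(\theta(j)) = \sqrt{A''(\theta(j))}\, \bar \ep(j)$ with $\bar \ep(j) = n^{-1}\sum_{i=1}^n \ep_i(j)$, and noting $A''(\theta(j))$ is bounded (since $\|\theta\|_\infty \le B$ and $A''$ is continuous), this reduces to showing $\max_j |\bar \ep(j)| \to 0$ a.s. The entries $\ep_i(j)$ are centered, unit-variance, and sub-exponential (Proposition \ref{subexp}) with uniformly bounded sub-exponential norm. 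Bernstein's inequality for sums of independent sub-exponential variables yields, for any $t \in (0,1)$,
\[
 \Pr\bigl(|\bar \ep(j)| > t\bigr) \le 2 \exp(-c_1 n t^2),
\]
for a constant $c_1 > 0$ depending only on the sub-exponential norm. A union bound over $j = 1,\ldots,p$ gives
\[
 \Pr\Bigl(\max_{j} |\bar \ep(j)| > t\Bigr) \le 2 p \exp(-c_1 n t^2).
\]
Choosing $t_n = \sqrt{(2/c_1)\log n}\cdot n^{-1/2}$ and using $p/n \to \gamma$, the right-hand side is summable in $n$, so Borel--Cantelli yields $\max_j |\bar \ep(j)| \le t_n \to 0$ almost surely.

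Combining the two steps, $\max_j |V(\bar Y(j)) - A''(\theta(j))| \to 0$ a.s., and since $A''(\theta(j)) \ge c$ the ratios $A''(\theta(j))/V(\bar Y(j))$ converge uniformly to $1$, giving the desired operator-norm convergence. The main subtlety is the uniform control of the sample means over the growing index $j$; this is handled cleanly by Bernstein plus Borel--Cantelli because exponential family coordinates are sub-exponential and $p$ grows only linearly in $n$.
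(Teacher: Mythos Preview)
Your proposal is correct and follows essentially the same approach as the paper: reduce to the maximum diagonal entry, use the Lipschitz property of $V$ together with the lower bound $A''(\theta(j))\ge c$, and then establish uniform concentration of $\bar Y(j)$ around $A'(\theta(j))$ via an exponential tail bound, a union bound over $j$, and Borel--Cantelli. The only cosmetic difference is that the paper computes the Chernoff/MGF bound for the exponential family directly, whereas you package the same step as Bernstein's inequality for sub-exponential variables (via Proposition~\ref{subexp}); both routes need a uniform-in-$j$ bound on the sub-exponential norm, which is implicit in the paper's standing assumptions.
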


\begin{proof}
Since $|1-x^{\sfrac{1}{2}}| \le |1-x|$ for all $x \ge 0$, it is enough to show that 

$$\max_i \left|1-\frac{A''(\theta(i))}{V(\bar Y(i))}\right| = \max_i \left|\frac{A''(\theta(i))}{V(\bar Y(i))}\right|\left|1-\frac{V(\bar Y(i))}{A''(\theta(i))}\right| \to 0.$$

From the expression on the right, we see that it is enough to show that $\max_i |1-V(\bar Y(i))/A''(\theta(i))| \to 0$ almost surely. To use the Borel-Cantelli lemma, we show how to bound the probability of $V(\bar Y(i))/A''(\theta(i))-1\ge \ep$; the other direction is analogous.  Since we assumed $V$ is Lipschitz continuous with a uniform Lipschitz constant $L$, denoting $\delta = \ep/L$, it is enough to bound the probability that 
$\bar Y(i)- A'(\theta(i))\ge \delta A''(\theta(i))$.
We can write
\begin{align*}
\mathbb{P}\left\{\bar Y(i)- A'(\theta(i))\ge \delta A''(\theta(i))\right\} 
= \mathbb{P}\left\{\frac{\sum_{j=1}^n Y_j(i)}{n} \ge A'(\theta(i)) +\delta A''(\theta(i))]\right\}
\le \E \exp\{t \sum_{j=1}^n Y_j(i)-nt [A'(\theta(i)) +\delta A''(\theta(i))]\}.
\end{align*}

The moment generating function of $Y_j(i)$ is $\exp[A(\theta(i)+t)-A(\theta(i))]$, so the last quantity equals
$\exp[n\{A(\theta(i)+t)-A(\theta(i))-t A'(\theta(i))-t\delta A''(\theta(i))\}].$
For $t$ small enough (depending on $A''$ on a neighborhood of $\theta(i)$), this is less than $\exp[-n\delta A''(\theta(i))/2]$. Since we assumed that $A''(\theta(i))>c$ for some universal constant $c>0$, we get the bound $\exp[-n\delta c/2]$. 

We get a similar upper bound for the probability of deviation in the other direction. We conclude that for some contants $C,c'>0$,  
$\sum_n \Pr(\max_i |1-V(\bar Y(i))/A''(\theta(i))| >\ep) \le C \sum_n n \exp[- c'n ] <\infty.$
hence by the Borel-Cantelli lemma, $\max_i |1-V(\bar Y(i))/A''(\theta(i))| \to 0$ almost surely. This finishes the proof.
\end{proof}

\end{document}